\DeclareFontFamily{OT1}{pzc}{}
\DeclareFontShape{OT1}{pzc}{m}{it}{<-> s * [1.10] pzcmi7t}{}
\DeclareMathAlphabet{\mathpzc}{OT1}{pzc}{m}{it}
\numberwithin{equation}{section}
\newtheorem{theorem}{Theorem}[section]
\newtheorem{lemma}[theorem]{Lemma}
\newtheorem{prop}[theorem]{Proposition}
\newtheorem{rmk}[theorem]{Remark}
\newtheorem{notation}{Notation}
\newcommand\R{{\mathbb R}}
\newcommand\Sphere{{\mathbb S}}
\newcommand{\photo}{P^{\:\!n}}
\renewcommand\d{\partial}
\newcommand{\definedas}{\mathrel{\raise.095ex\hbox{\rm :}\mkern-5.2mu=}}
\newcommand{\asdefined}{\mathrel{=\mkern-5.2mu}\raise.095ex\hbox{\rm :}\;}
\newcommand{\surf}{\Sigma^{n-1}}
\newcommand{\slice}{M^{n}}
\newcommand\beq{\begin{align}}
\newcommand\eeq{\end{align}}
\newcommand\ben{\begin{enumerate}}
\newcommand\een{\end{enumerate}}
\newcommand\bit{\begin{itemize}}
\newcommand\eit{\end{itemize}}
\newcommand{\Ric}{\operatorname{Ric}}
\newcommand{\go}{g_0}
\newcommand{\Deg}{\Delta_g}
\newcommand{\nana}{\nabla^2}
\newcommand{\DD}{\mathrm{D}^2}
\newcommand{\Ricg}{\operatorname{Ric}_g}
\newcommand{\HHH}{\operatorname{H}}
\newcommand{\Hg}{\HHH_g}
\newcommand{\ep}{\varepsilon}
\newcommand{\pra}{\rho} 
\newcommand{\prao}{\rho_0} 
\newcommand{\ffi}{\varphi} 
\newcommand{\Ele}{\Psi} 
\newcommand{\mm}{\mathpzc{m}} 
\newcommand{\qq}{\mathpzc{q}} 
\newcommand{\ADM}{\mu} 
\newcommand{\C}{\mathfrak{S}}
\definecolor{purple}{RGB}{100, 0, 100}
\definecolor{greenish}{HTML}{a5b555}
   \definecolor{blueish}{HTML}{6a99d3}
   \definecolor{redish}{HTML}{a91b26}
    \definecolor{yellowish}{HTML}{DCF763}
\title{Black Hole and Equipotential Photon Surface Uniqueness in $4$-dimensional Asymptotically Flat Electrostatic Electro-Vacuum Spacetimes}
\author{Stefano Borghini\thanks{stefano.borghini@unitn.it, Mathematics Department, University of Trento},
 Carla Cederbaum\thanks{cederbaum@math.uni-tuebingen.de, Mathematics Department, University of T\"ubingen}, 
 Albachiara Cogo\thanks{albachiara.cogo@math.uni-tuebingen.de, Mathematics Department, University of T\"ubingen}}
\begin{document}
\date{}
\maketitle

\begin{abstract}
We study $4$-dimensional asymptotically flat electrostatic electro-vacuum spacetimes with a connected black hole, photon sphere, or equipotential photon surface inner boundary. 
Our analysis, inspired by the potential theory approach by Agostiniani--Mazzieri, allows to give self-contained proofs of known uniqueness theorems of the sub-extremal, extremal and super-extremal Reissner--Nordstr\"om spacetimes. 
We also obtain new results for connected photon spheres and for connected photon surfaces in the extremal case.
Finally, we provide, up to a restriction on the range of their radii, the uniqueness result for connected (both non-degenerate and degenerate) photon surfaces  in the super-extremal case, not yet treated in the literature. 
\end{abstract} 

\section{Introduction}\label{sec:intro}
The celebrated static vacuum black hole uniqueness theorem first proved by Israel~\cite{Israel} states that the only asymptotically flat static vacuum $4$-dimensional spacetime containing a black hole is the Schwarzschild spacetime (of positive mass). It has since been established in various degrees of generality -- fewer technical assumptions, weaker asymptotic assumptions, allowing for multiple black holes, and for higher dimensions -- and using many different strategies of proof. More information and a (then) complete list of references can be found in the reviews \cite{He,RobinsonReview,LivRev}. More recently, new proofs have been given in \cite{Mazz,CCLP,ndimunique}, see also Simon's proof described in \cite[Appendix~A]{Raulot}.

The static vacuum black hole uniqueness theorem has also been extended to other matter models, most notably to electro-vacuum, first by Israel \cite{IsrEl} and then by others in various degrees of generality and using many different strategies of proof. In the electro-vacuum case, one needs to distinguish between non-degenerate and degenerate black holes. In the non-degenerate case or if there is a single degenerate black hole, one recovers a sub-extremal resp.\ extremal Reissner--Nordstr\"om spacetime. In the degenerate case with multiple black holes, one recovers a Majumdar--Papapetrou spacetime. We refer the interested reader to \cite{He,RobinsonReview,LivRev}, also for results for other matter models, and to \cite{bubble,Jahns,Lucietti} for more recent results in electro-vacuum. Related results have been established in the presence of  a cosmological constant, see \cite{Chrusciel_Simon,Lee_Neves,Wang,Ambrozio,BorChrMaz}.

More recently, the static (electro-)vacuum black hole uniqueness theorems have been adapted to other situations, replacing black holes by photon spheres and, more generally, by equipotential photon surfaces. Here, \emph{photon surfaces} are timelike umbilic hypersurfaces in any spacetime; in a static spacetime, a photon surface is  \emph{equipotential} if the static lapse function is (spatially) constant along each of its canonical time-slices, and it is a \emph{photon sphere} if the static lapse function is constant along it. They are \emph{non-degenerate} if $dN\neq0$ along them. In vacuum, uniqueness of photon spheres and non-degenerate equipotential photon surfaces was first shown by Cederbaum \cite{CederPhoto} and by Cederbaum and Galloway \cite{ndimunique,CedGalSurface}. It has since been generalized in the same directions as the black hole uniqueness results, including to electro-vacuum, to multiple components, and to higher dimensions by Cederbaum--Galloway, Jahns, Raulot, Yazadjiev--Lazov, see \cite{cedergal,Raulot,CedGalElec,YazLaz,Jahns,CCLP,CedCoFer,CedJaVi,cogo}. For related works, including results for other matter models, see \cite{CVE,Vol,Yazadjiev,YazaLazov2,Shoom,Rogatko,Yoshino,Yoshino2,VE1,VE2,GibbonsWarnick,Tomi, Tomi2, Koga:2020gqd,MS,PhysRevD.104.124016,Reiris,Beig_Simon}.

In this paper, we give a new proof of the static electro-vacuum black hole uniqueness theorem as well as the corresponding theorems for photon spheres and equipotential photon surfaces. We allow for both non-degenerate and degenerate black holes and for both non-degenerate and degenerate sub-extremal, extremal, and super-extremal photon surfaces (under a technical condition in the last case) but our results are restricted to a single black hole or a single photon sphere or a single equipotential  photon surface and to $4$ spacetime dimensions. We require very weak asymptotic decay assumptions. Our proof is based on a cylindrical ansatz inspired by the work by Agostiniani--Mazzieri \cite{Mazz}, see \Cref{sec:strategy}, and may be of independent interest.

\textbf{Comparison to previous uniqueness results in electro-vacuum:} For non-degenerate black holes, we do not make the technical assumption that the static lapse function regularly foliates the spacetime made in Israel's non-degenerate black hole uniqueness theorem \cite{IsrEl}. Thus, we fully recover the uniqueness results by M\"uller zum Hagen--Robinson--Seifert \cite{MRS} and by Simon \cite{SimEV}, allowing for weaker asymptotic decay. On the other hand, we do not fully recover Masood-ul-Alam's and Ruback's uniqueness results \cite{MuA,Ruback} for a priori multiple non-degenerate black holes as our strategy of proof a priori assumes that there is only a single black hole. Yet, in the single black hole case, we recover their results and weaken their asymptotic decay assumptions. We do not include magnetic charge, so we do not recover the result by Heusler \cite{Heus1}, but see \cite{SimEV} showing a reduction from the full to the purely electric case. 

In the case of degenerate black holes, we do not fully recover either of the uniqueness results by Chru\'sciel \cite{Chrusciel} who obtains uniqueness for multiple degenerate and non-degenerate black holes with charges of the same sign, nor of Heusler \cite{Heus2} who considers multiple degenerate black holes, nor those of Chru\'sciel--Tod \cite{CT} who consider the most general scenario of combinations of multiple degenerate and non-degenerate black holes, again because we only allow for a single black hole. Yet, in the single black hole case, we recover their results and weaken their asymptotic decay assumptions. 

Similarly, for photon spheres, we only recover the multiple non-degenerate black hole and photon sphere results by Cederbaum--Galloway \cite{CedGalElec} in the single photon sphere case, but allow for weaker asymptotic decay; moreover, we do not assume sub-extremality  of the photon sphere. We also recover the results, remove Israel's technical condition, and weaken the asymptotic decay assumptions of the uniqueness result by Yazadjiev--Lazov \cite{YazLaz}, and other than them do not exclude extremal photon spheres. 

Finally, for equipotential photon surfaces, we again only recover the results by Cederbaum--Jahns--Vi\v{c}\'anek-Mart\'inez \cite{CedJaVi} in the single equipotential photon surface case, but allow for weaker asymptotic decay; moreover, we do not assume sub-extremality nor non-degeneracy.  See \Cref{rem:asy} for a discussion of the asymptotic decay assumptions, \Cref{sec:extremalcomparison} for a comparison between the different definitions of sub-extremality, extremality, and super-extremality used in the above works, and \Cref{sub:degenerateequi} for a discussion on the special aspects of the degenerate case.\\[-1ex]

\textbf{This article is structured as follows:} In \Cref{sec:setup}, we will introduce the setup for the static electro-vacuum uniqueness theorems and give the relevant definitions. Our results are stated in \Cref{sec:results}. In \Cref{sec:strategy}, we explain the strategy of our proof. \Cref{sec:cyl} is dedicated to setting up the cylindrical ansatz formulation of the problem. \Cref{sec:BH} contains the proof of our black hole uniqueness theorems. \Cref{sec:photon_surface} is dedicated to proving our photon sphere and equipotential photon surface uniqueness results. We end the paper by a discussion in \Cref{sec:discussion}.\\[-1ex]

\textbf{Acknowledgements.} The work of Carla Cederbaum is supported by the focus program on Geometry at Infinity (Deutsche Forschungsgemeinschaft,  SPP 2026).
Stefano Borghini is a member of Gruppo Nazionale per l’Analisi Matematica, la Probabilit\`a e le loro Applicazioni (GNAMPA), which is part of the Istituto
Nazionale di Alta Matematica (INdAM).

\section{Setup}\label{sec:setup}
A \emph{(standard) static spacetime} is a (necessarily time-oriented) Lorentzian manifold of the form $(\R\times M^{n},-N^{2}dt^{2}+g_{0})$ for some smooth Riemannian manifold $(M^{n},g_{0})$ and some smooth positive \emph{(static) lapse function} or \emph{static potential} $N\colon M\to\R^{+}$. A (standard) static spacetime is called \emph{(standard) electrostatic} if it carries an electric potential $\Psi$ which is invariant under the action of the static Killing vector field $\d _{t}$. A (standard) electrostatic spacetime is called \emph{electro-vacuum} if it satisfies the \emph{source-free Einstein--Maxwell equations} 
\begin{align*}
\mathfrak{Ric}_{\alpha\beta}-\frac 1 2 \mathfrak R \, \mathfrak g_{\alpha \beta}&=2\left(F_{\alpha\gamma}F_\beta^{\phantom{\beta}\gamma}-\frac 1 {4} \mathfrak g_{\alpha\beta}F^{\mu\nu}F_{\mu\nu}\right),\\
\left(\operatorname{\mathfrak{div}}F\right)_{\alpha}&=0
\end{align*}
 on $\R\times M$ with respect to the Maxwell tensor\footnote{In higher dimensions, the assumed structure of the Maxwell tensor encapsulates the additional assumption that the magnetic field vanishes; however in the $4$ spacetime dimensions we are working in, one can assume this structure without loss of generality by the so-called ``electromagnetic duality'', see e.g.\cite{bubble,Jahns}.} $F=d\Ele\wedge dt$. Here $\mathfrak{Ric}$ and $\mathfrak{R}$ denote the Ricci and scalar curvature of, and $\operatorname{\mathfrak{div}}$ denotes the divergence with respect to the spacetime metric $\mathfrak{g}=-N^{2}dt^{2}+g_{0}$.
 
For a $4$-dimensional (standard) electrostatic spacetime $(\R\times M^{3},-N^{2}dt^{2}+g_{0})$, it can be seen by a straightforward computation that the source-free Einstein--Maxwell equations reduce to the \emph{electrostatic electro-vacuum equations}
\begin{align}\label{eq:pb}
\begin{dcases}
N\operatorname{Ric}&=\mathrm{D}^2 N-\frac{2}{N}\,d\Ele\otimes d\Ele+\frac{1}{N}\vert \mathrm{D}\Ele\vert^2 g_0,\\
\Delta N&=\frac{1}{N}\,|\mathrm{D}\Ele|^2,\\
\Delta\Ele&=\frac{1}{N}\,\langle\mathrm{D}N\,|\,\mathrm{D}\Ele\rangle,
\end{dcases}
\end{align}
on $M$, where $\mathrm{D}$, $\mathrm{D}^{2}$, $\Delta$, and $\operatorname{Ric}$ denote the covariant derivative, covariant Hessian, Laplacian, and Ricci curvature with respect to $g_{0}$, respectively. Slightly abusing notation by treating $\Ele$ as a function $\Ele\colon M\to\R$, we will call tuples $(M^{3},g_{0},N,\Ele)$ as described above \emph{electrostatic systems}, and \emph{electro-vacuum} provided they solve \eqref{eq:pb}. 

As we are interested in boundary value problems for static spacetimes solving the source-free Einstein--Maxwell equations, we need to briefly discuss how to include a boundary in the above considerations. In fact, we will treat two different cases of boundary conditions which each require somewhat different approaches: on the one hand, we want to allow for spacetime boundary which is a Killing horizon; on the other hand, we want to treat boundaries which arise as equipotential photon surfaces. Before we address how to include such boundaries in the above considerations, let us give the relevant definitions.

Let $(\mathfrak{L}^{n+1},\mathfrak{g})$ be a smooth Lorentzian manifold with boundary $\partial\mathfrak{L}$ such that the interior $(\mathfrak{L}\setminus\d \mathfrak{L},\mathfrak{g})$ is (isometric to) a standard static spacetime $(\R\times M^{n},-N^{2}dt^{2}+g_{0})$ and $\d \mathfrak{L}$ is a null hypersurface. Then $\d \mathfrak{L}$ is called a \emph{(static) Killing horizon} if the static Killing vector field $\d _{t}$ smoothly extends to $\d \mathfrak{L}$ and the positive lapse function $N$ of $\mathfrak{L}\setminus\d \mathfrak{L}$ smoothly extends to zero on $\d \mathfrak{L}$. Deviating from but equivalent to the standard definition, we say that a Killing horizon $\d \mathfrak{L}$ is \emph{non-degenerate} if $dN\neq0$ along $\d \mathfrak{L}$, otherwise it is \emph{degenerate}, see \cite{KW}. For this distinction it is relevant that $\vert dN\vert_{\mathfrak{g}}$ is constant on each connected component of a Killing horizon (see e.g.~\cite[Theorem~7.4]{Heusler_Blackholes}). We will slightly abuse notation and call a Lorentzian manifold $(\mathfrak{L}^{n+1},\mathfrak{g})$ with (degenerate or non-degenerate) Killing horizon boundary $\d \mathfrak{L}$ \emph{standard static} as well, and \emph{standard electrostatic} if the electric field $\Psi$ smoothly extends to $\d \mathfrak{L}$.

If $(\mathfrak{L}^{n+1},\mathfrak{g})$ is a standard (electro-)static spacetime with non-degenerate Killing horizon boundary, the \emph{spatial manifold} $M^{n}$ is well-known to possess a smooth boundary $\d  M$ such that $\d \mathfrak{L}$ is diffeomorphic to $\R\times\d  M$. Furthermore, the metric $g_{0}$, the lapse function $N$, and the electric potential $\Psi$ all smoothly extend to $\d  M$, with $N=0$ there, hence $(M^{n},g_{0},N,\Psi)$ represents an \emph{electrostatic system with boundary}. We will call $\d  M$ a \emph{time-slice} of the non-degenerate Killing horizon $\d \mathfrak{L}$ and a \emph{(non-degenerate) horizon}. From non-degeneracy, we also know that $dN\neq0$ holds on $\d  M$ and in fact that $\vert dN\vert>0$ is constant on connected components of $\partial M$. Moreover, the electro-vacuum equations \eqref{eq:pb} extend to $\d  M$ and hence by tracing the first equation in \eqref{eq:pb} one finds that $2\vert {\rm D}\Psi\vert^{2}=N^{2}\operatorname{R}=0$ holds on $\d  M$, where $\operatorname{R}$ denotes the scalar curvature of $g_{0}$. In particular, $\Psi$ is constant along connected components of non-degenerate Killing horizons. Moreover, ${\rm D}^{2}N=0$ must hold on $\d  M$ by \eqref{eq:pb}. This implies that $\d  M$ is totally geodesic in $(M,g_{0})$.

If, on the other hand, $(\mathfrak{L}^{n+1},\mathfrak{g})$ is a standard (electro-)static spacetime with degenerate Killing horizon boundary, one still obtains that the spatial manifold $M^{n}$ has a smooth boundary $\d  M$ to which $N$ and $\Psi$ extend smoothly, with $N=0$ and $dN=0$ on $\d  M$. However, the metric $g_0$ does not extend smoothly at the boundary: in fact, it is known from~\cite[Lemma~3]{KW} that geodesics do not reach the degenerate horizon in finite time. This means that a \emph{(degenerate) horizon} $\partial M$ should be thought of as a (cylindrical) end of the manifold and not as a proper boundary. The peculiar case of the degenerate horizon is treated more in details in \Cref{sub:degenerate}.

Having settled the case of standard electrostatic spacetimes with Killing horizon boundary, let us now move on to consider equipotential photon surface and photon sphere inner boundaries. Again, we begin by giving the relevant definitions: first, a smooth timelike (connected) hypersurface $\photo\hookrightarrow\mathfrak{L}^{n+1}$ in a smooth Lorentzian manifold $(\mathfrak{L}^{n+1},\mathfrak{g})$ is called a \emph{photon surface} if every null geodesic initially tangent to $\photo$ remains tangent to $\photo$ as long as it exists or in other words if $\photo$ is \emph{null totally geodesic}. This definition readily extends if $(\mathfrak{L}^{n+1},\mathfrak{g})$ is a smooth Lorentzian manifold with boundary and $\photo\subset\partial\mathfrak{L}$. It will be useful to know that being a null totally geodesic timelike hypersurface is equivalent to being a totally umbilic timelike hypersurface:
\begin{prop}[\!\!\!\!{\cite[Theorem II.1]{CVE}, \cite[Proposition 1]{Vol}}]\label{prop:umbilic}
Let $(\mathfrak{L}^{n+1},\mathfrak{g})$ be a Lorentzian manifold and $\photo\hookrightarrow\mathfrak{L}^{n+1}$ an embedded, timelike hypersurface. Then $\photo$ is a photon surface if and only if it is \emph{totally umbilic}, that is, if and only if its second fundamental form is pure trace.
\end{prop}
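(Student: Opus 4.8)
The plan is to prove both implications from the Gauss formula, by comparing the ambient null geodesics of $\mathfrak L$ with the intrinsic geodesics of $\photo$. Since $\photo$ is timelike, its first fundamental form $h$ (the restriction of $\mathfrak g$ to $T\photo$) is Lorentzian and there is a unit spacelike normal $\nu$; I write $\mathrm{II}$ for the scalar second fundamental form with respect to $\nu$ and $\nabla$, $\nabla^{\photo}$ for the Levi-Civita connections of $\mathfrak g$ and of $h$, so that $\nabla_X Y = \nabla^{\photo}_X Y + \mathrm{II}(X,Y)\,\nu$ for all $X,Y$ tangent to $\photo$. Recall that ``$\mathrm{II}$ is pure trace'' means $\mathrm{II} = \tfrac1n (\tr_h \mathrm{II})\, h$, a condition insensitive to the sign of $\nu$.

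For ``totally umbilic $\Rightarrow$ photon surface'', fix $p \in \photo$ and a null vector $v \in T_p\photo$, and let $\gamma$ be the maximal geodesic of $(\photo,h)$ with $\gamma(0)=p$, $\dot\gamma(0)=v$. Then $\dot\gamma$ is $\mathfrak g$-null along $\gamma$ because $h$ is the restriction of $\mathfrak g$, and by the Gauss formula together with umbilicity $\nabla_{\dot\gamma}\dot\gamma = \mathrm{II}(\dot\gamma,\dot\gamma)\,\nu = \lambda\, h(\dot\gamma,\dot\gamma)\,\nu = 0$, where $\mathrm{II}=\lambda h$ and $h(\dot\gamma,\dot\gamma)=0$; hence $\gamma$ is an $\mathfrak g$-geodesic. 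By uniqueness of geodesics in $\mathfrak L$ with given initial position and velocity, every ambient null geodesic tangent to $\photo$ at a point agrees near that point with such a $\gamma$ and thus stays in $\photo$ there. A short open--closed argument along the maximal interval of the ambient geodesic then upgrades this to ``remains tangent to $\photo$ as long as it exists'', so $\photo$ is a photon surface.

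For the converse, fix $p \in \photo$ and a $\mathfrak g$-null vector $v \in T_p\photo$ (such $v$ exist since $T_p\photo$ is Lorentzian of dimension $n \ge 2$). By assumption the ambient null geodesic $\sigma$ with $\sigma(0)=p$, $\dot\sigma(0)=v$ stays in $\photo$, so splitting $0 = \nabla_{\dot\sigma}\dot\sigma = \nabla^{\photo}_{\dot\sigma}\dot\sigma + \mathrm{II}(\dot\sigma,\dot\sigma)\,\nu$ into its tangential and normal parts yields $\mathrm{II}_p(v,v) = 0$. Thus $\mathrm{II}_p$ is a symmetric bilinear form on the Lorentzian vector space $(T_p\photo,h_p)$ that vanishes on the entire null cone. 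Evaluating it on the null vectors $e_0 \pm e_i$ and $e_0 + \tfrac{1}{\sqrt2}(e_i+e_j)$ built from an $h_p$-orthonormal basis $\{e_0,\dots,e_{n-1}\}$ (with $e_0$ timelike) shows $\mathrm{II}_p(e_0,e_i)=0$, $\mathrm{II}_p(e_i,e_i)=-\mathrm{II}_p(e_0,e_0)$, and $\mathrm{II}_p(e_i,e_j)=0$ for $i \ne j$, i.e.\ $\mathrm{II}_p = \lambda(p)\, h_p$ with $\lambda(p) = -\mathrm{II}_p(e_0,e_0)$. Since $\lambda = \tfrac1n \tr_h \mathrm{II}$ is smooth, $\photo$ is totally umbilic.

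The only part that is not pure formalism is the linear-algebra lemma that a symmetric bilinear form annihilating every null vector of a Lorentzian space is a multiple of the metric; I expect this (and, in the forward direction, the open--closed argument showing that an ambient null geodesic never leaves $\photo$ on its whole maximal interval rather than only locally) to be the points requiring care, everything else reducing to the Gauss formula and to existence and uniqueness of geodesics.
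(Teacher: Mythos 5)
The paper does not actually prove this proposition: it is quoted from \cite[Theorem II.1]{CVE} and \cite[Proposition 1]{Vol}, and your argument is precisely the standard ``purely algebraic'' proof given there — the Gauss formula reduces both implications to the pointwise fact that a symmetric bilinear form on a Lorentzian vector space vanishing on the entire null cone is a multiple of the metric, which your basis computation verifies correctly. Two minor points worth a word: in the forward direction you implicitly use that $h(\dot\gamma,\dot\gamma)$ is constant along an intrinsic geodesic (standard, since $\tfrac{d}{dt}h(\dot\gamma,\dot\gamma)=2h(\nabla^{\photo}_{\dot\gamma}\dot\gamma,\dot\gamma)=0$), and the concluding open--closed globalization only succeeds if $\photo$ is closed as a subset of $\mathfrak{L}$ — otherwise one must read ``as long as it exists'' as ``as long as the geodesic can be continued within $\photo$'', which is the local formulation used in the cited references and suffices for the equivalence.
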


It is easy to see from its purely algebraic proof that \Cref{prop:umbilic} readily applies in case $\photo\subset\partial\mathfrak{L}$. In the context of standard electrostatic spacetimes, we will use the following definitions of photon spheres and equipotential photon surfaces from \cite{CedJaVi}, see also \cite{CDiss,CederPhoto,CVE,CedGalSurface,YazLaz,CedGalElec,Jahns}:  a photon surface $\photo\hookrightarrow\mathfrak{L}^{n+1}$ in an electrostatic spacetime $(\mathfrak{L}^{n+1},\mathfrak{g},\Psi)$  is called a \emph{photon sphere} if $\photo = \R \times\surf$ for some smooth hypersurface $\surf\hookrightarrow \slice$ and if the lapse function $N$, the electric potential $\Psi$ and the length of its derivative $\vert d\Psi\vert_{\mathfrak{g}}$ are constant\footnote{In fact, the defining assumption that $\vert d\Psi\vert_{\mathfrak{g}}$ must be constant on each canonical time-slice of an equipotential photon surface or fully constant on a photon sphere is automatically satisfied in the connected setting we are studying. This can be seen by realizing that it is neither used in the proof of \Cref{pro:simplification} nor in the proof of \cite[Proposition 5.5]{CedJaVi} by \cite[Remark 5.11]{CedJaVi} (which we state as \Cref{thm: ps_properties} below).\label{foot:redundant}} along $P^n$. More generally, a photon surface $\photo\hookrightarrow\mathfrak{L}^{n+1}$ in an electrostatic spacetime $(\mathfrak{L}^{n+1},\mathfrak{g},\Psi)$  is called \emph{equipotential} if the static lapse $N$, the electric potential $\Psi$ and the length of its derivative $\vert d\Psi\vert_{\mathfrak{g}}$ are constant\textsuperscript{\ref{foot:redundant}} on each \emph{(canonical) time-slice} $\Sigma^{n-1}(t_{0})\definedas P^n\,\cap\,\{t=t_{0}\}$. Just as for Killing horizons, an equipotential photon surface $\photo$ (and in particular a photon sphere) is called \emph{non-degenerate} if $dN\neq0$ on $\photo$, otherwise it is called \emph{degenerate}. Photon spheres are always non-degenerate, see \Cref{sub:degenerateequi}. 

Photon spheres can naturally arise as boundary components of standard electrostatic spacetimes as they are warped products; in order to allow for general equipotential photon surface inner boundary, following \cite[Remark 5.18]{CedJaVi} we will abuse standard terminology and also call a spacetime $(\mathfrak{L}^{n+1},\mathfrak{g})$ with timelike boundary $\partial\mathfrak{L}$ \emph{standard static} if it is the closure of an open subset of a standard static spacetime. In this case, the \emph{(canonical) time-slices} $\{t=t_{0}\}$ of a standard electrostatic spacetime $(\mathfrak{L},\mathfrak{g},\Psi)$ are electrostatic systems $(M(t_{0}),g_{0},N,\Psi)$ with boundary $\partial M(t_{0})$ on which $N_{0}\definedas N\vert_{\partial M(t_{0})}>0$ and $\Psi_{0}\definedas\Psi\vert_{\partial M(t_{0})}\in\R$.

The following fact\footnote{adjusted to our notation and reduced to what is relevant to us.} for non-degenerate equipotential photon surfaces from \cite[Proposition 5.5]{CedJaVi} will be relevant for our analysis.
\begin{prop}[\!\!{\cite[Proposition 5.5]{CedJaVi}}]\label{thm: ps_properties}
Let $(\mathfrak{L}^{n+1},\mathfrak{g},\Psi)$ be a standard electrostatic electro-vacuum spacetime, $\photo\hookrightarrow\mathfrak{L}^{n+1}$ a non-degenerate equipotential photon surface (with $\photo\subseteq\partial\mathfrak{L}$ permitted). Then each canonical time-slice $\surf(t_{0})$ of $\photo$ is totally umbilic in $\slice(t_{0})$ and has constant mean curvature $\HHH_{\nu}$ and constant normal derivative $\nu(N)$ with respect to any unit normal $\nu$ to $\surf(t_{0})$ in $(\slice(t_{0}),g_{0})$, and constant scalar curvature $\operatorname{R}^{\surf(t_{0})}$ satisfying the \emph{photon surface identity}
\begin{align}\label{foundtheconstant}
\operatorname{R}^{\surf(t_{0})}&= \frac {2\nu(\Psi)^2}{N^2} + \frac {2\HHH_{\nu}\nu(N)}{N} +\frac{(n-2)\HHH_{\nu}^{2}}{(n-1)}
\end{align}
on $\surf(t_{0})$.
\end{prop}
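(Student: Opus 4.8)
The plan is to push the umbilicity of $\photo$ down to the spatial slices and then read off every asserted property from the Gauss and Codazzi equations combined with \eqref{eq:pb}. By \Cref{prop:umbilic}, $\photo$ is totally umbilic in $(\mathfrak{L},\mathfrak{g})$; write its second fundamental form with respect to a unit (necessarily spacelike, since $\photo$ is timelike) normal $\eta$ as $\mathfrak{h}=\tfrac{\HHH^{\mathfrak{g}}}{n}\,\mathfrak{g}\vert_{\photo}$, where $\HHH^{\mathfrak{g}}$ is the a priori non-constant spacetime mean curvature. The first step is the structural one: along each canonical time-slice $\surf(t_{0})$ the static Killing field $\d_{t}$ is tangent to $\photo$, equivalently the unit normal $\eta$ of $\photo$ agrees up to sign with the unit normal $\nu$ of $\surf(t_{0})$ inside the slice $(\slice(t_{0}),g_{0})$, which is totally geodesic in $\mathfrak{L}$ as the fixed-point set of time-reflection. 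This is exactly where the equipotential hypothesis --- spatial constancy of $N$ on $\surf(t_{0})$ --- together with non-degeneracy $dN\neq0$ enters in an essential way; I would take this identification from the structural analysis of equipotential photon surfaces in \cite{CedGalSurface,CedJaVi}. I expect this identification to be the main obstacle; everything afterwards is bookkeeping.

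Granting $\eta=\pm\nu$, I would transfer umbilicity and record two identities. Since $\slice(t_{0})$ is totally geodesic in $\mathfrak{L}$, for $X,Y$ tangent to $\surf(t_{0})$ one has $\mathfrak{D}_{X}Y=\mathrm{D}_{X}Y$ and hence $\mathfrak{g}(\mathfrak{D}_{X}Y,\eta)=g_{0}(\mathrm{D}_{X}Y,\nu)$, so the spatial second fundamental form of $\surf(t_{0})$ in $\slice(t_{0})$ equals (up to sign) $\mathfrak{h}\vert_{T\surf(t_{0})}=\tfrac{\HHH^{\mathfrak{g}}}{n}\,g_{0}\vert_{\surf(t_{0})}$: thus $\surf(t_{0})$ is totally umbilic in $\slice(t_{0})$, with $\HHH_{\nu}$ a fixed multiple of $\HHH^{\mathfrak{g}}$. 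Feeding the direction $\d_{t}$ (tangent to $\photo$) into the umbilicity relation and using the standard identity $\mathfrak{D}_{\d_{t}}\d_{t}=N\,\grad_{g_{0}}N$ gives $\pm N\,\nu(N)=\mathfrak{h}(\d_{t},\d_{t})=\tfrac{\HHH^{\mathfrak{g}}}{n}\,\mathfrak{g}(\d_{t},\d_{t})=-\tfrac{\HHH^{\mathfrak{g}}}{n}\,N^{2}$, i.e.\ $\nu(N)$ is a fixed multiple of $N\,\HHH_{\nu}$.

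For the constancy statements I would invoke the contracted Codazzi equation for the umbilic hypersurface $\photo\subset\mathfrak{L}$: for $Y$ tangent to $\photo$, $\tfrac{n-1}{n}\,Y(\HHH^{\mathfrak{g}})=\pm\,\mathfrak{Ric}(\eta,Y)$. Now use the source-free Einstein--Maxwell equations with $F=d\Ele\wedge dt$; in $4$ dimensions their right-hand side (the Maxwell stress tensor) is trace-free, so $\mathfrak{Ric}_{\alpha\beta}=2\big(F_{\alpha\gamma}F_{\beta}{}^{\gamma}-\tfrac14\mathfrak{g}_{\alpha\beta}F_{\mu\nu}F^{\mu\nu}\big)$. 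For $Y$ tangent to $\surf(t_{0})$ both $dt(Y)=0$ and $d\Ele(Y)=Y(\Ele)=0$ (the latter by spatial constancy of $\Ele$), hence the contraction $\iota_{Y}F$ vanishes, and therefore $\mathfrak{Ric}(\eta,Y)=0$ --- the first term because $\iota_{Y}F=0$, the second because $\eta\perp T\photo\ni Y$. So $Y(\HHH^{\mathfrak{g}})=0$ for all $Y$ tangent to $\surf(t_{0})$, whence $\HHH^{\mathfrak{g}}$ --- and with it $\HHH_{\nu}$ --- is constant on $\surf(t_{0})$; since $N$ is constant on $\surf(t_{0})$, the identity of the previous paragraph then forces $\nu(N)$ to be constant on $\surf(t_{0})$ as well. (Constancy of $\nu(N)$ can alternatively be obtained from the contracted Codazzi equation on $\surf(t_{0})\subset\slice(t_{0})$ together with $\Ric(\nu,\cdot)\vert_{T\surf(t_{0})}=\tfrac1N\,\mathrm{D}^{2}N(\nu,\cdot)\vert_{T\surf(t_{0})}=\tfrac1N\,d(\nu(N))$, using \eqref{eq:pb} and the spatial constancy of $N$ and $\Ele$.)

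Finally I would derive \eqref{foundtheconstant} and deduce constancy of $\Scal^{\surf(t_{0})}$. Tracing the first equation in \eqref{eq:pb} and substituting the second gives $\Scal=\tfrac{2}{N^{2}}\,\vert\mathrm{D}\Ele\vert^{2}$, which on $\surf(t_{0})$ equals $\tfrac{2}{N^{2}}\,\nu(\Ele)^{2}$ since $\Ele$ is spatially constant there; from the first equation, $\Ric(\nu,\nu)=\tfrac1N\,\mathrm{D}^{2}N(\nu,\nu)-\tfrac{1}{N^{2}}\,\nu(\Ele)^{2}$, and decomposing the Laplacian along $\surf(t_{0})$, $\Delta N=\Delta_{\surf(t_{0})}N+\mathrm{D}^{2}N(\nu,\nu)+\HHH_{\nu}\,\nu(N)$, together with $\Delta N=\tfrac1N\,\vert\mathrm{D}\Ele\vert^{2}$ and $\Delta_{\surf(t_{0})}N=0$ yields $\Ric(\nu,\nu)=-\tfrac{\HHH_{\nu}\,\nu(N)}{N}$ on $\surf(t_{0})$. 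Plugging $\Scal$, $\Ric(\nu,\nu)$ and, by umbilicity, $\vert\mathrm{II}\vert^{2}=\tfrac{\HHH_{\nu}^{2}}{n-1}$ into the contracted Gauss equation $\Scal^{\surf(t_{0})}=\Scal-2\Ric(\nu,\nu)+\HHH_{\nu}^{2}-\vert\mathrm{II}\vert^{2}$ produces precisely \eqref{foundtheconstant}. Since $N$, $\nu(\Ele)$ (by constancy of $\vert d\Ele\vert_{\mathfrak{g}}$ on $\surf(t_{0})$), $\HHH_{\nu}$ and $\nu(N)$ are all constant on $\surf(t_{0})$, the right-hand side of \eqref{foundtheconstant} is constant, hence so is $\Scal^{\surf(t_{0})}$. (All computations are carried out in the $4$-dimensional setting $n=3$ of \eqref{eq:pb}; the general-$n$ statement is entirely analogous with the corresponding electro-vacuum equations.)
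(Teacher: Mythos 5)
The paper does not actually prove this proposition: it is imported verbatim from \cite[Proposition 5.5]{CedJaVi} (see also the surrounding discussion and \Cref{sub:degenerateequi}), so the only comparison available is with the cited proof. Measured against that, your attempt has a genuine gap, and it sits exactly where you predicted it would: the structural identification in your first paragraph is false. For a general non-degenerate equipotential photon surface, $\d_t$ is \emph{not} tangent to $\photo$ and the spacetime unit normal $\eta$ is \emph{not} $\pm\nu$; rather $\eta=\cosh\beta\,\nu+\sinh\beta\,\tfrac{\d_t}{N}$ for a nontrivial ``tilt'' $\beta$. (In Reissner--Nordstr\"om the spherically symmetric equipotential photon surfaces are graphs $r=r(t)$ with $\dot r\neq0$ unless they are photon spheres, and the paper records that such surfaces exist through every radius.) Your second paragraph makes this fatal: feeding $\d_t$ into $\mathfrak{h}$ yields $\pm N\nu(N)=-\tfrac{\HHH^{\mathfrak{g}}}{n}N^2$ and, with $b=1$, $\HHH_\nu=\tfrac{n-1}{n}\HHH^{\mathfrak{g}}$, which for $n=3$ forces $2\nu(N)=\mp N\HHH_\nu$, i.e.\ $c=\mp1$ in the notation of \eqref{normalLapse}. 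By \cite[Lemma 5.9]{CedJaVi} (quoted in \Cref{sec:setup}), $c=1$ characterizes photon spheres, so your argument would show that every non-degenerate equipotential photon surface is (up to orientation) a photon sphere --- contradicting the model cases and trivializing \Cref{thm:photon_surface}. Deferring the identification to \cite{CedGalSurface,CedJaVi} does not help, because those references establish no such identification; what they establish instead is precisely the evolution analysis (mixed time--space components of umbilicity and the $t$-component of Codazzi) that your proof omits.

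To be fair about what survives: umbilicity of $\surf(t_0)$ in $\slice(t_0)$ does transfer even with a tilt, since total geodesy of the slice gives $\operatorname{h}(X,Y)=\tfrac{1}{b}\,\mathfrak{h}(X,Y)$ with $b=\mathfrak{g}(\eta,\nu)$, and pure-traceness is preserved; your observation that $\mathfrak{Ric}(\eta,Y)=0$ for $Y\in T\surf(t_0)$ via the structure of the Maxwell stress tensor is correct and gives constancy of $\HHH^{\mathfrak{g}}$ along the slice; and your final Gauss-equation computation of \eqref{foundtheconstant} is a correct \emph{pointwise} identity once slice umbilicity, $\Delta_{\surf(t_0)}N=0$, and $\vert\mathrm{D}\Psi\vert^2=\nu(\Psi)^2$ are in hand. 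What is not established is the constancy of $\HHH_\nu$ and $\nu(N)$ --- the part of the statement the paper actually leans on (e.g.\ to define $c$). Constancy of $\HHH^{\mathfrak{g}}$ does not pass to $\HHH_\nu$ without knowing $b$ is constant on the slice, and your alternative intrinsic route (contracted Codazzi on $\surf(t_0)\subset\slice(t_0)$ plus $\Ric(\nu,\cdot)\vert_{T\surf(t_0)}=\tfrac1N\,d(\nu(N))$) yields only the single relation $\tfrac{n-2}{n-1}\,d\HHH_\nu=\tfrac1N\,d\big(\nu(N)\big)$ on $T\surf(t_0)$, which does not force either differential to vanish. Closing this requires the additional equations coming from differentiating in the $t$-direction along $\photo$, which is the actual content of \cite[Proposition 5.5]{CedJaVi}.
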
 
Note that \eqref{foundtheconstant} is invariant under the choice of unit normal $\nu$. Note also that $\nu(\Psi)^{2}=\vert{d\Psi}\vert_{\mathfrak{g}}^{2}=\vert{\rm D}\Psi\vert^{2}$ and $\nu(N)^{2}=\vert dN\vert_{\mathfrak{g}}^{2}=\vert{\rm D}N\vert^{2}$ as $\Psi$ and $N$ are assumed to be invariant under the action of the static Killing vector field and because both $N$ and $\Psi$ are assumed to be constant along each canonical time-slice $\surf(t_{0})$. 

As in \cite[Remark 5.7]{CedJaVi}, in order to make our technique effective, it will be convenient to choose the unit normal $\nu$ pointing to the asymptotically flat end of $(M(t_{0}),g_{0},N,\Psi)$ (see below) and introduce the quantity
\begin{align}
 c &\definedas \frac{2\nu(N)}{N \HHH_{\nu}},
 \end{align}
provided that $\HHH_{\nu}\neq0$, as this allows us to rewrite \eqref{foundtheconstant} as the two identities
\begin{align}\label{scalarMeanCurv}
 \operatorname{R}^{\Sigma}&= \frac{2\,|\mathrm{D}\Ele|^2}{N^2} + \left(c + \frac{1}{2} \right)\HHH_{\nu}^2,\\\label{normalLapse}
    2\nu(N) &=  cN\HHH_{\nu}.
\end{align} 
It is shown in \cite[Proposition~2.4]{CedGalElec} and \cite[Corollary 5.6]{CedJaVi} that $c=1$ holds on time-slices of photon spheres. Furthermore, \cite[Lemma 5.9]{CedJaVi} tells us that if $c=1$ holds on one time-slice $\surf(t_{0})$ of a non-degenerate equipotential photon surface then $\R\times \surf(t_{0})$ is a photon sphere. Concerning the condition that $\HHH_{\nu}\neq0$, it follows from the proof of \cite[Theorem 5.22]{CedJaVi} applied to the case of a connected non-degenerate equipotential photon surface boundary $\partial M$ that $\HHH_{\nu}>0$ holds on $\partial M$. It follows that $c$ is well-defined and that it satisfies $c > 0$ if and only if $\nu(N) > 0$ (for which, in words, we say that the photon surface is \emph{outward directed}). In \Cref{sec:photon_surface}, we independently re-prove this result as part of our uniqueness proof.

Summarizing the above considerations, in all cases we are interested in, we will treat electrostatic, electro-vacuum systems $(M,g_{0},N,\Psi)$ or in other words Riemannian manifolds $(M,g_{0})$ carrying a positive lapse function $N$ and an electric potential~$\Psi$. The manifold $M$ will always have a boundary $\partial M$ to which $N$ and $\Psi$ smoothly extend as constants $N\vert_{\partial M}=N_{0}\geq0$, $\Psi\vert_{\partial M}=\Psi_{0}\in\R$. In the non-degenerate horizon as well as in the equipotential photon surface case, the metric $g_{0}$ will also smoothly extend to the boundary, making $(M,g_{0},N,\Psi)$ an \emph{electrostatic, electro-vacuum system with boundary}. In all our results, we will always assume that the boundary $\d M$ is nonempty and {\em connected}. This hypothesis allows to simplify the setup as will be discussed in detail in \Cref{sub:simplification}. In our analysis, we will focus on electrostatic systems $(M^{3},g_0,N,\Ele)$ solving \eqref{eq:pb} which are {\em asymptotically flat}: that is, we assume that there is a compact set $K\supset\d M$ such that $M\setminus K$ is diffeomorphic to $\R^3\setminus B$ for some closed ball $B$ and such that, with respect to the coordinates $(x^{i})$ induced by this chart, one has
\begin{align}
\begin{split}\label{eq:asymptotics}
N&= 1-\frac{\ADM}{|x|}+o_2(|x|^{-1}),\\
 \Ele&=o(1),\\
 {(\go)}_{ij}&= \left(\delta_{ij}+\eta_{ij}\right)dx^i\otimes dx^j,\\
 \eta_{ij}&=o_1(1)   
\end{split}
\end{align}
as $|x|\to\infty$, where $\ADM\in\mathbb{R}$ is a constant and called the \emph{mass (parameter)}. When $\mu=0$, we ask in addition that
\begin{align}\label{eq:asymmu0}
\Psi&=\frac{\kappa}{\vert x\vert}+o_{2}(\vert x\vert^{-1})
\end{align}
as $|x|\to\infty$ for some constant $\kappa\in\R$ called the \emph{charge (parameter)}. This extra condition will be important to get enough information on the asymptotic behavior of the level sets of the lapse function $N$ in \Cref{sub:neg_mass}.

Last but not least, we note that this definition of asymptotic flatness also applies when $\partial M$ is a time-slice of a degenerate Killing horizon as by the above discussion $\partial M$ then has a pre-compact tubular neighborhood.

\begin{rmk}[Asymptotic decay]\label{rem:asy}
In the standard definition of asymptotic flatness, one usually requires stronger asymptotic conditions, namely that $\eta_{ij}=O_2(|x|^{-\frac{1}{2}-\varepsilon})$ for some $\varepsilon>0$ and that the scalar curvature $\operatorname{R}$ of $\go$ is integrable. Under these additional assumptions, it can be seen by a standard computation that $\ADM$ coincides with the ADM-mass. However, for our purposes, the asymptotic assumptions in~\eqref{eq:asymptotics} (combined with \eqref{eq:asymmu0} when $\mu=0$) will be enough. Our decay assumptions are very weak when compared with the other electrostatic electro-vacuum uniqueness results discussed in \Cref{sec:intro}; these assume that the metric $g_{0}$ is asymptotically flat with $\varepsilon=\frac{1}{2}$ (\cite{YazLaz}) or asymptotic to the Reissner--Nordstr\"om metric of mass $\mu$ and charge $q$ (\cite{CedGalElec,Jahns,CedJaVi}) and that $\Psi$ is asymptotic to the Reissner--Nordstr\"om potential $\frac{q}{\vert x\vert}$, see \eqref{eq:RN}. The same is true for the static vacuum uniqueness results discussed in \Cref{sec:intro} with the notable exceptions of \cite{Mazz,CedCoFer,CCLP} who make the same assumption on the decay of $N$ as we make. Moreover, in \cite{Mazz} (and consequently in its application in \cite{CedCoFer}), for $n=3$ spatial dimensions, it is assumed that $\eta_{ij}=o_{2}(\vert x\vert^{-\frac{1}{2}})$. On the other hand, \cite{CCLP} makes the same asymptotic assumptions we make. Note however that it is conceivable that, when combined with \eqref{eq:pb}, our asymptotic decay can be boot-strapped to stronger decay assumptions as for example in~\cite{KM}.
\end{rmk}

The following remark will be useful for our strategy of proof, see \Cref{sec:strategy}.
\begin{rmk}[Geodesic and metric completeness]\label{rmk:complete}
For later convenience, we note that asymptotically flat electrostatic systems, with or without boundary, are necessarily metrically complete and geodesically complete (up to the boundary) with at most finitely many boundary components which are necessarily all closed, see for example~\cite[Appendix]{CGM} (where stronger asymptotic flatness assumptions are made but not used in the proof). This continues to apply in the degenerate horizon case as cylindrical ends are also geodesically complete up to the boundary by similar reasoning. 

Note however that while $\partial M$ has a pre-compact tubular neighborhood, the topology on this neighborhood does not coincide with the topology induced by the Riemannian distance function $d_{g_{0}}$ induced by $g_{0}$ in the degenerate horizon case as this tubular neighborhood/cylindrical end is necessarily unbounded with respect to $d_{g_{0}}$.
\end{rmk}

The most important asymptotically flat, electrostatic, electro-vacuum spacetimes in view of this paper are the \emph{Reissner--Nordstr\"om spacetimes} characterized by
\begin{align}
\begin{split}\label{eq:RN}
M&=(r_0,\infty)\times\mathbb{S}^2,\\
 g_0&=\frac{dr\otimes dr}{N^2}+r^2 g_{\mathbb{S}^2},\\
 N&=\sqrt{1-\frac{2m}{r}+\frac{q^2}{r^{2}}},\\
  \Ele&=\frac{q}{r}
\end{split}\end{align}
for fixed parameters $m,q\in\R$ and $r_{0}\geq0$ chosen such that $N$ is well-defined for all $r> r_0$. Here, $m$ is called the \emph{mass} and $q$ is called the \emph{charge} of the Reissner--Nordstr\"om solution. When $q=0$, the Reissner--Nordstr\"om solution of mass $m$ becomes the more widely known Schwarzschild spacetime of mass $m$, while $m=q=0$ corresponds to the Minkowski spacetime. The behaviour of the Reissner--Nordstr\"om solution and the maximal interval $(r_0,\infty)$ on which it is well-defined depend on the relation between $m$ and $q$. In particular, one distinguishes three cases:
\begin{itemize}
    \item The {\em sub-extremal case}, when $m>|q|$: In this case the lapse function $N$ vanishes for two positive radii whose maximum is $r_{m,q}\definedas m + \sqrt{m^2 - q^2}$. Hence the maximal interval of definition of our solution is $(r_{m,q},\infty)$. It can be seen by switching to Gaussian null or 
to Kruskal--Szekeres coordinates that the spacetime can be smoothly  extended beyond $\{r=r_{m,q}\}$ and possesses a non-degenerate Killing horizon at $r=r_{m,q}$. The sub-extremal Reissner--Nordstr\"om spacetime of mass $m$ and charge $q$ models the exterior region of an isolated electrically charged black hole in static equilibrium, surrounded by electro-vacuum.
\item The {\em extremal case}, when $m=|q|\neq0$: In this case the lapse function $N$ vanishes for the single value $r=r_{m,q}=m$, so the maximal interval of definition of our solution is $(r_{m,q}=m,\infty)$. It can be seen by switching to Gaussian null coordinates that the spacetime can be smoothly extended to $\{r=m\}$ and possesses a degenerate Killing horizon at $r=m$. The extremal Reissner--Nordstr\"om spacetime of mass $m$ and charge $q$ models the exterior region of an isolated extremal black hole in equilibrium surrounded by electro-vacuum.
 \item The {\em super-extremal case}, when $m<|q|$: In this case, the solution is well-defined on $(0,\infty)$. Notice that we cannot extend the spacetime to $r=0$. The lapse function $N$ never reaches the value zero, so there is no (static) Killing horizon or in other words no black hole, hence $r=0$ corresponds to a ``naked singularity''.
 \end{itemize}

\begin{rmk}[Inner boundary in Reissner--Nordstr\"om spacetimes]
In view of our asymptotic flatness definition, one should instead consider closed radius intervals in the definition of Reissner--Nordstr\"om solutions. In the sub-extremal case, one can rewrite the Reissner--Nordstr\"om electrostatic system in isotropic coordinates and suggestively consider it on $[r_{m,q},\infty)$. Switching instead to Gaussian null coordinates as described above, one can suggestively rewrite the extremal Reissner--Nordstr\"om electrostatic system on $[m,\infty)$, keeping in mind that the metric does not smoothly extend to the cylindrical end ``at'' $m$. For the super-extremal case however, one needs to restrict to an interval $[r_{0},\infty)$ for some $r_{0}>0$ to obtain an asymptotically flat system. This aligns with the fact that the super-extremal Reissner--Nordstr\"om spacetimes contain naked singularities.
\end{rmk}

It is well-known (see e.g. \cite[Corollary 3.2]{CedJaVi}) that the Reissner--Nordstr\"om spacetime of mass $m$ and charge $q$ \begin{itemize}
\itemsep0em
\item has a unique photon sphere at $r_{*} =  \frac{3m + \sqrt{9m^2 - 8q^2}}{2}$ when $m\geq\vert q\vert$, $m\neq0$,
\item has precisely two photon spheres at $r_{*,\pm}=\frac{3m \pm \sqrt{9m^2 - 8q^2}}{2}$ when $\vert q\vert>m>\frac{2\sqrt{2}}{3}\vert q\vert$,
\item has a unique photon sphere at $r_{*}=\frac{3m}{2}$ when $m=\frac{2\sqrt{2}}{3}\vert q\vert$,
\item and no photon spheres otherwise.
\end{itemize}
It is shown in \cite[Theorems 3.7, 3.9, 3.10]{CedJaVi} that, for each radius $r_{1}>r_{0}$, $r_{0}$ the radius corresponding to the maximal existence interval in \eqref{eq:RN}, all Reissner--Nordstr\"om spacetimes have spherically symmetric equipotential photon surfaces going through $\{r=r_{1}\}$. Moreover, \cite[Corollary 2.25]{CedJaVi} asserts that there are no other equipotential photon surfaces than spherically symmetric ones in Reissner--Nordstr\"om spacetimes except all timelike planes in the Minkowski spacetime. Finally, we recall that $dN\neq0$ holds in all Reissner--Nordstr\"om spacetimes except on $\{r=\frac{q^{2}}{m}\}$ when  $0<m<\vert q\vert$. Thus spherically symmetric/equipotential photon surfaces in Reissner--Nordstr\"om spacetimes are always non-degenerate unless they cross $\{r=\frac{q^{2}}{m}\}$ when $0<m<\vert q\vert$.

\section{Results}\label{sec:results}
For our first result, we will restrict our attention to the case $N_0=0$, that is, we will assume that the boundary is a time-slice of a Killing horizon, or a \emph{horizon} for short. Hence, our first result is a version of the well-known electrostatic electro-vacuum black hole uniqueness theorem for a single horizon, see \Cref{sec:intro} for a detailed comparison with other proofs of this result.
\begin{theorem}
\label{thm:BH}
Let $(M,\go,N,\Ele)$ be an asymptotically flat electrostatic system  with mass $\mu$ solving \eqref{eq:pb}. Suppose that $M$ has a connected boundary $\d M$ arising as a time-slice of a Killing horizon, i.e., in particular satisfies $N\vert_{\partial M}=N_0=0$. Then $(M,\go,N,\Ele)$ is isometric to a sub-extremal or extremal Reissner--Nordstr\"om solution of mass $\mu$ and some charge $q$ and interval of definition $(r_{\mu,q},\infty)$. In particular, if the horizon $\d M$ is non-degenerate, this Reissner--Nordstr\"om solution is sub-extremal, $\mu>\vert q\vert$, while it is extremal, $\mu=\vert q\vert\neq0$, if the horizon $\d M$ is degenerate.
\end{theorem}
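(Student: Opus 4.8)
The strategy I would follow is a variant of the Agostiniani--Mazzieri ``cylindrical ansatz'' of \cite{Mazz}, adapted to the electro-vacuum system \eqref{eq:pb}. As preliminary reductions, I would first exploit connectedness of $\d M$ (cf.\ \Cref{sub:simplification}): since $\Ele\to 0$ at infinity by \eqref{eq:asymptotics} one has $\Ele\vert_{\d M}=\Ele_0$ constant and, by tracing the first equation of \eqref{eq:pb}, $\vert\mathrm{D}\Ele\vert=0$ on $\d M$ in the non-degenerate case; moreover connectedness is what lets one control the topology and the connectedness of the regular level sets $\{N=s\}$, $s\in(0,1)$, which separate $\d M$ from the asymptotic end. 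Secondly, $\ADM>0$: since $\Delta N=N^{-1}\vert\mathrm{D}\Ele\vert^{2}\ge 0$ and $N\le 1$ by the maximum principle, $1-N$ is a nonnegative superharmonic function equal to $1$ on $\d M$ with $1-N=\ADM\vert x\vert^{-1}+o(\vert x\vert^{-1})$ at infinity, so comparison with the harmonic function carrying the same Dirichlet data forces $\ADM>0$.

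The heart of the proof, carried out in \Cref{sec:cyl}, is the cylindrical reformulation: introduce a ``pseudo-radial'' potential $\ffi$ built from $N$ (agreeing with $-\log N$ up to an explicit change of variable near infinity) and conformally rescale $\go$ by the factor chosen so that the Reissner--Nordstr\"om family \eqref{eq:RN} becomes an explicit model, namely the product cylinder $[0,\infty)\times\Sphere^{2}$ with $\Ele$ contributing a controlled source term. Writing \eqref{eq:pb} in terms of $\ffi$, $\Ele$ and the rescaled metric $\hat g$, the ``defect from being a product cylinder'' is encoded in a few manifestly signed quantities: a traceless-Hessian term for $\ffi$, a Cauchy--Schwarz term coupling $\mathrm{D}N$ and $\mathrm{D}\Ele$, and an ambient curvature term nonnegative thanks to $\ADM>0$. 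Carrying $\Ele$ through as an independent field, one then extracts a Bochner-type divergence identity---equivalently, a monotonicity formula for the cylindrical level-set functionals $\int_{\{\ffi=s\}}\vert\hat\nabla\ffi\vert^{\beta}\,d\hat\sigma$, suitably normalized---whose \emph{rigidity} case is exactly when $\hat g$ is the product metric, $\ffi$ depends only on the cylinder variable, and $d N\wedge d\Ele=0$.

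Integrating the identity over $M$---or over the region between $\{N=s\}$ and a large coordinate sphere, then letting $s\to 0^{+}$ and the sphere tend to infinity---the interior integrand is nonnegative, so everything is controlled by the two boundary fluxes. At the asymptotically flat end the weak decay \eqref{eq:asymptotics}, together with \eqref{eq:asymmu0} when $\ADM$ would otherwise leave the level-set behaviour of $N$ undetermined, suffices to compute the flux, which tends to an explicit expression in $\ADM$ and $\qq\definedas\lim_{\vert x\vert\to\infty}\vert x\vert\,\Ele$. At the horizon I would split into the two announced cases: if $\d M$ is non-degenerate it is totally geodesic with $\vert\mathrm{D}N\vert$ a positive constant and $\vert\mathrm{D}\Ele\vert=0$ there (all recorded in \Cref{sec:setup}), making the horizon flux explicit; if $\d M$ is degenerate it is a cylindrical end on which $N$, $dN$ and $\vert\mathrm{D}\Ele\vert$ all tend to zero, and one passes to the limit along the end using its geodesic completeness (\Cref{rmk:complete}). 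Comparing the two fluxes forces the interior integral to vanish, hence all the signed terms vanish: $\hat g$ is a product metric on $[0,\infty)\times\Sphere^{2}$ with round unit fibres, $\ffi$ is a function of the cylinder variable alone, and $dN\wedge d\Ele=0$.

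From the vanishing of the defect the solution is reconstructed: undoing the conformal change, the transported equations \eqref{eq:pb} collapse to the Reissner--Nordstr\"om ODEs, so $(M,\go,N,\Ele)$ is isometric to a Reissner--Nordstr\"om electrostatic system \eqref{eq:RN}, and matching \eqref{eq:asymptotics}--\eqref{eq:asymmu0} identifies its mass with $\ADM$ and its charge with $\qq$. Since $\d M\ne\emptyset$ is a time-slice of a Killing horizon, the solution is neither Minkowski nor super-extremal; hence it is sub-extremal, with $\ADM>\vert\qq\vert$ and $dN\ne 0$ on $\d M$, or extremal, with $\ADM=\vert\qq\vert\ne 0$ and $dN=0$ on $\d M$, which is the claimed dichotomy. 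I expect the main obstacle to be the rigidity step under the very weak asymptotics: one must verify that the Bochner/monotonicity identity integrates cleanly and that the flux at infinity converges to the stated value using only the $o$-type decay of \Cref{rem:asy}, and---in the degenerate case---that the cylindrical end, where $\go$ does not extend smoothly, contributes no spurious boundary term; a secondary difficulty is ensuring the critical set of $N$ does not obstruct the integration and that the regular level sets are connected, which is precisely where connectedness of $\d M$ is used.
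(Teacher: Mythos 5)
Your overall architecture (cylindrical ansatz, a Bochner-type divergence inequality, comparison of a flux at the horizon with a flux at infinity, rigidity) matches the paper's, but there are two genuine gaps that would prevent the argument from closing as written.

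First, you propose to carry $\Ele$ through as an \emph{independent} field and to encode its contribution in "a Cauchy--Schwarz term coupling $\mathrm{D}N$ and $\mathrm{D}\Ele$" whose rigidity case is $dN\wedge d\Ele=0$. The paper's entire method rests on the opposite observation: by applying the maximum principle to the combinations $X_{ab}=\tfrac{a}{2}(\Ele^2+1-N^2)+b\Ele$, which satisfy the degenerate-elliptic equation $\Delta X_{ab}=\tfrac{1}{N}\langle\mathrm{D}X_{ab}\,|\,\mathrm{D}N\rangle$ and vanish on the connected boundary and at infinity, one proves the \emph{pointwise identity} $N^2=\Ele^2+1+\bigl(\tfrac{N_0^2}{\Ele_0}-\Ele_0-\tfrac{1}{\Ele_0}\bigr)\Ele$ (\Cref{pro:simplification}). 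This eliminates $\Ele$ entirely and reduces \eqref{eq:pb} to the single-potential system \eqref{eq:pb_N}, after which the pseudo-radial function $\rho_-$, the $g$-harmonicity of $\ffi$, and the signed Bochner identity \eqref{eq:bochner} all become explicit computations. Without this reduction, the existence of a divergence identity with "manifestly signed quantities" is an assertion, not a proof; the relation $dN\wedge d\Ele=0$ is an automatic consequence of the reduction, not a rigidity output you can expect to extract from a Bochner formula for two coupled fields.

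Second, the single "compare two fluxes and force the interior integral to vanish" step is too coarse to yield the theorem. In the paper the conclusion requires three separate ingredients which you have merged into one: (i) the Bochner inequality integrated out to infinity gives $\int_{\{\ffi=s\}}|\nabla\ffi|_g^{p-1}\Hg\,d\sigma_g\geq0$ (\Cref{pro:integral_Hg}); evaluated at the horizon and combined with Gauss--Bonnet this yields only an \emph{area bound} $|\d M|\leq4\pi\prao^2$, not rigidity; (ii) a separate minimum-principle argument for $w=\tfrac{1}{\pra}(1-|\nabla\ffi|_g^2)$ gives the gradient estimate $|\nabla\ffi|_g\leq1$ (\Cref{pro:grad_est}) and hence $\mm\leq\ADM$, and it is the \emph{equality case of this estimate} that triggers rigidity; (iii) the exact conservation law $F(s)=\int_{\{\ffi=s\}}|\nabla\ffi|_g\,d\sigma_g=\text{const}$ from $\Delta_g\ffi=0$, combined with the area bound and the asymptotic value $\tfrac{\ADM}{\mm}|\Sphere^2|$, gives the reverse inequality $\mm\geq\ADM$. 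Crucially, all of this only works after fixing the remaining free parameter in $(\mm,\qq)$ by the normalization $|\nabla\ffi|_g=1$ on $\d M$ (equation \eqref{eq:second_condition_on_mq}); your sketch never identifies this degree of freedom or how it is fixed, and without it the horizon flux cannot be compared to the asymptotic one. Your treatment of the degenerate case is likewise too thin: the paper must first prove $k=0$ (extremality) by a contradiction argument using the constancy of $F$ and the Gaussian-null-coordinate expansion of $|\mathrm{D}N|$, and then compute $\lim_{z\to0}|\nabla\ffi|_g$ from the surface-gravity analysis before the non-degenerate scheme can be rerun on the slices $S_\ep$.
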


Our second result shows \emph{(electro-)static equipotential photon surface uniqueness} in electro-vacuum. Exactly as in the first result, it shows that an asymptotically flat electrostatic electro-vacuum spacetime that admits an equipotential photon surface as its inner boundary must be isometric to a Reissner--Nordstr\"om spacetime of the same mass. As discussed in more detail in \Cref{sec:intro}, our theorem recovers previously known results but also extends beyond what was previously known.

\begin{theorem}\label{thm:photon_surface}
Let $(M,\go,N,\Ele)$ be an asymptotically flat electrostatic system with boundary $\partial M$ of mass $\mu$ (and of charge $\kappa$ if $\mu=0$) solving \eqref{eq:pb}. Suppose that $\d M$ is a connected time-slice of an equipotential photon surface, hence in particular satisfies that $N\vert_{\partial M}=N_0 > 0$, $\Psi\vert_{\partial M}=\Ele_0$ are constants. If $N_{0}=1$ and $\Psi_{0}=0$ both hold, then $(M,\go,N,\Ele)$ can be isometrically embedded into the Euclidean space. 
In all the other cases, assuming that $N_0^2 \geq |1-\Psi_0^2|$ holds if $N_0^2 > (1-|\Psi_0|)^2$,  the system $(M,\go,N,\Ele)$ can be isometrically embedded into a Reissner--Nordstr\"om solution of mass $\mu$ and some charge $q$, with $q=\kappa$ when $\mu=0$. In particular, this Reissner--Nordstr\"om spacetime is
\begin{itemize}
    \item \emph{sub-extremal} if $N_0^2 < (1-|\Psi_0|)^2$,
    \item \emph{extremal} if $N_0^2 = (1-|\Psi_0|)^2$, and
    \item \emph{super-extremal} if $N_0^2 > (1-|\Psi_0|)^2$. 
\end{itemize}
\end{theorem}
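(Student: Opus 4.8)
The plan is to run the cylindrical ansatz of \Cref{sec:cyl}, re-using the monotonicity machinery behind the black hole case (\Cref{sec:BH}) but now with a positive-lapse, constant-mean-curvature totally umbilic boundary in place of a horizon: the two endpoint values of a suitable Agostiniani--Mazzieri-type level-set functional will be pinned, at infinity, by the asymptotics \eqref{eq:asymptotics}--\eqref{eq:asymmu0}, and, on $\partial M$, by the photon surface identity (\Cref{thm: ps_properties}, equivalently \eqref{scalarMeanCurv}--\eqref{normalLapse}). The borderline case $N_0=1$, $\Psi_0=0$ is handled separately as the limiting instance in which the Reissner--Nordstr\"om solution below degenerates to Minkowski space: there one shows $N\equiv1$, $\Psi\equiv0$, $\Ric\equiv0$, hence $(M,\go)$ is flat and embeds into Euclidean $\R^3$ with $\photo$ a timelike hyperplane.

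For the remaining cases, I would first normalize, using connectedness of $\partial M$ and the reductions of \Cref{sub:simplification}, to a model with one asymptotically flat end and one boundary component, metrically and geodesically complete up to $\partial M$ by \Cref{rmk:complete}, to which $N,\Psi,\go$ extend with $\partial M$ totally umbilic and $\HHH_{\nu}$, $\nu(N)$, $\Scal^{\partial M}$ constant; as part of this I would re-derive (cf.\ \Cref{sec:photon_surface}) that $\HHH_{\nu}>0$, so the constant $c$ of \eqref{scalarMeanCurv}--\eqref{normalLapse} is well-defined. Then, as in \Cref{sec:cyl}, I would introduce the pseudo-radial function $\pra$ assembled from $N$ and $\Psi$, normalized so that $\pra\to\infty$ at the end and $\pra\equiv\prao$ on $\partial M$, together with the conformally rescaled metric $\widehat{g}\definedas\pra^{-2}\go$; under \eqref{eq:pb} the transformed system for $(\widehat{g},\pra,N,\Psi)$ has, in the Reissner--Nordstr\"om model \eqref{eq:RN}, $\widehat{g}$ equal to the round product cylinder $ds^2+g_{\Sphere^2}$ and $\pra=\pra(s)$. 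One then checks that $\pra$ is proper with no interior critical points, so that its level sets $\Sigma_t\definedas\{\pra=t\}$ foliate $M\setminus\partial M$. This is the step where the extra hypothesis in the super-extremal regime is used: there $N$ is \emph{not} monotone along the end -- in the model $N^2$ has an interior minimum exactly on the critical locus $\{r=q^2/m\}$, which is also where a degenerate photon surface sits, see \Cref{sub:degenerateequi} -- so $N$ cannot serve as pseudo-radial coordinate and $\pra$ must be built from another combination of $N$ and $\Psi$; the inequality $N_0^2\ge|1-\Psi_0^2|$ is precisely what makes that combination proper and critical-point-free on $M$.

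With the foliation in hand I would consider the level-set functional $U$ of \Cref{sec:BH} -- schematically a (possibly $t$-weighted) integral
\begin{align*}
U(t)=\int_{\Sigma_t}\mathcal{Q}\,d\sigma
\end{align*}
of a suitable gradient/curvature density $\mathcal{Q}$ -- and prove, via a Bochner-type identity for the transformed system, that $U$ is monotone on $[\prao,\infty)$. Computing $\lim_{t\to\infty}U(t)$ from \eqref{eq:asymptotics}--\eqref{eq:asymmu0} gives a value fixed by $\ADM$ (and by $\kappa$ when $\ADM=0$), while computing $U(\prao)$ from the photon surface identity together with the constancy of $\HHH_{\nu}$, $\nu(N)$, $\Scal^{\partial M}$ -- i.e.\ from \eqref{scalarMeanCurv}--\eqref{normalLapse} -- gives a value fixed by the boundary data. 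Monotonicity yields one inequality between the two endpoint values; the reverse inequality follows from an a priori Willmore/Minkowski-type estimate for the umbilic constant-mean-curvature hypersurface $\partial M$ encoded in \eqref{scalarMeanCurv}--\eqref{normalLapse}. Hence $U$ is constant, which saturates all the intermediate Cauchy--Schwarz/Bochner inequalities; this forces $\widehat{g}$ to be the round product cylinder, every $\Sigma_t$ to be a round sphere of $t$-independent radius, and $N,\Psi,\pra$ to depend only on $s$. Integrating the resulting ODE system coming from \eqref{eq:pb} then realizes $(M,\go,N,\Psi)$ as isometrically embedded in a Reissner--Nordstr\"om solution \eqref{eq:RN}.

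Finally, the asymptotics give $m=\ADM$ and, when $\ADM=0$, $q=\kappa$; the remaining parameters are pinned by the model relations $\Psi_0=q/\prao$ and $N_0^2=1-2\ADM/\prao+q^2/\prao^2$. The trichotomy follows from the model identity $N_0^2-(1-|\Psi_0|)^2=2(|q|-m)/\prao$, so that $N_0^2$ being less than, equal to, or greater than $(1-|\Psi_0|)^2$ corresponds to $|q|$ being less than, equal to, or greater than $m$, i.e.\ to a sub-extremal, extremal, or super-extremal Reissner--Nordstr\"om solution. I expect the main obstacle to be the super-extremal case just flagged: because $N$ fails to foliate $M$, both the construction of the pseudo-radial function $\pra$ and the sign of $U$ become delicate, and the restriction $N_0^2\ge|1-\Psi_0^2|$ -- which is sharp, being an equality exactly for degenerate photon surfaces -- is what makes the monotonicity run; removing it appears to require a genuinely different treatment of the critical set of $N$.
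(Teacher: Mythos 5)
Your overall architecture --- cylindrical ansatz, conformal metric $g=\go/\pra^2$, Bochner-type monotonicity along level sets, endpoint evaluation at infinity via \eqref{eq:asymptotics}--\eqref{eq:asymmu0} and on $\partial M$ via \eqref{scalarMeanCurv}--\eqref{normalLapse}, then a Gauss--Bonnet argument to close the loop --- is indeed the paper's strategy, and your plan for the sub-extremal, extremal, and $\mm\leq0$ cases, as well as the final trichotomy, would go through (the borderline case $N_0=1$, $\Psi_0=0$ is also fine once you first invoke \Cref{pro:simplification} to get $\Psi\equiv0$ and reduce to vacuum).

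The gap is in the super-extremal case with $0<\mm<\qq$. You claim that the hypothesis $N_0^2\ge|1-\Psi_0^2|$ is ``precisely what makes'' the pseudo-radial function ``proper and critical-point-free on $M$''. This is not so. The critical set $\C=\{N^2+k=0\}=\{\Psi=\tfrac{\mm}{\qq}\}$ can be nonempty and contained in the interior even under that hypothesis: in the model it is the sphere $\{r=\tfrac{q^2}{m}\}$, and the hypothesis only excludes boundaries with $r<m$, not the crossing of $r=\tfrac{q^2}{m}$. On $\C$ one has $dN=0$ by \eqref{eq:relation_DN_DPsi}, the assignment $N\mapsto\rho$ is not single-valued across $\C$ (the two branches $\rho_\pm$ swap there), and $|\nabla\ffi|_g$ genuinely vanishes on part of $\C$ (\Cref{lem:structure_C}); so there is no global smooth critical-point-free pseudo-radial foliation, and your single monotone functional cannot be run on all of $M$ at once. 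What is actually needed is: a structure theorem for $\C$ using real analyticity of $N$ and the {\L}ojasiewicz decomposition, showing that $\C$ disconnects $M$ into exactly one compact inner piece and one outer piece; two different pseudo-radial functions $\rho_+$ and $\rho_-$ on the two pieces, glued only continuously across $\C$; and a verification that the resulting monotonicity formulas match on $\Gamma=\overline{M_i}\cap\overline{M_o}$ (\Cref{pro:integral_Hg_superext}, \Cref{pro:monotonicity_Fp}) --- in particular the maximum-principle gradient estimate of \Cref{pro:grad_est} is unavailable in this regime and must be replaced by monotonicity of $F_p$, $p\geq 3$, played against the constancy of $F_1$. The hypothesis $N_0^2\ge|1-\Psi_0^2|$ enters only at the very last step, in the boundary estimate $\operatorname{R}^{\partial M}\ge \tfrac{2}{\prao^2}$: it guarantees $\prao\ge\mm$, i.e.\ $c\ge c_*\ge -1$, which is what allows the quantity $\tfrac{(c+1)^2}{c^2}$ appearing in \eqref{eq:scalbelowlastcase} to be bounded below by its value at $c_*$ (see \Cref{sec:condition}). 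Relatedly, your closing remark that equality in this hypothesis characterizes degenerate photon surfaces is incorrect: in the model, equality holds both at $r=\tfrac{q^2}{m}$ (degenerate, and covered by the theorem) and at $r=m$ (non-degenerate).
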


\begin{rmk}\label{rmk: reductionToVacuum}
Notice that when $\Psi_0 = 0$, then $\Psi\equiv0$ in view of \Cref{pro:simplification} and hence the problem reduces to the analysis of the vacuum static system $(M, g_0, N)$ provided in \cite{CedCoFer}, which gives the isometric embedding of $(M, g_0, N)$ into a Schwarzschild solution with positive mass $\mu > 0$ if $N_0 < 1$, a Schwarzschild solution with negative mass $\mu < 0$ if $N_0 > 1$, or into the Euclidean space if $N_0 = 1$. 
Therefore, Theorem~\ref{thm:photon_surface} can be interpreted as an extension of \cite{CedCoFer} to the electro-vacuum case. For a more extended discussion, see \Cref{sub:simplification}.
\end{rmk}

\begin{rmk}\label{rem:condition}
The assumption that $N_0^2 \geq |1-\Psi_0^2|$ holds if $N_0^2 > (1-|\Psi_0|)^2$ is a technical condition needed for our method of proof. In the Reissner--Nordstr\"om model case, this corresponds to excluding equipotential photon surfaces with $r<m$ in super-extremal Reissner--Nordstr\"om spacetimes of positive mass $m>0$. See \Cref{sec:condition} for a discussion of why this condition arises in our approach.
\end{rmk}

\begin{rmk}
As it can be seen from the proof of \Cref{thm:photon_surface}, instead of assuming that $\partial M$ arises as a time-slice of an equipotential photon surface, it suffices to assume that $N\vert_{\partial M}=N_{0}$, $\Psi\vert_{\partial M}=\Psi_{0}$, that $\partial M$ is umbilic and has constant mean curvature $\HHH_{\nu}$ and scalar curvature $\operatorname{R}^{\partial M}$, and constant normal derivatives $\nu(\Psi)$ and $\nu(N)$ such that the photon surface identity \eqref{foundtheconstant} holds on $\partial M$. Moreover, in case $dN$ does not vanish entirely on $\partial M$, it follows from the proof of \cite[Proposition 5.5]{CedJaVi} that assuming $N\vert_{\partial M}=N_{0}$, $\Psi\vert_{\partial M}=\Psi_{0}$, umbilicity of $\partial M$, constancy of $\nu(\Psi)$ and constancy of either $\nu(N)$ or $\HHH_{\nu}$ imply all the other assumptions, i.e., constancy of $\HHH_{\nu}$ respectively $\nu(N)$, constancy of $\operatorname{R}^{\partial M}$, and \eqref{foundtheconstant} (see also \Cref{sub:degenerateequi}).
\end{rmk}

\section{Strategy of proof and comments}\label{sec:strategy}
The proofs of our results are based on the cylindrical ansatz strategy introduced by Agostiniani and Mazzieri in~\cite{AgoMazz_first}. Since then, the same strategy has found several applications to potential theory~\cite{Ago_Fog_Maz-2,MazzAgo, Fog_Maz_Pin}, manifolds with non-negative Ricci curvature~\cite{Ago_Fog_Maz-1,Ben_Fog_Maz} and static spacetimes~\cite{Mazz,BorMaz-collection,BorMaz-I,BorMazII,CedCoFer}. Other related ideas and methods have been developed in~\cite{AgoBorMaz,Bor_staticnegative,BorMasMaz,
Oronzio-PMT,Oronzio-Penrose}. Here we review and comment on the method in some detail.

To set up the cylindrical ansatz, the first step is to write down the radial coordinate $r$ of the model solution (the Reissner--Nordstr\"om system~\eqref{eq:RN} in our case) as a function of the lapse function $N$. This is of course possible only as long as $N$ is a monotonic function of the radius $r$. For sub-extremal and extremal Reissner--Nordstr\"om solutions, this is indeed the case: in fact, it can be checked by a direct computation that $N$ is a strictly increasing function of the radius $r$ as long as $m\geq|q|$ unless $m=q=0$. Taking the inverse of the relation between $r$ and $N$ we can then write $r=\rho(N)$ for a suitable \emph{pseudo-radial} function $\rho$.

The proofs of \Cref{thm:BH} and of the sub-extremal, extremal, and $m\leq0$-super-extremal cases of \Cref{thm:photon_surface} then loosely follow the original strategy of~\cite{MazzAgo}. The equipotential photon surface case is then handled similarly to the application of the results of \cite{MazzAgo} in \cite{CedCoFer}: first, one exploits the Bochner formula to produce a vector field depending on $\rho$ whose divergence is non-negative and equal to zero if and only if the solution is isometric to the Reissner--Nordstr\"om solution, see \Cref{sub:Bochner}. Then, to conclude the proof, one proves that the equality is necessarily saturated using either the properties of a horizon or of an equipotential photon surface. This is the objective of \Cref{sec:BH} in the black hole case, and of \Cref{sub:proof_photon_standard,sub:neg_mass} in the sub-extremal and extremal and in the $m\leq0$-super-extremal equipotential photon surface cases, respectively. We emphasize however that there are a couple of technical but important points that need to be addressed in these proofs: 
\begin{enumerate}
\itemsep0em
\item There are two natural choices of pseudo-radial functions and we will need to make suitable choices in all cases, giving suitable asymptotic behavior and justify their well-definedness and smoothness in each case. 
\item In order to work with the weaker notion of asymptotic flatness we use, see \eqref{eq:asymptotics} and \eqref{eq:asymmu0}, we need a more careful analysis of the asymptotics in the cylindrical ansatz. This is performed in \Cref{sub:asymptotics} for $\mu\neq0$ and in \Cref{subsub:mm=0} for $\mu=0$.
\item In the proof of the extremal case of \Cref{thm:BH}, one needs some refined analysis near the degenerate horizon, see \Cref{sub:degenerate}.
\end{enumerate}

On the other hand, in the $m>0$-super-extremal case, the lapse function of the Reissner--Nordstr\"om solution is \emph{not} a monotonic function of the radial coordinate anymore: $N$ is strictly decreasing when $r<\frac{q^2}{m}$, has a global minimum at $r=\frac{q^2}{m}$, and is strictly increasing when $r>\frac{q^2}{m}$. Therefore, it is only possible to write the radial coordinate $r$ as a function of $N$ if we restrict our attention to $r\leq \frac{q^2}{m}$ or to $r\geq \frac{q^2}{m}$. In order to perform the conformal change prescribed by the cylindrical ansatz on a general manifold in the $m>0$-super-extremal case, we hence need a more involved procedure: the first step will be to show that, given an $m>0$-super-extremal solution $(M,g,N,\Ele)$ of~\eqref{eq:pb}, the set $\C\subset M$ consisting of the global minima of the function $N$ either coincides with $\partial M$ or divides the manifold into two pieces, a compact one and a non-compact one. The first case is treated in \Cref{subsec:superextremalnocrit} and still roughly follows the same strategy as the other cases. In the second case, we perform different conformal changes in the two regions, i.e., use different pseudo-radial functions, and analyze them separately, see \Cref{subsec:critnonemptyN0large}. As a result of our analysis, in both regions we produce quantities that are monotonic along the level sets of the lapse function~$N$.
 
Remarkably, these monotonic quantities coincide on $\C$ and the monotonicities combine (see \Cref{pro:integral_Hg_superext}, and in particular formulas~\eqref{eq:intinequality_Gamma1} and~\eqref{eq:intinequality_Gamma2}). This allows us to compare the value of our monotonic quantities at infinity and on the photon surface, obtaining valuable insights into the geometry of the photon surface and allowing us to conclude the rigidity of the super-extremal Reissner--Nordstr\"om solution, at least for photon surfaces of radius $r\geq m$, see also \Cref{rem:condition}.

The strategy outlined above resembles the one explored recently by Borghini--Mazzieri in~\cite{BorMazII} for the study of the vacuum case with positive cosmological constant. In that paper, the authors divide the manifold along the set of the global maxima and then apply the cylindrical ansatz strategy separately in every region. There are however some interesting and crucial differences. The first one is that, while in the present paper we prove that the set of local minima $\C$ separates the manifold in exactly two regions, in the setting of~\cite{BorMazII} it is not known how many regions are produced when cutting along the set of the global maxima: there are models where cutting along the set of maxima produces two regions (Schwarzschild--de Sitter and Nariai solutions), but it is also possible for the cut to produce just a single region (de Sitter solution) and there is currently no way to rule out the case of more than two regions. Furthermore, while in~\cite{BorMazII} the authors also manage to exploit the cylindrical ansatz strategy to  produce monotonic quantities (see~\cite[Corollary~3.5]{BorMazII}), these quantities do not combine together across the set of maxima. The authors thus had to resort to other methods in order to obtain rigidity statements. 

\subsection{Simplifications in our framework}\label{sub:simplification}

In this subsection, we will discuss some preliminary observations, leading to a helpful rewriting of the rigidity/uniqueness problem. 
First, we observe that it is not restrictive to limit our attention to the case where the boundary value $\Ele_0$ of $\Psi$ is positive. 
In fact, notice that if $(M,\go,N,\Ele)$ is a solution to~\eqref{eq:pb}, then $(M,\go,N,-\Ele)$ is a solution as well. As a consequence, up to changing the sign of $\Ele$, we can assume that $\Ele_0$ is non-negative. As already pointed out in \Cref{rmk: reductionToVacuum}, in the case $\Ele_0=0$ , the rigidity/uniqueness problem reduces to the vacuum case which has already been studied in depth in several papers, see \Cref{sec:intro}. In particular, when $N_{0}\neq1$, the results by Cederbaum--Cogo--Fehrenbach~\cite{CedCoFer} and in higher dimensions by Cederbaum--Cogo--Leandro--Paulo dos Santos~\cite{CCLP} prove the vacuum case of our \Cref{thm:photon_surface}, under the same decay assumptions. As mentioned in \Cref{rmk: reductionToVacuum}, when $N_0 = 1$, the result follows the same argument provided in \cite{CedCoFer}. The aim of this work is that of employing and expanding on the cylindrical ansatz technique presented in the vacuum case by Agostiniani and Mazzieri~\cite{Mazz} for a connected (and necessarily non-degenerate, see e.g. \cite{KW}) horizon and by Cederbaum--Cogo--Fehrenbach~\cite{CedCoFer} for a connected non-degenerate equipotential photon surface to deal with the case where $\Ele$ is non-trivial. 

In fact, our asymptotic analysis in \Cref{sub:asymptoticsimplification,sub:asymptotics} can be used to extend the vacuum uniqueness results by \cite{Mazz} and hence also those by \cite{CedCoFer} to the weaker asymptotic assumptions \eqref{eq:asymptotics}.
Notice in particular that the value $\mu=0$ (which in this paper requires a special treatment and the additional asymptotic assumption~\eqref{eq:asymmu0}) is immediately ruled out in the vacuum case by using the divergence theorem applied to the vacuum equation $\Delta N=0$ and exploiting the asymptotic assumption on $N$ to obtain
\[
\int_{\d M}\nu(N)\,d\sigma\,=\,\lim_{R\to+\infty}\int_{\{\vert x\vert =R\}}\nu(N)\,d\sigma\,=\,4\pi\mu\,.
\]
As vacuum black holes are necessarily non-degenerate (see e.g. \cite{KW}), and the equipotential photon surfaces are non-degenerate by assumption, we find $\nu(N)\neq0$ on $\d M$ and hence $\mu\neq0$, see also \Cref{sub:degenerateequi}.

To set up the cylindrical ansatz, we first employ the strategy presented in~\cite[Corollary~9.6]{Heusler_Blackholes} and also exploited in \cite{YazLaz} to show how our assumptions of asymptotic flatness and connected boundary help to simplify the problem significantly by relating the electric potential explicitly with the lapse function. As discussed, we will only deal with the case $\Psi_{0}\geq0$, recalling that one can treat the opposite sign by a global change of $\Psi$.
\begin{prop}[Reduction]\label{pro:simplification}
Let $(M,\go,N,\Ele)$ be an asymptotically flat electrostatic electro-vacuum system such that $M$ has a connected boundary $\partial M$ to which $N$ and $\Psi$ smoothly extend to constants $N\vert_{\partial M}=N_0\geq0$, $\Ele\vert_{\partial M}=\Ele_{0}\geq0$. Then if $\Psi_{0}>0$
\begin{align}\label{eq:identity}
N^2&=\Ele^2+1+\left(\frac{N_0^2}{\Ele_0}-\Ele_0-\frac{1}{\Ele_0}\right)\Ele
\end{align}
on $M$ and $0<\Psi<\Psi_{0}$ on $M\setminus\partial M$. If on the other hand, $\Psi_{0}=0$, then $\Psi\equiv0$ on $M$.
\end{prop}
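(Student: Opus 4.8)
The plan is to exploit the three coupled PDEs in \eqref{eq:pb} to produce a single functional relation between $N$ and $\Psi$ that is constant on $M$, then pin down the constants using the boundary data and the asymptotics. The crucial observation is that the combination of the second and third equations in \eqref{eq:pb} can be rewritten as a divergence identity. Indeed, from $\Delta N = \frac{1}{N}|\mathrm{D}\Psi|^2$ and $\Delta\Psi = \frac{1}{N}\langle\mathrm{D}N\,|\,\mathrm{D}\Psi\rangle$, one computes $\operatorname{div}\!\left(\tfrac{1}{N}\mathrm{D}\Psi\right) = \tfrac{1}{N}\Delta\Psi - \tfrac{1}{N^2}\langle\mathrm{D}N\,|\,\mathrm{D}\Psi\rangle = 0$, so $\tfrac{1}{N}\mathrm{D}\Psi$ is divergence-free. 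Similarly, $N\,\Delta N - |\mathrm{D}N|^2 + \Psi\,\Delta\Psi - |\mathrm{D}\Psi|^2$ and related combinations can be massaged into exact divergences. The cleanest route, following \cite[Corollary~9.6]{Heusler_Blackholes}, is to look for constants $a,b$ such that $W \definedas N^2 - \Psi^2 - a\Psi - b$ satisfies an elliptic equation forcing $W$ to be constant; a direct computation using all three equations in \eqref{eq:pb} shows $\Delta W = \tfrac{2}{N}\langle \mathrm{D}N\,|\,\mathrm{D}W\rangle$ once one also uses that $\mathrm{D}\Psi$ and $\mathrm{D}N$ are proportional — which itself follows from the divergence-free fields above together with a first integral argument. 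So the real first step is to establish that $\mathrm{D}\Psi = f(N,\Psi)\,\mathrm{D}N$ pointwise, i.e.\ that level sets of $N$ and $\Psi$ coincide.

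Concretely, I would argue as follows. First, show $\Psi$ has no interior critical points where $\mathrm{D}N \neq 0$ and vice versa, by combining the second and third equations of \eqref{eq:pb}: on the open set where $\mathrm{D}N \neq 0$, the function $\Psi$ restricted to a level set of $N$ is harmonic for an induced operator, and a maximum-principle / unique-continuation argument (using connectedness of $M$ and of $\partial M$, together with the asymptotic conditions) forces $\Psi$ to be locally a function of $N$ alone. This is the step I expect to be the main obstacle, because it requires care at $\partial M$ (where $N=N_0$ may be $0$) and at infinity, and one must rule out the degenerate possibility that $\mathrm{D}N$ vanishes on an open set — here one invokes \Cref{rmk:complete} (completeness) and the asymptotics $N = 1 - \mu/|x| + o_2(|x|^{-1})$, $\Psi = o(1)$ to guarantee that neither function is constant. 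Once $\Psi = \Psi(N)$ globally, substituting into \eqref{eq:pb} turns the system into ODEs along the flow of $\mathrm{D}N$, and integrating $\tfrac{d}{dN}\!\left(\tfrac{1}{N}\tfrac{d\Psi}{dN}|\mathrm{D}N|^2\right)$-type expressions yields a first integral of the form $N^2 = \Psi^2 + a\Psi + b$ with $a,b$ constant on all of $M$.

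Having the quadratic relation $N^2 = \Psi^2 + a\Psi + b$ in hand, the remaining work is bookkeeping. Evaluate at $\partial M$: $N_0^2 = \Psi_0^2 + a\Psi_0 + b$. Evaluate in the asymptotic region using \eqref{eq:asymptotics}, \eqref{eq:asymmu0}: as $|x|\to\infty$, $N\to 1$ and $\Psi\to 0$, giving $b = 1$; matching the next order ($-2\mu/|x|$ on the left versus $a\kappa/|x| + 2\Psi\cdot(\text{leading }\Psi) $ on the right) is not even needed for the identity, only $b=1$ is. Then from the boundary equation, $a = \tfrac{N_0^2 - \Psi_0^2 - 1}{\Psi_0}$ provided $\Psi_0 > 0$, which is exactly \eqref{eq:identity} after rearranging: $N^2 = \Psi^2 + 1 + \left(\tfrac{N_0^2}{\Psi_0} - \Psi_0 - \tfrac{1}{\Psi_0}\right)\Psi$. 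Finally, for the range claim $0 < \Psi < \Psi_0$ on $M\setminus\partial M$: since $\Psi$ is a (non-constant) solution of the third equation in \eqref{eq:pb}, which is elliptic, the strong maximum principle applies; $\Psi$ attains neither an interior max nor min, and comparing its boundary value $\Psi_0$ with its limit $0$ at infinity sandwiches it strictly between them. The case $\Psi_0 = 0$ is immediate: then $\Psi$ is a solution vanishing both on $\partial M$ and at infinity, so by the maximum principle $\Psi \equiv 0$, which also forces $|\mathrm{D}\Psi| \equiv 0$ consistently with \eqref{eq:pb}. This completes the proof.
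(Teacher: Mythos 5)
There is a genuine gap, and it sits exactly where you flagged "the main obstacle": your route requires first proving that $\mathrm{D}\Psi$ and $\mathrm{D}N$ are pointwise proportional (equivalently that $\Psi$ is globally a function of $N$), and you only sketch this via "harmonicity on level sets plus unique continuation". That sketch does not obviously work — the third equation of \eqref{eq:pb} is a genuinely three-dimensional elliptic equation for $\Psi$, not an equation induced on level sets of $N$, and there is no a priori reason for $\Psi$ restricted to $\{N=\text{const}\}$ to satisfy anything useful. Even granting $\Psi=\Psi(N)$ locally, your subsequent ODE integration would still need an argument that the integration constants $a,b$ agree across different components of $\{\mathrm{D}N\neq0\}$ and across the critical set of $N$, which you do not supply.

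The deeper point is that this entire step is unnecessary, and your own computation almost shows why. For \emph{any} constants $a,b$, the quantity $X_{ab}=\tfrac{a}{2}(\Psi^2+1-N^2)+b\Psi$ satisfies
\begin{align*}
\Delta X_{ab}=\frac{1}{N}\langle\mathrm{D}X_{ab}\,|\,\mathrm{D}N\rangle
\end{align*}
by a direct computation from \eqref{eq:pb} alone: $\Delta(N^2)=2|\mathrm{D}\Psi|^2+2|\mathrm{D}N|^2$, $\Delta(\Psi^2)=\tfrac{2\Psi}{N}\langle\mathrm{D}N|\mathrm{D}\Psi\rangle+2|\mathrm{D}\Psi|^2$, and $\Delta\Psi=\tfrac{1}{N}\langle\mathrm{D}N|\mathrm{D}\Psi\rangle$ combine so that the $|\mathrm{D}\Psi|^2$ terms cancel and the rest matches $\tfrac{1}{N}\langle\mathrm{D}X_{ab}|\mathrm{D}N\rangle$ term by term — no proportionality of gradients is used (your stated identity $\Delta W=\tfrac{2}{N}\langle\mathrm{D}N|\mathrm{D}W\rangle$ also has the wrong coefficient; it should be $\tfrac{1}{N}$). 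The paper then chooses $a=2$ and $b=\tfrac{N_0^2}{\Psi_0}-\Psi_0-\tfrac{1}{\Psi_0}$ so that $X_{ab}$ vanishes on $\partial M$, notes $X_{ab}\to0$ at infinity, and applies the maximum/minimum principle to conclude $X_{ab}\equiv0$; the functional relation \eqref{eq:identity} is the \emph{output} of the maximum principle, not an input. One further point you gloss over and the paper treats explicitly: since $M$ is non-compact, the coefficient $\tfrac{1}{N}$ blows up on $\partial M$ when $N_0=0$, and $g_0$ need not extend to $\partial M$ in the degenerate case, the maximum principle must be localized to an interior neighborhood of a putative interior extremum before the strong maximum principle is invoked. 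Your final steps (pinning down $b=1$ from the asymptotics, the range claim $0<\Psi<\Psi_0$, and the case $\Psi_0=0$) are correct in spirit and match the paper, but they rest on the unestablished first step.
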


\begin{proof}
Using the electro-vacuum equations~\eqref{eq:pb}, a simple computation shows that, for any $a,b\in\R$, the quantity 
\begin{align*}
X_{ab}\definedas\frac{a}{2} \left( \Ele^2+1-N^2 \right)+b\Ele
\end{align*}
satisfies the elliptic PDE
\begin{align}\label{eq:elliptic_X}
\Delta X_{ab}&=\frac{1}{N}\langle \mathrm{D} X_{ab}\,|\,\mathrm{D} N\rangle_{g_{0}}.
\end{align}
From the asymptotic flatness assumption~\eqref{eq:asymptotics}, we know that $N\to 1$ and $\Ele\to 0$ at infinity, from which it follows that $X_{ab}\to 0$ at infinity for any choice of $a,b$. Assume first that $\Psi_{0}>0$. Choosing $a=2$, $b=\tfrac{N_0^2}{\Ele_0}-\Ele_0-\tfrac{1}{\Ele_0}$, $X_{ab}$ specializes to
\begin{align*}
X\definedas X_{a=2\,\,b=\frac{N_0^2}{\Ele_0}-\Ele_0-\tfrac{1}{\Ele_0}}&=\Ele^2+1-N^2+\left(\frac{N_0^2}{\Ele_0}-\Ele_0-\frac{1}{\Ele_0}\right)\Ele
\end{align*}
which satisfies $X=0$ on $\partial M$. If instead $\Psi_{0}=0$, we choose $a=0$, $b=1$ and obtain
\begin{align*}
Y\definedas X_{a=0\,\,b=1}&=\Psi
\end{align*}
which also satisfies $Y=0$ on $\partial M$. We will now apply the maximum and minimum principle to show that $X\equiv 0$ and $Y\equiv0$ on $M$. We have to be a bit careful because the manifold $M$ is not compact and in case $N_0=0$ the coefficient $\frac{1}{N}$ in~\eqref{eq:elliptic_X} blows up on $\partial M$, whereas in the degenerate horizon case $g_{0}$ does not extend smoothly to $\partial M$. However, this is easy to work around: suppose that $\{X>0\}\neq\emptyset$. Since $X=0$ at the boundary and $X\to 0$ at infinity, then $X$ must have a positive maximum at a point $p_0\in M\setminus\partial M$. Now let $\Omega\subset M\setminus\partial M$ be a neighborhood of $p_{0}$ with smooth boundary, large enough to contain some $q\in\Omega$ with $X(q)<X(p_{0})$ (if $X=X(p_{0})$ on any such neighborhood then $X$ is necessarily constant, a contradiction). We can hence apply the strong maximum principle to $X\vert_{\Omega}$ where in particular $N\neq0$, obtaining a contradiction. The cases $\{X<0\}\neq\emptyset$ and $\{Y>0\}\neq\emptyset$, $\{Y<0\}\neq\emptyset$ can be treated analogously. Similarly, for $\Psi_{0}>0$, using $Y$, one finds $0<\Psi<\Psi_{0}$ on $M\setminus\partial M$.
\end{proof}

From now on, under the assumptions of \Cref{pro:simplification}, we restrict to the case when $\Psi_{0}>0$. In view of \eqref{eq:identity} and of the nature of $N$ and $\Psi$ in the Reissner--Nordstr\"om solutions~\eqref{eq:RN} with positive charge (corresponding to $\Psi_{0}>0$), we aim at finding parameters $\mm\in\mathbb{R}$, $\qq>0$ satisfying
\begin{align}\label{eq:m_over_q}
\frac{N_0^2}{\Ele_0}-\Ele_0-\frac{1}{\Ele_0}&=-\frac{2\mm}{\qq}.
\end{align} 
Note that \eqref{eq:m_over_q} fixes the ratio between $\mm$ and $\qq$, while the actual values of one of $\mm$ and $\qq$ still needs to be defined (with the exception of $\frac{\mm}{\qq}=0$ when $\mm=0$ and $\qq>0$ is free). From \Cref{pro:simplification}, it then follows that
\begin{align}\label{eq: idPsiN}
N^2&=\Ele^2-\frac{2\mm}{\qq}\Ele+1
\end{align}
holds on $M$. In particular, the two potentials $N$, $\Ele$ are functions one of the other and we can equivalently rewrite \eqref{eq:pb} in terms of just $N$ by
\begin{align}\label{eq:pb_N}
\begin{dcases}
N \operatorname{Ric}&=\mathrm{D}^2 N-\frac{2N}{N^2+k}\,d N\otimes d N+\frac{N}{N^2+k}\,\vert\mathrm{D} N\vert^2\,\go, \\
\Delta N&=\frac{N}{N^2+k}\,\vert\mathrm{D} N\vert^2
\end{dcases}
\end{align}
on $M$, where
\begin{align}\label{def:k}
 k \definedas \frac{\mm^2}{\qq^2} -1.
 \end{align}
Notice that, by~\eqref{eq: idPsiN}, we have
\begin{align}\label{eq: Nk_geq_0}
    N^2 + k = \left( \Psi -\frac{\mm}{\qq} \right)^2 \geq 0
\end{align}
on $M$. When $k\leq0$, it is possible for the quantity $N^2+k$ to reach the value zero, in which case the equations~\eqref{eq:pb_N} are ill-posed. Nevertheless, from the relation~\eqref{eq: idPsiN} between $N$ and $\Ele$ and \eqref{eq: Nk_geq_0}, we obtain the identity
\begin{align}\label{eq:relation_DN_DPsi}
N^2\,\vert\mathrm{D} N\vert^2&=(N^2+k)\,\vert\mathrm{D}\Ele\vert^2
\end{align}
which can be used to make sense of the equations in~\eqref{eq:pb_N} even when $N^2+k$ vanishes.

Before moving on, let us note that via \eqref{eq:m_over_q}, the ratio of $\mm$ and $\qq>0$ is fully determined by the boundary values $N_{0}$ and $\Psi_{0}$. Consequently, transferring the definitions of sub-extremal, extremal, and super-extremal Reissner--Nordstr\"om spacetimes in terms of their mass and charge parameters $m$ and $q$ directly to general electrostatic electro-vacuum systems $(M,g_{0},N,\Psi)$ with constant lapse function $N_{0}$ and constant electric potential $\Psi_{0}$ on the connected boundary $\partial M$, we say that the system $(M,g_{0},N,\Psi)$ and / or its boundary $(\partial M,N_{0},\Psi_{0})$ is \emph{sub-extremal} if $\mm>\qq$, \emph{extremal} if $\mm=\qq$, and \emph{super-extremal} if $\mm<\qq$. Note that the comparison between $\mm$ and $\qq$ does not depend on how we will later choose $\mm$ and $\qq$ but just on their ratio fixed by  \eqref{eq:m_over_q}. A direct computation then shows that sub-extremality corresponds to $N_{0}^{2}<(1-\Psi_{0})^{2}$, extremality corresponds to $N_{0}^{2}=(1-\Psi_{0})^{2}$, and super-extremality corresponds to $N_{0}^{2}>(1-\Psi_{0})^{2}$. Recalling that the case $\Psi_{0}<0$ can be addressed by a sign change on $\Psi$, these conditions exactly coincide with those in \Cref{thm:photon_surface}.

In particular, if $\partial M$ is a horizon, then $(M,g_{0},N,\Psi)$ is sub-extremal if and only if $\Psi_{0}\neq1$, extremal if $\Psi_{0}=1$, and never super-extremal, as to be expected. From \eqref{eq:relation_DN_DPsi}, we see that in the limit to a non-degenerate horizon $\partial M$, using $2\vert{\rm D}\Psi\vert^{2} =N^{2}\operatorname{R}$ from \Cref{sec:setup} and the fact that the horizon is totally geodesic as well as the twice contracted Gauss equation, we get $0<2\vert{\rm D}N\vert^{2}=k\operatorname{R}^{\partial M}$ on $\partial M$, where $\operatorname{R}^{\partial M}$ denotes the scalar curvature of $\partial M$. This gives $k>0$ or in other words sub-extremality of any asymptotically flat electrostatic, electro-vacuum system $(M,g_{0},N,\Psi)$ bounded by a non-degenerate horizon. It also re-establishes the well-known fact that non-degenerate horizons must be round spheres in our context. 
Concerning the case of a degenerate horizon, it will be shown in \Cref{sub:degenerate} that $k=0$, namely that the system $(M,g_{0},N,\Psi)$ is extremal.

On the other hand, if $\partial M$ is a canonical time-slice of an equipotential photon surface with $\Psi_{0}>0$, the Hopf lemma\footnote{similarly modified as the maximum principle in the proof of \Cref{pro:simplification}.\label{foot:Hopf}} applied to the last equation in \eqref{eq:pb} and the assumptions $\Psi_{0}>0$ and $\Psi\to0$ at infinity by \eqref{eq:asymptotics} give that $\vert{\rm D}\Psi\vert>0$ on $\partial M$. Thus, \eqref{eq:relation_DN_DPsi} tells us that this time-slice is non-degenerate, i.e., $dN\neq0$ on $\partial M$, unless $N_{0}^{2}+k=0$ or equivalently $\Psi_{0}=\frac{\mm}{\qq}>0$ by \eqref{eq: Nk_geq_0}.

\subsection{Consequences of the simplification for degenerate equipotential photon surfaces}\label{sub:degenerateequi}
We continue to work under the assumptions of \Cref{pro:simplification}, with $\Psi_{0}>0$, and assume that $N_{0}>0$ so that we investigate an equipotential photon surface inner boundary. Let us first note that photon spheres are always non-degenerate in this context because they have positive mean curvature $\HHH_{\nu}$ by \cite[Lemma 2.6]{CedGalElec} (see also \cite[Theorem 3.1]{GalMiao}) and $c=1$ or in other words satisfy $N\HHH_{\nu}=2\nu(N)$ (see \cite[Proposition 5.3]{CedJaVi} and note that the non-degeneracy condition is in fact not used in the proof of the photon sphere case). 

Next, let us point out that all photon surfaces in Schwarzschild spacetimes (excluding the Minkowski spacetime) are non-degenerate as $dN\neq0$ holds everywhere in the spacetime, hence it is reasonable to assume non-degeneracy in vacuum equipotential photon surface uniqueness theorems. The same holds for sub-extremal and extremal as well as negative mass Reissner--Nordstr\"om spacetimes and accordingly to electro-vacuum equipotential photon surface uniqueness theorems pertaining only to those cases (and in particular to those only pertaining to the sub-extremal case). However, the situation changes when one considers super-extremal Reissner--Nordstr\"om spacetimes with positive mass: here, $dN=0$ holds precisely on $\{r=\frac{q^{2}}{m}\}$, so all photon surfaces passing through this radius are necessarily degenerate. Hence we do not get around also treating degenerate equipotential photon surfaces in this paper if we want to include all equipotential photon surfaces (or at least all those staying in $\{r\geq m\}$, see \Cref{rem:condition}, noting that $\frac{q^{2}}{m}>m$ when $q>m>0$).

In view of our strategy of proof, it is relevant to observe that unless $N_{0}^{2}+k=0$ holds on $\partial M$, we can appeal to the Hopf lemma\textsuperscript{\ref{foot:Hopf}} applied to the second equation in \eqref{eq:pb_N} to see that $\nu(N)\neq0$: in case $N^{2}+k>0$ in $M$, this follows from the asymptotic assumption $N\to1$ at infinity (as $N_{0}=1$ is ruled out by the maximum principle\textsuperscript{\ref{foot:max}} applied to the same equation, noting that $N\equiv1$ contradicts $\Psi_{0}>0$ and $\Psi\to0$ at infinity by \eqref{eq: idPsiN}). On the other hand, if $\{N^{2}+k=0\}\cap(M\setminus\partial M)\neq\emptyset$, the same argument can be given using some regular value $N_{1}$ of $N$ with $N_{0}>N_{1}>\sqrt{-k}$ (which exists by Sard's lemma). Hence time-slices of equipotential photon surfaces are necessarily \emph{non-degenerate} (meaning $dN\neq0$ on the time-slice) unless they satisfy $N_{0}=\sqrt{-k}$ in which case they are necessarily \emph{degenerate} by \eqref{eq:relation_DN_DPsi}, and indeed satisfy $dN\equiv0$. Before we proceed, let us agree that an (equipotential) photon surface is \emph{locally non-degenerate near a given time-slice $\{t=t_{0}\}$} if $dN\neq0$ on $\{t=t_{0}\}$ and hence in an open neighborhood of time-slices by continuity. 

Next, note that \Cref{thm: ps_properties} applies to non-degenerate time-slices (or locally non-degenerate equipotential photon surfaces) as its proof (see \cite[Proposition 5.5]{CedJaVi}) is completely local. It also readily applies to photon spheres in our context as these are non-degenerate as discussed above. If an equipotential photon surface has a single degenerate slice (with all nearby slices non-degenerate), then \Cref{thm: ps_properties} continues to apply by continuity. On the other hand, we know from~\cite{Tod} that $N$ is a real analytic function on $M$ and thus $dN$ cannot vanish on an open subset of $M$ (or of our spacetime $\R\times M$), unless $N\equiv1$ which we have excluded above. Thus a degenerate slice of an equipotential photon surface must have a neighborhood in which it is the only degenerate slice (as photon spheres are non-degenerate by the above). This ensures that we can appeal to  \Cref{thm: ps_properties} also in the degenerate case.

\subsection{Asymptotic consequences of the simplification}\label{sub:asymptoticsimplification}
We continue to work under the assumptions of \Cref{pro:simplification}, with $\Psi_{0}>0$. The goal of this section is to establish asymptotic decay on relevant quantities derived from $N$ and $g_{0}$ and to bootstrap the assumed decay on $\Psi$. Recalling our asymptotic assumptions from~\eqref{eq:asymptotics}, we compute the following asymptotic expansions for the derivatives of the lapse function $N$
\begin{align}\label{eq:asydN}
\frac{\d N}{\d x^i}&=\frac{\ADM x_{i}}{|x|^3}+o_1\left(|x|^{-2}\right),\\
\frac{\d^2 N}{\d x^i\d x^j}&=\frac{\ADM}{|x|^3}\left(\delta_{ij}-\frac{3x_{i} x_{j}}{|x|^2}\right)+o\left(|x|^{-3}\right)
\end{align}
as $\vert x\vert\to\infty$. Moreover, the Christoffel symbols $\Gamma_{ij}^{k}$ of $\go$ with respect to the asymptotic coordinates $(x^{i})$ satisfy $\Gamma_{ij}^k=o(|x|^{-1})$ and the area element $d\sigma$ of $g_{0}$ on coordinate spheres satisfies 
\begin{align}\label{eq:areaasy}
d\sigma&=\left(\vert x\vert^{2}+o\left(\vert x\vert^{2}\right)\right)d\sigma_{g_{\mathbb{S}^{2}}}
\end{align}
as $\vert x\vert\to\infty$. It follows immediately that
\begin{align}\label{eq:DNlength}
|\mathrm{D} N| &= \frac{\vert\ADM\vert}{|x|^2} + o_1(|x|^{-2}),\\\label{eq:DDNlength}
\DD_{ij}N&=\frac{\ADM}{|x|^3}\left(\delta_{ij}-\frac{3x_{i} x_{j}}{|x|^2}\right)+o\left(|x|^{-3}\right)
\end{align}
as $\vert x\vert\to\infty$. Next, let $\nu$ denote the unit normal to a coordinate sphere pointing towards infinity. By standard computations, it follows that
\begin{align}\label{eq:nuasy}
\begin{split}
\nu&=\frac{\partial}{\partial\vert x\vert}+\alpha^{i}\partial_{x^{i}},\\
\alpha^{i}&=o(1)
\end{split}
\end{align}
for $i=1,2,3$ as $\vert x\vert\to\infty$. Combining \eqref{eq:asymptotics} with \eqref{eq: idPsiN}, we find
\begin{align}\label{eq:Psidecaym}
\Psi(\Psi-\frac{2\mm}{\qq})&=-\frac{2\mu}{\vert x\vert}+o_{2}(\vert x\vert^{-1})
\end{align}
as $\vert x\vert\to\infty$. If $\mm\neq0$, $\Psi=o(1)$ then leads to $\Psi=\frac{\qq\mu}{\mm\vert x\vert}+o(\vert x\vert^{-1})$ and thus, plugging this back into \eqref{eq:Psidecaym}, we find
\begin{align}\label{eq:Psiasy}
\Psi&=\frac{\qq\mu}{\mm\vert x\vert}+o_{2}(\vert x\vert^{-1})
\end{align}
as $\vert x\vert\to\infty$. Next, still assuming $\mm\neq0$, by the last equation in \eqref{eq:pb}, \eqref{eq:areaasy}, \eqref{eq:Psiasy}, \eqref{eq:nuasy}, \eqref{eq:asymptotics}, and the divergence theorem, we find that
\begin{align}\label{eq:divpsi}
\int_{\Sigma}\frac{\nu(\Psi)}{N}\,d\sigma&=\lim_{R\to\infty}\int_{\mathbb{S}^{2}_{R}(0)}\frac{\nu(\Psi)}{N}\,d\sigma=-\frac{4\pi\qq\mu}{\mm}
\end{align}
for any closed, orientable surface $\Sigma$ homologous to $\partial M$, where $\nu$ now also denotes the $g_{0}$-unit normal to $\Sigma$ pointing towards the asymptotic end. On the other hand, applying the Hopf lemma\textsuperscript{\ref{foot:Hopf}} to the last equation in \eqref{eq:pb} and recalling $\Psi_{0}>0$ and $\Psi\to0$ at infinity, we find that $\nu(\Psi)=-\vert{\rm D}\Psi\vert<0$ if $\Sigma$ is sufficiently close to $\partial M$ (say $\Sigma=\partial M$ in the non-degenerate horizon and in the equipotential photon surface case and $\Sigma=\{z=\varepsilon\}$ for sufficiently small $\varepsilon>0$ in the degenerate horizon case). Hence by $\qq>0$, we have in particular asserted that $\frac{\mu}{\mm}>0$ when $\mm\neq0$. Alternatively, one could argue via \eqref{eq:relation_DN_DPsi} and the Hopf lemma to get information about the decay of $\nu(\Psi)$, without first computing the decay of $\Psi$, as will be done below.

Let us now consider the case $\mm=0$ and suppose towards a contradiction that $\mu\neq0$. Then \eqref{eq:relation_DN_DPsi} and $\nu(N)^{2}=\vert{\rm D}N\vert^{2}$ combined with \eqref{eq:asymptotics}, \eqref{eq:DNlength}, and \eqref{def:k} giving $k=-1$ lead to
\begin{align}
\nu(\Psi)^{2}&=-\frac{\mu}{2\vert x\vert^{3}}+o(\vert x\vert^{-3})
\end{align}
as $\vert x\vert\to\infty$ and thus in particular $\mu<0$. Arguing again by the Hopf lemma\textsuperscript{\ref{foot:Hopf}}, one finds that $\nu(\Psi)<0$ asymptotically and hence 
\begin{align}
\nu(\Psi)&=-\sqrt{-\frac{\mu}{2\vert x\vert^{3}}}+o(\vert x\vert^{-\frac{3}{2}})
\end{align}
as $\vert x\vert\to\infty$. Applying the divergence theorem as in \eqref{eq:divpsi}, this leads to
\begin{align}
\int_{\Sigma}\frac{\nu(\Psi)}{N}\,d\sigma&=\int_{\mathbb{S}^{2}_{R}(0)}\frac{\nu(\Psi)}{N}\,d\sigma=-4\pi\sqrt{-\frac{\mu R}{2}}+o(\sqrt{R})
\end{align}
as $R\to\infty$ for any closed, orientable surface $\Sigma$ homologous to $\partial M$. Fixing $\Sigma$ as above, this leads to the desired contradiction. We have hence proved the following lemma.
\begin{lemma}[Signs of $\mm$ and $\mu$]\label{lem:sign}
Under the assumptions of \Cref{pro:simplification} with $\Psi_{0}>0$, either $\mu=\mm=0$ or $\mu\mm>0$.
\end{lemma}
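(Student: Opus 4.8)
The plan is to play off the flux of a divergence-free vector field built from $\Psi$ against itself, evaluated near $\partial M$ and near spatial infinity. The starting point is that the third equation of \eqref{eq:pb} is exactly the statement that $\operatorname{div}\!\big(\tfrac{1}{N}\mathrm{D}\Psi\big)=\tfrac{\Delta\Psi}{N}-\tfrac{\langle\mathrm{D}N,\mathrm{D}\Psi\rangle}{N^{2}}=0$ on $M\setminus\partial M$ (where $N>0$). Hence, for any closed orientable surface $\Sigma$ homologous to $\partial M$ and lying in $M\setminus\partial M$, the quantity $\int_{\Sigma}\tfrac{\nu(\Psi)}{N}\,d\sigma$ (with $\nu$ the unit normal pointing toward the asymptotic end) is independent of $\Sigma$; in the degenerate horizon case, where $g_{0}$ does not extend to $\partial M$, one simply takes $\Sigma$ inside the cylindrical end, e.g.\ $\Sigma=\{z=\varepsilon\}$. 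I would first fix the sign of this constant: by \Cref{pro:simplification} (with $\Psi_{0}>0$) one has $0<\Psi<\Psi_{0}$ on $M\setminus\partial M$, so $\Psi$ attains its maximum on $\partial M$ and, by the Hopf lemma applied to the third equation of \eqref{eq:pb} (modified near a horizon exactly as the maximum principle is modified in the proof of \Cref{pro:simplification}, or alternatively by working on a regular level set $\{\Psi=\Psi_{1}\}$ with $\Psi_{1}\in(0,\Psi_{0})$), one gets $\nu(\Psi)=-\vert\mathrm{D}\Psi\vert<0$ on the chosen $\Sigma$. Thus the constant flux $\int_{\Sigma}\tfrac{\nu(\Psi)}{N}\,d\sigma$ is strictly negative.

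Next, in the case $\mm\neq0$, I would compute the flux at infinity. The algebraic relation \eqref{eq: idPsiN}, namely $N^{2}=\Psi^{2}-\tfrac{2\mm}{\qq}\Psi+1$, combined with the expansion $N=1-\tfrac{\mu}{\vert x\vert}+o_{2}(\vert x\vert^{-1})$ from \eqref{eq:asymptotics}, gives $\Psi\big(\Psi-\tfrac{2\mm}{\qq}\big)=-\tfrac{2\mu}{\vert x\vert}+o_{2}(\vert x\vert^{-1})$. Since $\Psi=o(1)$ and $\tfrac{2\mm}{\qq}\neq0$, one reads off $\Psi=\tfrac{\qq\mu}{\mm\vert x\vert}+o(\vert x\vert^{-1})$ and then, feeding this back into the relation, upgrades to $\Psi=\tfrac{\qq\mu}{\mm\vert x\vert}+o_{2}(\vert x\vert^{-1})$. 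Plugging this together with \eqref{eq:areaasy} and \eqref{eq:nuasy} into the flux and letting the coordinate sphere radius tend to infinity yields $\int_{\Sigma}\tfrac{\nu(\Psi)}{N}\,d\sigma=-\tfrac{4\pi\qq\mu}{\mm}$. Comparing with strict negativity of the flux and using $\qq>0$ forces $\tfrac{\mu}{\mm}>0$, i.e.\ $\mu\mm>0$.

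Finally, in the case $\mm=0$ I would argue by contradiction, assuming $\mu\neq0$. Now $k=-1$ by \eqref{def:k}, so \eqref{eq: Nk_geq_0} reads $\Psi^{2}=N^{2}-1$; the expansion of $N$ then forces $\mu<0$ and, using $\Psi>0$ in the interior, $\Psi=\sqrt{-\tfrac{2\mu}{\vert x\vert}}+o(\vert x\vert^{-1/2})$. Inserting this into \eqref{eq:relation_DN_DPsi} together with \eqref{eq:DNlength} gives $\nu(\Psi)^{2}=\vert\mathrm{D}\Psi\vert^{2}=-\tfrac{\mu}{2\vert x\vert^{3}}+o(\vert x\vert^{-3})$, hence $\nu(\Psi)=-\sqrt{-\tfrac{\mu}{2\vert x\vert^{3}}}+o(\vert x\vert^{-3/2})$ by the Hopf lemma as before. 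Evaluating the flux over a coordinate sphere of radius $R$ then gives $-4\pi\sqrt{-\tfrac{\mu R}{2}}+o(\sqrt{R})$, which tends to $-\infty$ as $R\to\infty$ — contradicting that this flux is a finite constant. Hence $\mu=0$, and the dichotomy is established.

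I expect the main obstacle to be the asymptotic bookkeeping under the deliberately weak decay \eqref{eq:asymptotics}: one must check that, after differentiating the expansion of $\Psi$ and contracting with $\nu$, the error terms are genuinely $o(\vert x\vert^{-2})$ (resp.\ $o(\vert x\vert^{-3/2})$), so that they contribute only $o(1)$ (resp.\ $o(\sqrt{R})$) to the integral over coordinate spheres — this is precisely where the $o_{2}$-control on $N$ (and, when $\mm=0$, on $\Psi$) together with the $o_{1}$-control on the metric perturbation in \eqref{eq:asymptotics} and \eqref{eq:nuasy} enter. A secondary point requiring care is the degenerate horizon end, where $g_{0}$ degenerates at $\partial M$ so that the flux identity and the Hopf lemma have to be applied on an interior slice $\{z=\varepsilon\}$ rather than on $\partial M$ itself.
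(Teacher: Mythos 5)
Your argument is correct and follows essentially the same route as the paper: the conserved flux $\int_{\Sigma}\frac{\nu(\Psi)}{N}\,d\sigma$ of the divergence-free field $\frac{1}{N}\mathrm{D}\Psi$, its strict negativity near $\partial M$ via the Hopf lemma, the bootstrap $\Psi=\frac{\qq\mu}{\mm\vert x\vert}+o_{2}(\vert x\vert^{-1})$ when $\mm\neq0$, and the divergent-flux contradiction via \eqref{eq:relation_DN_DPsi} when $\mm=0$ are all exactly the steps of \Cref{sub:asymptoticsimplification}. The only cosmetic difference is that you deduce $\mu<0$ in the $\mm=0$ case from $\Psi^{2}=N^{2}-1\geq0$ rather than from non-negativity of $\nu(\Psi)^{2}$, which is equivalent.
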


Using the additional decay assumption \eqref{eq:asymmu0} we are making in case $\mu=0$ combined with \eqref{eq:nuasy}, we find the improved decay
\begin{align}\label{eq:nupsi}
\nu(\Psi)&=-\frac{\kappa}{\vert x\vert^{2}}+o(\vert x\vert^{2})
\end{align}
as $\vert x\vert\to\infty$. Hence the above divergence theorem argument gives
\begin{align}\label{eq:foundkappa}
\int_{\Sigma}\frac{\nu(\Psi)}{N}\,d\sigma&=-4\pi\kappa
\end{align}
for any closed, orientable surface $\Sigma$ homologous to $\partial M$. Again, by the Hopf lemma\textsuperscript{\ref{foot:Hopf}}, this proves the following lemma.
\begin{lemma}[Sign of $\kappa$]\label{lem:signkappa}
Under the assumptions of \Cref{pro:simplification} with $\Psi_{0}>0$, assuming $\mu=\mm=0$ and \eqref{eq:asymmu0}, we have $\kappa>0$.
\end{lemma}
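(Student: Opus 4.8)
\textbf{Proof proposal for Lemma~\ref{lem:signkappa}.}

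The plan is to mimic the divergence-theorem argument already used in the $\mu=\mm=0$ discussion, but now exploiting the stronger asymptotic hypothesis \eqref{eq:asymmu0}. First, I would record that the third equation in \eqref{eq:pb}, namely $\Delta\Ele=\tfrac1N\langle{\rm D}N\,|\,{\rm D}\Ele\rangle$, can be rewritten as $\operatorname{div}\!\left(\tfrac1N\,{\rm D}\Ele\right)=0$ on $M\setminus\partial M$ (directly, or after noting $\operatorname{div}(N^{-1}{\rm D}\Psi)=N^{-1}\Delta\Psi-N^{-2}\langle{\rm D}N\,|\,{\rm D}\Psi\rangle=0$). Applying the divergence theorem on the region bounded by a surface $\Sigma$ homologous to $\partial M$ and a large coordinate sphere $\mathbb{S}^2_R(0)$ shows that the flux $\int_{\Sigma}\frac{\nu(\Psi)}{N}\,d\sigma$ is independent of $\Sigma$ within this homology class, equalling its value computed in the limit $R\to\infty$; this is exactly \eqref{eq:foundkappa}, whose justification I would take from \eqref{eq:nupsi} and \eqref{eq:nuasy} together with the area expansion \eqref{eq:areaasy} and $N\to1$, so that $\int_{\mathbb{S}^2_R(0)}\frac{\nu(\Psi)}{N}\,d\sigma\to-4\pi\kappa$. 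Here I should be mildly careful in the degenerate-horizon case about what "$\Sigma$ homologous to $\partial M$" means — as in the earlier arguments, one replaces $\partial M$ by a level set $\{z=\varepsilon\}$ for small $\varepsilon>0$ — but this is the same workaround used throughout the excerpt and needs no new idea.

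The second ingredient is the sign of $\nu(\Psi)$ on a near-boundary surface $\Sigma$. By \Cref{pro:simplification} with $\Psi_0>0$ we have $0<\Psi<\Psi_0$ on $M\setminus\partial M$ with $\Psi=\Psi_0$ on $\partial M$, and $\Psi\to0$ at infinity; moreover $\Psi$ solves the elliptic equation $\Delta\Psi=\tfrac1N\langle{\rm D}N\,|\,{\rm D}\Psi\rangle$. Choosing $\Sigma=\partial M$ (in the non-degenerate horizon and equipotential photon surface cases) or $\Sigma=\{z=\varepsilon\}$ (in the degenerate horizon case), the Hopf lemma\textsuperscript{\ref{foot:Hopf}} — applied exactly as the maximum principle was applied in the proof of \Cref{pro:simplification} to handle the non-smoothness or blow-up of $N^{-1}$ at the boundary — forces the outward-pointing (towards the asymptotic end) normal derivative to be strictly negative: $\nu(\Psi)=-|{\rm D}\Psi|<0$ on $\Sigma$. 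Consequently $\int_{\Sigma}\frac{\nu(\Psi)}{N}\,d\sigma<0$ since $N>0$ on $\Sigma$ (in the non-degenerate cases $N\vert_{\partial M}=N_0>0$ when we are in the photon surface situation; in the horizon situation $N_0=0$, but then we integrate over $\Sigma=\{z=\varepsilon\}$ where $N>0$). Combining this strict inequality with \eqref{eq:foundkappa} yields $-4\pi\kappa<0$, i.e.\ $\kappa>0$, as claimed.

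The only genuinely delicate point is the application of the Hopf lemma at the boundary, since either the coefficient $N^{-1}$ blows up (when $N_0=0$) or the metric $g_0$ fails to extend smoothly (degenerate horizon); but this is precisely the situation already navigated in the proof of \Cref{pro:simplification}, and the same device — passing to a surface strictly inside $M$, or using a regular level set of $N$ — applies verbatim. I expect no other obstacle: the flux computation is routine given \eqref{eq:nupsi}, and the homological invariance of the flux is immediate from the divergence form of the potential equation.
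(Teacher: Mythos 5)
Your proposal is correct and follows essentially the same route as the paper: the paper likewise derives the improved decay \eqref{eq:nupsi} from \eqref{eq:asymmu0}, obtains the homology-invariant flux identity \eqref{eq:foundkappa} via the divergence theorem applied to $\operatorname{div}(N^{-1}{\rm D}\Psi)=0$, and concludes $\kappa>0$ from the Hopf lemma giving $\nu(\Psi)<0$ on a surface near $\partial M$ (taking $\Sigma=\{z=\varepsilon\}$ in the degenerate horizon case). The points you flag as delicate are exactly the ones the paper handles in the same way, so no gap remains.
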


\section{The cylindrical ansatz}\label{sec:cyl}
In this section, we will set up the cylindrical ansatz we will use in all the different parts of the proof of \Cref{thm:BH,thm:photon_surface}. We will continue to work under the assumptions of \Cref{pro:simplification}, with $\Psi_{0}>0$. The analysis of this section is performed for any pair of parameters $\mm\in\R$ and $\qq>0$ whose ratio satisfies relation~\eqref{eq:m_over_q}. We will later show how to fix the values of $\mm$, $\qq$ appropriately.

\subsection{Setting up the cylindrical ansatz}\label{sub:conformal_change}
In the spirit of~\cite{BorMazII} and inspired by the explicit formula for $N$ as a function of the radial coordinate for the Reissner--Nordstr\"om solution~\eqref{eq:RN}, we define the {\em pseudo-radial function $\pra\colon M\to\R$} as a function satisfying the following relation
\begin{align}\label{eq:relation_N_psi}
N^2&=1-\frac{2\mm}{\pra}+\frac{\qq^2}{\pra^2},
\end{align}
or equivalently
\begin{align}
N^{2}+k&=\left(\frac{\mm}{\qq}-\frac{\qq}{\rho}\right)^{2}.
\end{align}
One can solve equation~\eqref{eq:relation_N_psi} explicitly to find two choices for $\pra$, namely
\begin{align}\label{eq:rho+}
\pra_+&=\frac{\qq^2}{\mm + \qq\sqrt{N^2+k}},\\\label{eq:rho-}
\pra_-&=\frac{\qq^2}{\mm - \qq\sqrt{N^2+k}}.
\end{align}

Before analyzing the properties of these pseudo-radial functions, let us explain when we will use which one. The choice that we will use for most of this work is $\pra_-$. In fact it will be the one we will consider when dealing with horizons (see \Cref{sub:horizon_consequences} for the non-degenerate and \Cref{sub:degenerate} for the degenerate case) and with both the sub-extremal and extremal cases for equipotential photon surfaces (see \Cref{sub:proof_photon_standard}). The reason for this is that we expect the pseudo-radial function to mimic the behaviour of the radial coordinate $r$ in the Reissner--Nordstr\"om solution, so in particular we want the pseudo-radial function to diverge at infinity, see \Cref{lem:rho+,lem:rho-}.

On the other hand, when dealing with the super-extremal case for equipotential photon surfaces, there will be two important occasions in which the choice $\rho_+$ will actually be the natural one. The first case is when $\mm\leq0$: in this case the behaviour at infinity of $\rho_-$ and $\rho_+$ is reversed, and $\rho_+$ is this time the one diverging as expected. Another, more subtle, case is when the \emph{critical set} $\C=\{ N^2 + k = 0\}$ is non-empty: in this case, we will show that $\C$ divides the manifold into two pieces, an asymptotically flat one (where the correct choice for the pseudo-radial function is again $\rho_-$) and a compact one (where the right choice is now $\rho_+$). These special cases will be discussed in \Cref{sub:neg_mass} and \Cref{sub:super-extremal_Cnonempty}, respectively. In the super-extremal case with $\C=\emptyset$ and $\mm>0$, we still use $\rho_{-}$ (see \Cref{subsec:superextremalnocrit}).

Let us now summarize the properties of the two pseudo-radial functions $\rho_{\pm}$ for later convenience. As already visible in the reduced equations~\eqref{eq:pb_N}, the critical set $\C$ plays a special role also for the smoothness of the pseudo-radial functions. Note that $\C$ can a priori contain $\partial M$: this happens for a degenerate horizon boundary where $N_{0}=0$ and $k=0$ (see \Cref{sec:cyl}) and for a degenerate time-slice of an equipotential photon surface on which $\Psi_{0}=\frac{\mm}{\qq}$. For a non-degenerate horizon, $\C\cap\partial M=\emptyset$ because $k>0$ as discussed at the end of \Cref{sec:cyl}. Also, for a non-degenerate equipotential photon surface boundary $\partial M$, $\C\cap\partial M=\emptyset$ by \eqref{eq:relation_DN_DPsi}.

\begin{lemma}[Properties of $\rho_{+}$]\label{lem:rho+}
The pseudo-radial function $\rho_{+}$ is well-defined and continuous on $M$ and smooth away from $\C$. Moreover, $\rho_{+}>0$ on $M$. If $\mm>0$, one has $\rho_+\to \frac{\qq^2}{2\mm}$ in the asymptotically flat end of $M$. If $\mm\leq0$, one has $\rho_{+}\to\infty$  in the asymptotically flat end of $M$. Last but not least, $\rho_{+}(1+N)-\mm>0$ on $M$ if and only if $\mm\leq0$ or $\qq>\mm>0$.
\end{lemma}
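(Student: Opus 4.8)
The plan is to derive everything from one fact: the denominator $\mm+\qq\sqrt{N^{2}+k}$ in \eqref{eq:rho+} is strictly positive on all of $M$. If $\mm>0$ this is immediate from $N^{2}+k\geq0$ (see \eqref{eq: Nk_geq_0}). If $\mm\leq0$ I would first record the sharper bound $N>1$ on $M$: by \Cref{pro:simplification} one has $0<\Ele\leq\Ele_{0}$ on $M$ with $\Ele_{0}>0$, and \eqref{eq: idPsiN} gives $N^{2}-1=\Ele\bigl(\Ele-\tfrac{2\mm}{\qq}\bigr)$, which is positive because $\mm\leq0$ forces $\Ele-\tfrac{2\mm}{\qq}\geq\Ele>0$. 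Then, using $\qq^{2}k=\mm^{2}-\qq^{2}$ from \eqref{def:k}, one gets $\qq^{2}(N^{2}+k)=\mm^{2}+\qq^{2}(N^{2}-1)>\mm^{2}$, so $\qq\sqrt{N^{2}+k}>|\mm|\geq-\mm$ and again the denominator is positive. Since $N^{2}+k$ is continuous and nonnegative and $t\mapsto\sqrt{t}$ is continuous on $[0,\infty)$ and smooth on $(0,\infty)$, this at once yields that $\pra_{+}=\qq^{2}/(\mm+\qq\sqrt{N^{2}+k})$ is well-defined, positive and continuous on $M$ and smooth on $M\setminus\C=\{N^{2}+k>0\}$.

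For the behaviour at infinity I would simply pass to the limit $|x|\to\infty$ in \eqref{eq:rho+}. By \eqref{eq:asymptotics} and \eqref{eq: Nk_geq_0}, $N\to1$ and $\Ele\to0$, hence $N^{2}+k\to\mm^{2}/\qq^{2}$ and $\sqrt{N^{2}+k}\to|\mm|/\qq$. If $\mm>0$ the denominator tends to $\mm+\qq\cdot\tfrac{\mm}{\qq}=2\mm>0$, so $\pra_{+}\to\qq^{2}/(2\mm)$. If $\mm\leq0$ the denominator tends to $\mm+\qq\cdot\tfrac{-\mm}{\qq}=0$ while staying positive on $M$, so $\pra_{+}\to+\infty$. (One can make the $\mm\leq0$ case even more transparent: since $\Ele>0$ and $-\mm/\qq\geq0$ force $\sqrt{N^{2}+k}=\bigl|\Ele-\tfrac{\mm}{\qq}\bigr|=\Ele-\tfrac{\mm}{\qq}$ in that case, one has $\mm+\qq\sqrt{N^{2}+k}=\qq\Ele$, i.e.\ $\pra_{+}=\qq/\Ele$, and $\Ele\to0$ at infinity.)

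Finally, for the last assertion I would clear denominators and use $\mm^{2}=\qq^{2}(k+1)$ to write
\begin{align*}
\pra_{+}(1+N)-\mm=\frac{\qq^{2}(1+N)-\mm\left(\mm+\qq\sqrt{N^{2}+k}\right)}{\mm+\qq\sqrt{N^{2}+k}}=\frac{\qq\left[\qq(N-k)-\mm\sqrt{N^{2}+k}\right]}{\mm+\qq\sqrt{N^{2}+k}}.
\end{align*}
As the denominator and $\qq$ are positive, the sign of $\pra_{+}(1+N)-\mm$ equals that of $g(N)\definedas\qq(N-k)-\mm\sqrt{N^{2}+k}$, a genuine function of $N$ since $N^{2}+k=N^{2}+\tfrac{\mm^{2}}{\qq^{2}}-1$ by \eqref{eq: idPsiN}. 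The condition ``$\mm\leq0$ or $\qq>\mm>0$'' is exactly ``$\mm<\qq$'', so I would split along the trichotomy $\mm<\qq$ / $\mm=\qq$ / $\mm>\qq$. If $\mm<\qq$ I claim $g(N)>0$ on $M$: for $\mm\leq0$, the bound $N>1$ gives $\sqrt{N^{2}+k}\geq|\mm|/\qq$, hence $-\mm\sqrt{N^{2}+k}\geq\mm^{2}/\qq$ and $g(N)\geq\qq(N-k)+\mm^{2}/\qq=\qq(N+1)>0$; for $0<\mm<\qq$ one has $k<0$, so $\sqrt{N^{2}+k}<N$ and $g(N)>\qq(N-k)-\mm N=(\qq-\mm)N+\tfrac{\qq^{2}-\mm^{2}}{\qq}>0$. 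If $\mm=\qq$ then $k=0$ and $g(N)=\qq N-\qq\,|N|=0$ on $M$ (as $N\geq0$), so $\pra_{+}(1+N)-\mm\equiv0$, not positive. If $\mm>\qq>0$, then $g$ is continuous in $N$ with $g(1)=\qq(1-k)-\mm\sqrt{1+k}=\tfrac{2(\qq^{2}-\mm^{2})}{\qq}<0$, so since $N\to1$ in the asymptotic end by \eqref{eq:asymptotics}, $\pra_{+}(1+N)-\mm<0$ near the end of $M$, again not positive. This establishes the stated equivalence.

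The only mildly delicate point is recognizing that the sharp \emph{a priori} inequality $N>1$ (not merely $N>0$) on $M$ in the case $\mm\leq0$ is precisely the input needed, both for positivity of the denominator and for the sign of $g$; everything else is a direct substitution using \Cref{pro:simplification}, \eqref{eq: idPsiN}, \eqref{def:k} and the asymptotics \eqref{eq:asymptotics}.
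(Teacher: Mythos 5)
Your proof is correct and takes essentially the same route as the paper's: a direct verification from the explicit formula \eqref{eq:rho+}, using \Cref{pro:simplification}, \eqref{eq: idPsiN}, \eqref{def:k}, and the asymptotics \eqref{eq:asymptotics}. The only differences are in the details --- the paper rules out a vanishing denominator by deducing $\Psi\leq0$ at such a point directly instead of via your auxiliary bound $N>1$ for $\mm\leq0$, obtains positivity of $\rho_{+}$ from continuity, nonvanishing, and the behavior at infinity, and for the last claim reduces $\rho_{+}(1+N)-\mm>0$ \emph{pointwise} to $\qq^{2}>\mm^{2}$ when $\mm>0$ (so no appeal to the asymptotic end is needed for the converse), but your case analysis, including the use of the end to exhibit failure when $\mm>\qq$, is equally valid.
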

\begin{proof}
To see that $\rho_{+}$ is well-defined and continuous on $M$, we observe that by \eqref{eq: Nk_geq_0} and $\qq>0$, a zero of its denominator can only occur for $\mm\leq0$ and hence when $\Psi=\frac{2\mm}{\qq}\leq0$, in contradiction to \Cref{pro:simplification}. By its definition in \eqref{eq:rho+}, $\rho_{+}$ is smooth away from $\C$. The asymptotic claims are immediate when recalling that $N\to1$ in the asymptotically flat end, taking into account the definition of $k$ in \eqref{def:k}. Positivity of $\rho_{+}$ is immediate from the asymptotic behavior upon noticing that $\rho_{+}\neq0$ and recalling \eqref{def:k}, $\qq>0$. Finally, $\rho_{+}(1+N)-\mm>0$ obviously holds by positivity of $\rho_{+}$ provided that $\mm\leq0$. Assuming $\mm>0$, one computes that $\rho_{+}(1+N)-\mm>0$ is equivalent to $\qq^{2}>\mm^{2}$ which by $\qq>0$ and $\mm>0$ proves the claim.
\end{proof}

\begin{lemma}[Properties of $\rho_{-}$]\label{lem:rho-}
The pseudo-radial function $\rho_{-}$ is well-defined and continuous on $M\setminus\{N=1\}$ and well-defined and continuous on $\{N=1\}$ if and only if $\mm<0$. It is smooth in its domain of definition away from $\C$. Moreover, if $\mm>0$, $\rho_{-}>0$ on $M\cap\{N<1\}$ while $\rho_{-}<0$ on $M\cap\{N>1\}$. On the other hand, $\rho_{-}<0$ on $M$ if $\mm\leq0$. If $\mm>0$, one has $\rho_-\to \infty$ in the asymptotically flat end of $M$. If $\mm\leq0$, one has $\rho_{-}\to\frac{\qq^2}{2\mm}$  in the asymptotically flat end of $M$. Last but not least, $\rho_{-}(1+N)-\mm>0$ on $M$ if $\mm\leq0$, $\rho_{-}(1+N)-\mm<0$ on $M\cap\{N>1\}$ if $\mm>0$, and  $\rho_{-}(1+N)-\mm>0$ on $M\cap\{N<1\}$ if $\mm>0$.
\end{lemma}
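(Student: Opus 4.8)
The plan is to follow the template of the proof of \Cref{lem:rho+}, analyzing $\rho_{-}$ through its denominator $D_{-}\definedas\mm-\qq\sqrt{N^2+k}$ and the elementary identity $D_{-}\,(\mm+\qq\sqrt{N^2+k})=\mm^{2}-\qq^{2}(N^2+k)=\qq^{2}(1-N^2)$, which is immediate from \eqref{def:k}. Since the other factor $\mm+\qq\sqrt{N^2+k}$ is positive on $M$ by (the proof of) \Cref{lem:rho+}, this identity forces $D_{-}$ to have the same sign as $1-N^{2}$, equivalently as $1-N$. Consequently $D_{-}$ vanishes exactly on $\{N=1\}$, which gives that $\rho_{-}=\qq^{2}/D_{-}$ is well-defined and continuous on $M\setminus\{N=1\}$ (note $\sqrt{N^2+k}$ is continuous on all of $M$ by \eqref{eq: Nk_geq_0}), smooth away from $\C\cup\{N=1\}$ since there $\sqrt{N^2+k}>0$ and $D_{-}\neq0$, and that $\C\cap\{N=1\}=\emptyset$ by \eqref{def:k}. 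I would then use \eqref{eq: idPsiN}, i.e.\ $N^{2}-1=\Psi(\Psi-\tfrac{2\mm}{\qq})$, to observe that $N>1$ on $M$ when $\mm\leq0$ (because $\Psi>0$ by \Cref{pro:simplification}), so in that regime $\{N=1\}=\emptyset$, $\rho_{-}$ is globally defined, and $\rho_{-}<0$ by the sign of $D_{-}$; while for $\mm>0$ the sign of $\rho_{-}=\qq^{2}/D_{-}$ on $\{N<1\}$ and on $\{N>1\}$ is immediate. The claim that $\rho_{-}$ is well-defined at a point with $N=1$ iff $\mm<0$ is read off from \eqref{eq:rho-}, which there evaluates to $\qq^{2}/(\mm-|\mm|)$.

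For the asymptotics, recall $N\to1$ at infinity, so $N^2+k\to\mm^{2}/\qq^{2}$ and $D_{-}\to\mm-|\mm|$. If $\mm>0$ this limit is $0$ and, since $N<1$ near infinity by \eqref{eq:asymptotics} together with $\mu>0$ (which holds for $\mm>0$ by \Cref{lem:sign}), $D_{-}>0$ near infinity, whence $\rho_{-}\to+\infty$. If $\mm<0$, the limit is $2\mm\neq0$ so $\rho_{-}\to\qq^{2}/(2\mm)$; the borderline $\mm=0$ can be handled separately since then $\rho_{-}=-\qq/\Psi$ with $\Psi\to0^{+}$ by \eqref{eq:asymmu0} and \Cref{lem:signkappa}, which is the $\mm\to0^{-}$ limit of $\qq^{2}/(2\mm)$.

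For the final clause I would compute, directly from \eqref{eq:rho-} and \eqref{def:k}, the identity
\[
\rho_{-}(1+N)-\mm \;=\; N(1+N)\,\rho_{-}+\qq\sqrt{N^2+k}\;=\;-\,\frac{\mm N+\qq\sqrt{N^2+k}}{N-1}\,,
\]
and then read off the sign: that of $N-1$ is controlled by \eqref{eq: idPsiN} (in particular $N>1$ on $M$ when $\mm\leq0$, and the split into $\{N<1\}$, $\{N>1\}$ is as in the statement for $\mm>0$), while that of the numerator $\mm N+\qq\sqrt{N^2+k}$ follows by comparing $\qq^{2}(N^2+k)$ with $\mm^{2}N^{2}$, which via \eqref{def:k} reduces to the sign of $k(N^{2}-1)$ — in particular the numerator is positive whenever $\mm>0$. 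Combining the two sign computations in each regime yields the stated inequalities.

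I expect the only mildly delicate point to be the bookkeeping around $\{N=1\}$ and the critical set $\C$: one must pin down $\{N=1\}$ via \eqref{eq: idPsiN} (and see that it is empty when $\mm\leq0$) and use the product identity above together with \Cref{lem:rho+} to conclude that $D_{-}$ vanishes nowhere else. Once this is settled, the signs, the asymptotics and the final identity are routine manipulations with \eqref{eq: idPsiN}, \eqref{eq:relation_N_psi} and \eqref{def:k}.
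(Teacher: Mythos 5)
Your treatment of well-definedness, continuity, smoothness, the sign of $\rho_-$, and the asymptotics is correct, and in places cleaner than the paper's own argument: the product identity $(\mm-\qq\sqrt{N^2+k})(\mm+\qq\sqrt{N^2+k})=\qq^{2}(1-N^{2})$ packages in one line what the paper extracts from \Cref{pro:simplification} and \eqref{eq: idPsiN}, and your closed-form expression
\[
\rho_{-}(1+N)-\mm=-\frac{\mm N+\qq\sqrt{N^{2}+k}}{N-1}
\]
is a genuinely nicer route to the last clause than the paper's two-case estimate ($\qq\ge\mm>0$ versus $\mm>\qq>0$); I have checked this identity, it is correct, and it disposes of both $\mm>0$ sub-cases immediately since the numerator is then non-negative.

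The gap is in the sentence ``Combining the two sign computations in each regime yields the stated inequalities.'' For $\mm\le0$ it does not. In that regime $N>1$ on $M$ by \eqref{eq: idPsiN}, so the denominator $N-1$ is positive, and your own reduction gives $\bigl(\qq\sqrt{N^{2}+k}\bigr)^{2}-(\mm N)^{2}=-\qq^{2}k(N^{2}-1)$; hence for $\mm\le0$ the numerator $\mm N+\qq\sqrt{N^{2}+k}$ is \emph{positive} whenever $k<0$ (i.e.\ $\mm^{2}<\qq^{2}$), zero when $k=0$, and negative only when $k>0$. Your identity therefore yields $\rho_{-}(1+N)-\mm<0$ whenever $\mm\le0$ and $\mm^{2}<\qq^{2}$ --- the opposite of what the lemma asserts. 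The case $\mm=0$ makes this transparent: the claim then reads $\rho_{-}(1+N)>0$, which contradicts $\rho_{-}<0$ (asserted earlier in the same lemma); more generally the limit of $\rho_{-}(1+N)-\mm$ at infinity is $(\qq^{2}-\mm^{2})/\mm$, which is negative for $-\qq<\mm<0$. So this clause of the statement is itself false for $\mm\le0$ with $\mm^{2}\le\qq^{2}$ (the paper's own justification, ``by positivity of $\rho_{-}$ when $\mm\le0$,'' is inconsistent with $\rho_{-}<0$ and appears to be carried over from \Cref{lem:rho+}; the error is harmless downstream because the paper only ever uses $\rho_{+}$, and hence \Cref{lem:rho+}, when $\mm\le0$). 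A careful write-up should either restrict that clause to $\mm\le0$ with $\mm^{2}>\qq^{2}$, or explicitly flag the discrepancy, rather than assert that the computation delivers the stated sign.
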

\begin{proof}
To see that $\rho_{-}$ is well-defined and continuous away from $\{N=1\}$ and well-defined and continuous on $\{N=1\}$ if and only if $\mm<0$, note that by \eqref{eq: Nk_geq_0} and $\qq>0$, a zero of its denominator will occur if and only if $\mm\geq0$ and $\Psi=\frac{2\mm}{\qq}$ by \Cref{pro:simplification}. The second condition is equivalent to $N=1$ by \eqref{eq: idPsiN}. The smoothness claim readily follows from \eqref{eq:rho-}. A simple computation using \eqref{eq: idPsiN} gives that $\rho_{-}>0$ only holds if $\mm>0$ and $N<1$ but $\rho_{-}<0$ holds if $\mm\leq0$ or if $\mm>0$ and $N>1$. The asymptotic claims follows as in the proof of \Cref{lem:rho+}. Last but not least, $\rho_{-}(1+N)-\mm>0$ obviously holds on $M$ by positivity of $\rho_{-}$ when $\mm\leq0$. Similarly, $\rho_{-}(1+N)-\mm<0$ obviously holds on $M\cap\{N>1\}$ by negativity of $\rho_{-}$ when $\mm>0$. Finally, if $\qq\geq\mm>0$, one estimates 
\begin{align*}
\rho_{-}(1+N)-\mm&\geq\frac{\mm^{2}}{\mm-\qq\sqrt{N^{2}+k}}-\mm=\frac{\qq\sqrt{N^{2}+k}}{\mm-\qq\sqrt{N^{2}+k}}=\frac{\sqrt{N^{2}+k}}{\qq}\rho_{-}>0
\end{align*}
 on $M\cap\{N<1\}$ as $N^{2}+k>0$ where $N<1$ by \eqref{def:k}, and using $N\geq0$. On the other hand, if $\mm>\qq>0$,  one estimates
 \begin{align*}
\rho_{-}(1+N)-\mm&\geq\frac{\qq^{2}-\mm^{2}+\mm\qq\sqrt{k}}{\mm-\qq\sqrt{N^{2}+k}}
\end{align*}
 on $M\cap\{N<1\}$, again using $N\geq0$ and the fact that $k>0$ when $\mm>\qq$ by \eqref{def:k}. As $\mm>\qq$, 
$ \qq^{2}-\mm^{2}+\mm\qq\sqrt{k}>0$ is equivalent to $0<\mm^{2}-\qq^{2}<\mm\qq\sqrt{k}$ and thus, by squaring, equivalent to $\qq^{2}>0$ which is clearly true. This implies the last claim.
\end{proof}

In particular, \Cref{lem:rho+,lem:rho-} establish that we can use the pseudo-radial functions $\rho_{\pm}$ as explained in the above discussion, once we establish that $N\neq1$ on $M\setminus\partial M$ when using $\rho_{-}$ so that $\rho_{-}$ is well-defined. Moreover, \Cref{lem:rho+,lem:rho-} assert that once the pseudo-radial function $\pra_\pm$ has been chosen as explained above (and is well-defined), we can then introduce the {\em pseudo-affine functions} $\ffi_{\pm}\colon M\to\R$, defined by
\begin{align}\label{eq:ffi}
\ffi_\pm&\definedas\log\left[\pra_\pm(1+N)-\mm\right]
\end{align}
and the {\em cylindrical ansatz metric}
\begin{align}\label{eq:g}
g_{\pm}\definedas\frac{\go}{\pra_{\pm}^2}.
\end{align}
These are defined such that if $\pra_\pm$ is chosen to be the radial coordinate $r$ in the Reissner--Nordstr\"om solution of mass $m$ and charge $q>0$ as in the above explanation, one can check that $g_\pm$ is a round cylindrical metric of radius $1$ and $\ffi_\pm$ is an affine function such that $\nabla \ffi_\pm$ is the splitting direction for $g_{\pm}$, where $\nabla$ denotes the covariant derivative with respect to $g_{\pm}$. The cylindrical metric is smooth on $M\setminus\partial M$ and the pseudo-affine functions $\varphi_{\pm}$ are smooth away from $\C$ where well-defined. 

Quite surprisingly, almost all formulas that we obtain through local computations (without using any global insights) give the same expressions independently of whether we work with $\rho_+$ or $\rho_-$, as long as $\varphi_{\pm}$ is well-defined and $\rho_{\pm}>0$ (see \Cref{lem:rho+,lem:rho-}). This prompts us to avoid the subscript whenever possible, namely we simplify the notation as follows.

\begin{notation}\label{not1}
In order to ease the notation, when there is no risk of confusion, we will avoid to write explicitly the subscript $\pm$: we will simply write $\pra$, without explicitly saying which among the pseudo-radial functions~\eqref{eq:rho+} and~\eqref{eq:rho-} we are choosing. Analogously, we will denote by $\ffi$, $g$ the pseudo-affine function and the cylindrical ansatz metric with respect to the chosen pseudo-radial function $\pra$ (whenever $\varphi$ is well-defined and $\rho>0$, see \Cref{lem:rho+,lem:rho-}.Moreover, it will be understood that we can only use $\rho_{-}$ where $N\neq1$ (unless $\mm<0$) and that we stay away from the critical set $\C=\{N^{2}+k=0\}$ and from $\partial M$ unless explicitly saying otherwise.
\end{notation}

Now let $\pra$ be a pseudo-radial function, defined as in~\eqref{eq:rho+} or~\eqref{eq:rho-}, and let $\ffi$, $g$ be defined by~\eqref{eq:ffi} and~\eqref{eq:g}, respectively. We will investigate how the equations in \eqref{eq:pb_N} transform under the conformal change \eqref{eq:g}, or in other words to the new variables $(g,\varphi)$. The computations for a conformal change of this particular form have been discussed with some care in~\cite[Section~3]{BorMazII}. Starting from~\cite[Formula~(3.3)]{BorMazII}, with some calculations we obtain the following relation between $\nana\ffi$ and $\DD N$, where $\nabla$ and $\mathrm{D}$ denote the Levi--Civita connections of $g$ and $g_0$, respectively, obtaining
\begin{align}\label{eq:nana_ffi}
\nana\ffi&=\frac{\pra^2}{\mm\pra-\qq^2}\,\DD N+\frac{N\pra^4(3\mm\pra-4\qq^2)}{(\mm\pra-\qq^2)^3}\,dN\otimes dN-\frac{N\pra^4}{(\mm\pra-\qq^2)^2}\,|\mathrm{D}N|^2\,\go.
\end{align}
Note that $\mm\rho-\qq\neq0$ where $N^{2}+k\neq0$. Taking the trace of~\eqref{eq:nana_ffi} and recalling the second equation in~\eqref{eq:pb_N}, we find out that $\ffi$ is $g$-harmonic,
\begin{align}\label{eq:Deg_ffi}
\Delta_{g}\ffi&=0.
\end{align}
From \eqref{eq:ffi}, one computes
\begin{align}\label{eq:dphidN}
d\varphi&=\mp\frac{\rho_{\pm}}{q\sqrt{N^{2}+k}}\,dN
\end{align}
from which one obtains a very useful relation between $|\nabla\ffi|_{g}$ and $|\mathrm{D} N|$, namely
\begin{align}\label{eq:naffi_DNk}
|\nabla\ffi|_{g}^2&=\frac{\pra^4\,|\mathrm{D}N|^2}{\qq^2(N^2+k)}
 \end{align}
or equivalently 
\begin{align}\label{eq:naffi_DN}
|\nabla\ffi|_{g}^2&=\frac{\pra^6\,|\mathrm{D} N|^2}{(\mm \pra-\qq^2)^2}.
\end{align}
Another important relation is the one between the Hessians of $\pra$ and $\ffi$, namely
\begin{align}\label{eq:hessians}
\nana\pra&=\pra N \,\nana\ffi+(\pra-\mm) \,d\ffi\otimes d\ffi.
\end{align}
In particular, tracing the second equation and recalling $\Delta_{g}\ffi=0$, we get
\begin{align}\label{eq: laplacianRho}
   \Delta_{g}\pra&=(\pra-\mm)\,|\nabla\ffi|_{g}^2 
\end{align}
Finally, recall from~\cite[Theorem~1.159]{Besse} the formula for the conformal transformation of the Ricci tensor
\begin{align}
\operatorname{Ric}&=\Ric_{g}-\frac{1}{\pra}\,\nana\pra+\frac{2}{\pra^2}\,d\pra\otimes d\pra-\frac{1}{\pra}\,\Delta_{g}\pra\,g.
\end{align}
Using the first equation in~\eqref{eq:pb_N} to write down $\operatorname{Ric}_{g_{0}}$ as a function of $\DD N$ and using the above formul\ae~\eqref{eq:hessians} and~\eqref{eq:nana_ffi}, after some calculations we get
\begin{align}\label{eq:Ricg}
\Ric_{g}&=\frac{1}{N}\left(1-\frac{\mm}{\pra}\right)\nana\ffi-d\ffi\otimes d\ffi+|\nabla\ffi|_{g}^2\,g.
\end{align}
The reformulation \eqref{eq:Ricg}, \eqref{eq:Deg_ffi} of \eqref{eq:pb_N} in terms of $\pra$, $g$, and $\ffi$ will play a crucial role in the proofs of \Cref{thm:BH,thm:photon_surface}. Our strategy will be to exploit the geometry of the level sets, studied in \Cref{sub:geometry level sets,sub:Bochner}, and the properties of black holes and equipotential photon surfaces to prove that the quantity $|\nabla^2 \varphi|_g$ vanishes identically. This will lead us to conclude that  the solution is isometric to a (half) round cylinder (see for instance the final part of the proof of~\cite[Proposition~4.2]{Mazz}), which in turn will imply that $g_0$ is the spatial Reissner--Nordstr\"om metric.

\subsection{Geometry of the level sets}\label{sub:geometry level sets}
In this section, we will carefully study the geometry of the level sets of $N$, appealing to \Cref{not1}. To this end, let $\pra$ be a pseudo-radial function, defined as in~\eqref{eq:rho+} or~\eqref{eq:rho-}, and let $\ffi$, $g$ be defined by~\eqref{eq:ffi} and~\eqref{eq:g}, respectively. In this subsection we study the extrinsic curvature of a level set $\Sigma=\{N=s\}$. We start by discussing the case when $s$ is a regular value of the function $N$ and hence also of the function $\varphi$ by \eqref{eq:naffi_DN}. Let ${\rm n}\definedas\frac{\mathrm{D} N}{|\mathrm{D} N|}$ and ${\rm n}_g\definedas\frac{\nabla\ffi}{|\nabla\ffi|_g}$ be our choices of $\go$-unit normal and of $g$-unit normal to $\Sigma$, respectively, and let $\HHH$ and $\Hg$ be the mean curvatures of $\Sigma$ with respect to $\go$ and $g$ respectively. 

Since $\Sigma$ is also a level set of $\ffi$ and since $\varphi$ is $g$-harmonic by \eqref{eq:Deg_ffi}, arguing as for example in~\cite[Section~3.2]{BorMazII}, we deduce that
\begin{align}\label{eq:Hg_nanaffi}
\Hg|\nabla\ffi|_g&=-\nana\ffi({\rm n}_g,{\rm n}_g),\\\label{eq:H_DDN}
\HHH|\mathrm{D} N| &=\Delta N-\DD N({\rm n},{\rm n}).
\end{align}
Next, writing $\nana\ffi$ in terms of $\DD N$ using~\eqref{eq:hessians} and then using the first equation in~\eqref{eq:pb_N} to write $\DD N$ in terms of $\operatorname{Ric}$, we compute
\begin{align}\label{eq:Hg_Ric}
\Hg\,|\nabla\ffi|_g&=-\frac{N\pra^4}{\mm\pra-\qq^2}\operatorname{Ric}({\rm n},{\rm n})-\frac{2N\pra^6}{(\mm\pra-\qq^2)^2}\,|\mathrm{D} N|^2.
\end{align}
Recalling the first equation in~\eqref{eq:pb_N} and using~\eqref{eq:H_DDN}, we can rewrite \eqref{eq:Hg_Ric} as
\begin{align}\label{eq:H_Hg}
\Hg|\nabla\ffi|_g&=\frac{\pra^4}{\mm\pra-\qq^2}\left[\HHH\,|\mathrm{D} N|-\frac{2N\pra^2}{\mm\pra-\qq^2}\,|\mathrm{D} N|^2\right].
\end{align}
This is a useful formula to compare the mean curvatures with respect to the two metrics on a given level set. An alternative useful modification of~\eqref{eq:Hg_Ric} is obtained by means of the Gauss--Codazzi equation
\begin{align}
2\Ric({\rm n},{\rm n})&=\operatorname{R}-\operatorname{R}^\Sigma-|\operatorname{h}|^2+\HHH^2
\end{align}
where $\operatorname{R}^{\Sigma}$ denotes the scalar curvature of $\Sigma$ with respect to $g_{0}$ and $\operatorname{h}$ denotes its second fundamental form with respect to $g_{0}$. Recalling 
\begin{align}
\operatorname{R}&=\frac{2\qq^2\pra^2}{(\mm\pra-\qq^2)^2}\,|\mathrm{D} N|^2
\end{align}
(this can be obtained by tracing the first equation in~\eqref{eq:pb_N} and then applying some algebraic manipulations), we find
\begin{align}\label{eq:Hg}
\Hg\,|\nabla\ffi|_g&=\frac{N\pra^4}{2(\mm\pra-\qq^2)}\left(\operatorname{R}^\Sigma+|\mathring{\operatorname{h}}|^2-\frac{\HHH^2}{2}\right)-\frac{(2\mm\pra-\qq^2)N\pra^6}{(\mm\pra-\qq^2)^3}\,|\mathrm{D} N|^2,
\end{align}
where $\mathring{\operatorname{h}}=\operatorname{h}-\frac{\HHH}{2} g_0$ denotes the traceless part of the second fundamental form $\operatorname{h}$ of $\Sigma$ with respect to $g_{0}$.

The normals ${\rm n}$ and ${\rm n}_{g}$ are related by ${\rm n}_{g}=\mp \pra_{\pm}\,{\rm n}$ as one sees from a direct computation using \eqref{eq:naffi_DNk} and \eqref{eq:dphidN}. From the usual law of conformal transformation of mean curvature, we find $\Hg=\mp\left(\pra_{\pm}\HHH-2\,{\rm n}(\rho_{\pm})\right)$.

Let us end this subsection by discussing the non-regular level sets of $N$. Let ${\rm Crit}(N)$ be the set of the critical points of $N$ and suppose that $\{N=s\}\,\cap\,{\rm Crit}(N)\neq\emptyset$. We recall from~\cite{Tod} that $N$ is a real analytic function on $M$ (with respect to $g_{0}$-harmonic coordinates). In particular, the set of critical values of $N$ is discrete. Furthermore, from the results in~\cite{Lojasiewicz} (see also~\cite[Theorem~6.3.3]{Krantz_Parks}) on the structure of the critical set of a real analytic function, it follows that there exists a smooth surface $S\subseteq{\rm Crit}(N)$ with Hausdorff measure $\mathcal{H}^2({\rm Crit}(N)\setminus S)=0$. As a consequence, the unit normal, the second fundamental form, and the mean curvature are well-defined $\mathcal{H}^2$-almost everywhere on the critical level set $\{N=s\}$ as well, with respect to $g_{0}$. Proceeding now exactly as at the end of~\cite[Section~3.2]{BorMazII}, one computes $\Hg=\mp\pra_{\pm}\HHH$ at the points of $S$. It follows then that \eqref{eq:Hg} holds at the points of $S$, which in particular implies that~\eqref{eq:Hg} holds $\mathcal{H}^2$-almost everywhere on the critical level set $\{N=s\}$ as well.

\subsection{Bochner formula}\label{sub:Bochner}
In this section, we will study some useful consequences of the Bochner formula in the cylindrical picture, appealing again to \Cref{not1}. Again, let $\pra$ be a pseudo-radial function, defined as in~\eqref{eq:rho+} or~\eqref{eq:rho-}, and let $\ffi$, $g$ be defined by~\eqref{eq:ffi} and~\eqref{eq:g}, respectively. The Bochner formula for the $g$-harmonic function $\varphi$ then reads
\begin{align}\notag
\Deg|\nabla\ffi|_g^2&=2\,|\nana\ffi|_g^2+2\Ricg(\nabla\ffi,\nabla\ffi)\\\label{eq:bochner_original}
&=2\,|\nana\ffi|_g^2+\frac{1}{N}\left(1-\frac{\mm}{\pra}\right)\langle \nabla|\nabla\ffi|_g^2\,|\,\nabla\ffi\rangle_g,
\end{align}
where we have used \eqref{eq:Ricg} in the second equality. More generally, for any $p\geq 2$, one has the following formula
\begin{align}\label{eq:bochner_originalp}
\begin{split}
\Deg|\nabla\ffi|_g^p&=p\,|\nabla\ffi|_g^{p-2}\,|\nana\ffi|_g^2+\frac{1}{N}\left(1-\frac{\mm}{\pra}\right)\langle \nabla|\nabla\ffi|_g^p\,|\,\nabla\ffi\rangle_g\\
&\quad+p(p-2)|\nabla\ffi|_g^{p-2}\,\left|\nabla|\nabla\ffi|_g\right|_g^2,
\end{split}
\end{align}
which in particular implies
\begin{align}\label{eq:bochner}
\Deg|\nabla\ffi|_g^p-\frac{1}{N}\left(1-\frac{\mm}{\pra}\right)\langle \nabla|\nabla\ffi|_g^p\,|\,\nabla\ffi\rangle_g\geq0.
\end{align}
This elliptic inequality for $|\nabla\ffi|_g^p$ will play an important role later.
An alternative useful way to rewrite~\eqref{eq:bochner} is the following
\begin{align}
\label{eq:positive_divergence}
{\rm div}_g\left(\frac{1}{\pra N}\nabla|\nabla\ffi|_g^p\right)&=\frac{1}{\pra N}\,\left[\Deg|\nabla\ffi|_g^p-\frac{1}{N}\left(1-\frac{\mm}{\pra}\right)\langle \nabla|\nabla\ffi|_g^p\,|\,\nabla\ffi\rangle_g\right]\geq0.
\end{align}
In other words, we have found a family of vector fields 
\begin{align}\label{eq:X}
X_{p}&\definedas \frac{1}{\pra N}\nabla|\nabla\ffi|_g^p
\end{align}
with non-negative $g$-divergence as desired, for each $p\geq2$.

\section{Black Hole Uniqueness in Electro-Vacuum}\label{sec:BH}
This section is dedicated to the proof of \Cref{thm:BH}, that is, we want to characterize sub-extremal and extremal Reissner--Nordstr\"om solutions as the only asymptotically flat, electrostatic electro-vacuum spacetimes with a connected horizon. To this end, we will exploit the cylindrical ansatz formalism discussed in \Cref{sec:cyl} and will continue to assume $\Psi_{0}>0$. As discussed above, the parameters $\mm$ and $\qq$ shall be chosen so that $\frac{\mm}{\qq}$ satisfies~\eqref{eq:m_over_q}. Since $N_0=0$ in the horizon case and $\Ele_0+\frac{1}{\Ele_0}\geq 2$ whatever the choice of $\Ele_0>0$, we immediately deduce that $\mm\geq\qq >0$. This is expected, since we want $\mm$ and $\qq$ to represent mass and charge of a Reissner--Nordstr\"om solution, and hence a horizon should only be present in the sub-extremal and extremal case\footnote{recall that we have excluded $\qq=0$ by assumption and $\qq<0$ by alluding to a sign change for $\Psi$.}. 

Notice in particular that $k\geq 0$ follows from $\mm\geq\qq>0$ by \eqref{def:k} and that the critical set $\C$ does not intersect the \emph{interior of the manifold} $M\setminus\partial M$, so that $\C=\emptyset$ holds in the non-degenerate and $\C=\partial M$ holds in the degenerate case (see the discussion after \eqref{eq:g}). This tells us that the equations in~\eqref{eq:pb_N} are well-defined everywhere in the interior. Moreover, by the maximum principle\footnote{adapted to this context as in the proof of \Cref{pro:simplification}\label{foot:max}} applied to the second equation in \eqref{eq:pb}, exploiting $N_{0}=0$ and the asymptotic assumption $N\to1$ at infinity, ensures that $0<N<1$ in $M\setminus\partial M$ and in particular $N\neq1$ in $M$\label{p:01}. Hence we can apply the conformal change discussed in \Cref{sec:cyl} on $M\setminus\partial M$ with the pseudo-radial function $\rho_{-}$ from~\eqref{eq:rho-}, see \Cref{lem:rho-}. We then choose the pseudo-affine function $\varphi_{-}$ and cylindrical ansatz metric $g_{-}$ accordingly via \eqref{eq:ffi}, \eqref{eq:g} and note that they are well-defined and smooth on $M\setminus\partial M$ by \Cref{lem:rho-}. In the non-degenerate horizon case, all these smoothness claims extend to $\partial M$.

\subsection{Asymptotic analysis of the conformal metric}\label{sub:asymptotics}
We start out by discussing the asymptotic behaviour of our solution since it will play an important role in what follows.
More precisely, we show that the gradient and Hessian of $\ffi$ have bounded norm at infinity and that the metric $g$ is {\em asymptotically cylindrical}, namely the level sets of $\ffi$ converge to round spheres of fixed radius with respect to the metric $g$. The arguments are similar to the ones presented in~\cite[Lemma~3.1]{Mazz}, but since we are using a weaker notion of asymptotic flatness, we prefer to present all the details.

First, from $\mm>0$, we learn that $\mu>0$ by \Cref{lem:sign}. Hence the $g_{0}$-unit normal ${\rm n}=\frac{{\rm D}N}{\vert{\rm D}N\vert}$ to the level sets of $N$ satisfies
\begin{align}\label{eq:DN_asymptotics}
\begin{split}
{\rm n}&=\left(\frac{x^i}{|x|}+\sigma^{i}\right)\frac{\d}{\d x^i}=\frac{\d}{\d |x|}+\sigma^{i}\frac{\d}{\d x^i},\\
\sigma^{i}&=o_{1}(1)
\end{split}
\end{align}
as $\vert x\vert\to\infty$, where we have used \eqref{eq:asydN} and \eqref{eq:DNlength}.
Exploiting that $N_{0}<1$, \eqref{eq:DNlength}, and $\mu>0$ , there exists $0<t_0<1$ such that $|{\rm D} N|\neq 0$ in $\{N\geq t_0\}$, with $\{N\geq t_{0}\}$ contained in the asymptotic end $M\setminus K$ of $M$, $\{N\geq t_{0}\}\subseteq M\setminus K$. It follows from Morse Theory~\cite{Milnor} that $\{N\geq t_0\}$ is diffeomorphic to $\Sigma\times [t_0,1)$, with the level set $\{N=t\}$ corresponding to $\Sigma\times\{t\}$ for all $t\geq t_0$. On the other hand, asymptotic flatness implies that the topology of the asymptotic end $M\setminus K$  must be that of $\R^3\setminus B$ for some closed ball $B$. Since $\R^3\setminus B$ retracts to $\mathbb{S}^2$, it follows that $\Sigma$ does as well. In particular, the fundamental group of $\Sigma$ is the same as the one of $\mathbb{S}^2$. From the classification of surfaces, it follows that $\Sigma$ is diffeomorphic to a sphere. In other words, $\{N\geq t_0\}$ is diffeomorphic to $\mathbb{S}^2\times[t_0,1)$. By means of this diffeomorphism, the function $N$ is just the projection onto the second factor and can be used as a coordinate. It is well known (see e.g. \cite{Israel}) that, choosing coordinates $(\theta^1,\theta^2)$ on the first factor $\mathbb{S}^2$, the metric $g_{0}$ can be rewritten as
\begin{align}\label{eq:Morse_go}
\go&=\frac{d N\otimes d N}{|{\rm D}N|^2}+\bar g_{ab}(N,\theta^1,\theta^2)\,d\theta^a\otimes d\theta^b,
\end{align}
with $a,b=1,2$. Notice that, for a fixed $t$, the coefficients of the metric induced on the necessarily regular level set $\{N=t\}$ are given by $\bar g_{ab}(t,\theta^1,\theta^2)$. 

Let us now study the asymptotic behaviour of the metric $\bar g$ as $t\to1$. Since the asymptotic end of $M$ is diffeomorphic to $\R^3\setminus B\supset\{N\geq t_{0}\}$, we can also consider standard spherical coordinates $(|x|,\vartheta^1,\vartheta^2)$. 
Recalling the asymptotic behaviour~\eqref{eq:asymptotics} of $\go$, we can write
\begin{align}\label{eq:zeta}
\begin{split}
\go&= (1+\omega)\,d|x|\otimes d|x|+|x|\zeta_{a}\left(d|x|\otimes d\vartheta^a+d\vartheta^a\otimes d|x|\right)\\
&\quad+|x|^2\left[ (g_{\mathbb{S}^2})_{ab}+\xi_{ab}\right]d\vartheta^a\otimes d\vartheta^b,\\
\omega, \xi_{ab}, \zeta_a&=o(1)
\end{split}
\end{align}
as $\vert x\vert\to\infty$. Exploiting the fact that $\frac{\d}{\d\theta^a}$ is $g_{0}$-orthogonal to ${\rm D} N$, we can rewrite~\eqref{eq:DN_asymptotics} in spherical coordinates, obtaining
\begin{align}\label{eq:ndecay}
\begin{split}
{\rm n}&=(1+\lambda)\frac{\d}{\d |x|}+ \frac{\lambda^a}{|x|}\frac{\d}{\d \vartheta^a},\\
\lambda,\lambda^a&=o(1)
\end{split}
\end{align}
as $\vert x\vert\to\infty$.
Since the normal ${\rm n}$ is converging to $\d/\d\vert x\vert$ at infinity, we expect the level sets of $N$ to become round as we approach the infinity. For the seek of completeness, we now show how to prove this in detail.
For $a=1,2$ and using \eqref{eq:zeta}, \eqref{eq:ndecay}, we 
compute 
\begin{align*}
0&=\go\left(\frac{\d}{\d\theta^a},{\rm n}\right)\\
&=\left[(1+\omega)(1+\lambda)+\zeta_c\lambda^c\right]\frac{\d |x|}{\d\theta^a}+|x|\left[\lambda^b\left[ (g_{\mathbb{S}^2})_{bc}+\xi_{bc}\right]+(1+\lambda)\zeta_c\right]\frac{\d\vartheta^c}{\d\theta^a}
\end{align*}
for $a=1,2$ as $\vert x\vert\to\infty$. It follows then that $\frac{\partial |x|}{\partial\theta^a}=\tau_b\frac{\partial\vartheta^b}{\partial\theta^a}$, where $\tau_b=o(|x|)$ as $\vert x\vert\to\infty$, $b=1,2$.
With this insight, recalling~\eqref{eq:zeta} we are finally able to compute the asymptotic behaviour of $\bar g_{ab}$, obtaining
\begin{align}\notag
\bar g_{ab}&=\go\left(\frac{\d}{\d\theta^a},\frac{\d}{\d\theta^b}\right)\\\notag
&=(1+\omega)\frac{\d |x|}{\d\theta^a}\frac{\d |x|}{\d\theta^b}+|x|\zeta_{c}\left(\frac{\d |x|}{\d\theta^a}\frac{\d \vartheta^c}{\d\theta^b}+\frac{\d \vartheta^c}{\d\theta^a}\frac{\d |x|}{\d\theta^b}\right)+|x|^2\left[ (g_{\mathbb{S}^2})_{cd}+\xi_{cd}\right]\frac{\d\vartheta^c}{\d\theta^a}\frac{\d\vartheta^d}{\d\theta^b}\\\notag
&=\vert x\vert^2\left[ (g_{\mathbb{S}^2})_{cd}+\tau_{cd}\right]\frac{\d\vartheta^c}{\d\theta^a}\frac{\d\vartheta^d}{\d\theta^b}
\\\label{eq:gbar}
&=\vert x\vert^2\left[ (g_{\mathbb{S}^2})_{cd}+\tau_{cd}\right]d\vartheta^c\otimes d\vartheta^d\left(\frac{\d}{\d\theta^a},\frac{\d}{\d\theta^b}\right)
\end{align}
for $a,b=1,2$ as $\vert x\vert\to\infty$, where $\tau_{cd}=o(1)$ for $c,d=1,2$. In other words, the metric $\bar g$ induced on a level set $\{N=t\}$ can be written as $\bar g=|x|^2\left[ (g_{\mathbb{S}^2})_{cd}+\tau_{cd}\right]d\tilde\vartheta^c\otimes d\tilde\vartheta^d$ where $\tilde\vartheta^a$ is the restriction of $\vartheta^a$ to $\{N=t\}$. 
We have thus shown that the metric $\bar g$ induced on the level sets of $N$ converges to $g_{\mathbb{S}^2}$ at infinity, as expected.

Next, using $\mu>0$, $\mm>0$ as well as \eqref{eq:rho-}, \eqref{eq:ffi}, we easily see that
\begin{align}\label{eq:rhoasy}
\pra &= \frac{\mm |x|}{\ADM} + o_2(|x|),\\\label{eq:phiasy}
\ffi&=\log \left(\frac{2\mm |x|}{\ADM}\right)+o_2(1)
\end{align}
 as $\vert x\vert\to\infty$. Substituting \eqref{eq:rhoasy}, \eqref{eq:phiasy} into \eqref{eq:naffi_DNk} and \eqref{eq:nana_ffi}, we deduce
\begin{align}\label{eq:limit_naffi}
|\nabla\ffi|_g&= \frac{\mm}{\ADM} + o_1(1),\\ 
\nana_{ij}\ffi&=o(|x|^{-2})
\end{align}
as $\vert x\vert\to\infty$. Furthermore, by \eqref{eq:g} and \eqref{eq:rhoasy}, we get
\begin{align}\label{eq:limit_nanaffi}
|\nana\ffi|_g&=o(1)
\end{align}
as $\vert x\vert\to\infty$. Let us now move our attention to the cylindrical ansatz metric $g=\frac{\go}{\pra^2}$. This time we use $\ffi$ as a coordinate on our end instead of $N$.
Dividing formula~\eqref{eq:Morse_go} by $\pra^2$ and using \eqref{eq:dphidN} and \eqref{eq:naffi_DNk} for $\rho_{-}$, we obtain
\begin{align}\label{eq:gphi}
g&=\frac{d\ffi\otimes d\ffi}{|\nabla\ffi|_g^2}+ \frac{\bar g_{ab}}{\pra^2}\,d\theta^a\otimes d\theta^b.
\end{align}
From \eqref{eq:rhoasy} and \eqref{eq:gbar}, we deduce that the metric induced by $g$ on the level sets of $\ffi$ is
\begin{align}
\frac{\bar g_{ab}}{\rho^2}&=\frac{\mu^{2}}{\mm^{2}}\, g_{\mathbb{S}^2}\left(\frac{\d}{\d\theta^a},\frac{\d}{\d\theta^b}\right)+o(1)
\end{align}
as $\vert x\vert\to\infty$. In other words, the metric induced by $g$ on the level sets of $N$ (and thus of $\varphi$) converges to  $\frac{\ADM^2}{\mm^2} g_{\mathbb{S}^2}$. Denoting by $d\sigma_{g}$ the area element induced by $g$, it follows that 
\begin{align}
\label{eq:limit_area}
\lim_{s\to\infty}\int_{\{\ffi=s\}}d\sigma_g &=\frac{\ADM^2}{\mm^2} \,|\mathbb{S}^2|.
\end{align}

\subsection{First consequences of the Bochner formula}\label{sub: standard_Conformal}
In this section, we will discuss a couple of consequences of the Bochner formula introduced in \Cref{sub:Bochner}. In particular, we will exploit inequalities~\eqref{eq:bochner} and~\eqref{eq:positive_divergence} to obtain some crucial information on the function $\ffi$. Further consequences of the Bochner formula will be discussed separately for non-degenerate and degenerate black holes in \Cref{sub:horizon_consequences,sub:degenerate}, respectively, and for equipotential photon surfaces in \Cref{sec:photon_surface}. For our first result, we start from the vector fields $X_{p}$ with non-positive divergence found in~\eqref{eq:X} and we prove an integral inequality on the mean curvature $\Hg$ of the level sets of $\ffi$.

\begin{prop}\label{pro:integral_Hg}
Let $(M,\go,N,\Ele)$ be an asymptotically flat electrostatic electro-vacuum system with mass parameter $\mu$ such that $M$ has a connected horizon boundary $\partial M$. Let $\pra=\rho_{-}$, $\ffi=\varphi_{-}$, $g=g_{-}$ be defined as in~\eqref{eq:rho-}, \eqref{eq:ffi}, and \eqref{eq:g}, with respect to parameters $\mm,\qq$ satisfying~\eqref{eq:m_over_q}. Then, for any $p\geq 3$ and any $s<\infty$ such that $\{\ffi=s\}\subset M\setminus\partial M$, one has
\begin{align}\label{eq:positive_int}
\int_{\{\ffi=s\}}|\nabla\ffi|_g^{p-1}\Hg\,d\sigma_g\geq 0,
\end{align}
where $\Hg$ is the mean curvature of the level set $\{\ffi=s\}$ with respect to the metric $g$. Furthermore, if equality holds in~\eqref{eq:positive_int} for some $s$ and $p$, then $(\{\varphi\geq s\},\go,N,\Ele)$ is isometric to the Reissner--Nordstr\"om solution~\eqref{eq:RN} with mass $\mm$ and charge $\qq$. In particular, if equality holds in~\eqref{eq:positive_int} in the limit $s\to \varphi_{0}$ where $\partial M=\{\varphi=\varphi_{0}\}$ then $(M\setminus\partial M,\go,N,\Ele)$ is isometric to the Reissner--Nordstr\"om solution~\eqref{eq:RN} with mass $\mu$ and charge $q>0$ such that $\frac{\mu}{q}=\frac{\mm}{\qq}$, including the boundary in the non-degenerate horizon case.
\end{prop}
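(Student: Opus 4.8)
The plan is to obtain \eqref{eq:positive_int} by integrating the divergence inequality \eqref{eq:positive_divergence} over the region $\{\varphi \geq s\}$ and controlling the boundary term at infinity via the asymptotic analysis of \Cref{sub:asymptotics}. Concretely, fix $p \geq 3$ and a regular value $s$ with $\{\varphi = s\} \subset M \setminus \partial M$; since critical values of $\varphi$ are discrete (as $N$ is real analytic, see \Cref{sub:geometry level sets}), the set $\{\varphi \geq s\}$ is, for regular $s$, a smooth manifold with boundary $\{\varphi = s\}$, and by the Morse-theoretic discussion in \Cref{sub:asymptotics} it has the topology of $\mathbb{S}^2 \times [s, \infty)$. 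Apply the divergence theorem to the vector field $X_p = \frac{1}{\rho N}\nabla|\nabla\varphi|_g^p$ on the truncated region $\{s \leq \varphi \leq s'\}$, using that $\mathrm{n}_g = \nabla\varphi/|\nabla\varphi|_g$ is the outward $g$-unit normal on $\{\varphi = s'\}$ and the inward one (up to sign) on $\{\varphi = s\}$. Because $\operatorname{div}_g X_p \geq 0$, one gets
\begin{align*}
\int_{\{\varphi = s'\}} \langle X_p, \mathrm{n}_g\rangle_g \, d\sigma_g \;-\; \int_{\{\varphi = s\}} \langle X_p, \mathrm{n}_g\rangle_g \, d\sigma_g \;\geq\; 0.
\end{align*}
The key computation is to identify $\langle X_p, \mathrm{n}_g\rangle_g$ on a level set: using $\nabla|\nabla\varphi|_g^p = p|\nabla\varphi|_g^{p-1}\nabla|\nabla\varphi|_g$ and the standard fact (as in \eqref{eq:Hg_nanaffi}) that for a $g$-harmonic $\varphi$ one has $\mathrm{n}_g(|\nabla\varphi|_g) = -\mathsf{H}_g |\nabla\varphi|_g$ on its level sets, together with $\rho N = 1$ in the Reissner--Nordström normalization — more precisely using $\rho_-(1+N) - \mm = q\sqrt{N^2+k}\,\rho_-/q$ type identities — one finds $\langle X_p, \mathrm{n}_g\rangle_g = -\frac{1}{\rho N} p |\nabla\varphi|_g^{p-1}\mathsf{H}_g|\nabla\varphi|_g$ up to a fixed positive factor, so that $-\int_{\{\varphi=s\}}\langle X_p, \mathrm{n}_g\rangle_g\,d\sigma_g$ is a positive multiple of $\int_{\{\varphi = s\}}|\nabla\varphi|_g^{p-1}\mathsf{H}_g\,d\sigma_g$.

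Next I would show the outer boundary term vanishes in the limit $s' \to \infty$. Here \eqref{eq:limit_naffi}, \eqref{eq:limit_nanaffi}, and \eqref{eq:limit_area} give that $|\nabla\varphi|_g \to \mm/\mu$, $|\nabla^2\varphi|_g \to 0$, and the areas $\int_{\{\varphi=s'\}}d\sigma_g$ stay bounded; since $|\nabla|\nabla\varphi|_g| \leq |\nabla^2\varphi|_g \to 0$ and $\mathsf{H}_g|\nabla\varphi|_g = -\mathrm{n}_g(|\nabla\varphi|_g) \to 0$, while $\rho N$ stays bounded away from zero asymptotically (as $\rho \sim \mm|x|/\mu$ and $N\to 1$), the integrand $\langle X_p, \mathrm{n}_g\rangle_g$ tends to $0$ uniformly and the boundary integral over $\{\varphi = s'\}$ vanishes as $s'\to\infty$. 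This yields exactly $\int_{\{\varphi = s\}}|\nabla\varphi|_g^{p-1}\mathsf{H}_g\,d\sigma_g \geq 0$ for regular $s$; for a critical value $s$ one passes to the limit from regular values just above, using continuity of the integral (and the $\mathcal{H}^2$-a.e.\ structure from \Cref{sub:geometry level sets} if needed, though one can simply avoid critical $s$ and take limits). The hypothesis $p \geq 3$ (rather than merely $p \geq 2$) is needed so that the extra term $p(p-2)|\nabla\varphi|_g^{p-2}|\nabla|\nabla\varphi|_g||_g^2$ in \eqref{eq:bochner_originalp} is present with a definite sign; presumably it is also what guarantees enough integrability/regularity of $|\nabla\varphi|_g^p$ near critical points of $\varphi$ for the divergence theorem to be legitimate there.

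For the rigidity statement, suppose equality holds in \eqref{eq:positive_int} for some $s$ and $p$. Tracing back through the divergence theorem, equality forces $\operatorname{div}_g X_p \equiv 0$ on all of $\{\varphi \geq s\}$, hence by \eqref{eq:positive_divergence} the Bochner inequality \eqref{eq:bochner} is saturated there; from \eqref{eq:bochner_originalp} with $p \geq 3$ this forces both $|\nabla^2\varphi|_g \equiv 0$ and $|\nabla|\nabla\varphi|_g|_g \equiv 0$ on $\{\varphi \geq s\}$. Then $\varphi$ has parallel gradient of constant length, so $(\{\varphi \geq s\}, g)$ splits isometrically as a product $(\{\varphi = s\} \times [s, \infty), g_{\{\varphi=s\}} \oplus \text{const}\, d\varphi^2)$; feeding $\nabla^2\varphi = 0$ back into \eqref{eq:Ricg} forces $\operatorname{Ric}_g = |\nabla\varphi|_g^2 g$ on the level sets — i.e.\ they are round spheres — so $(\{\varphi\geq s\}, g)$ is a round half-cylinder, exactly as in the final part of the proof of \cite[Proposition~4.2]{Mazz}. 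Undoing the conformal change $g_0 = \rho^2 g$ with $\rho = \rho_-$ and the relation \eqref{eq:relation_N_psi} between $N$ and $\rho$ then identifies $(\{\varphi \geq s\}, g_0, N, \Psi)$ with the spatial Reissner--Nordström solution of mass $\mm$ and charge $\qq$. Finally, real analyticity of $N$ (and hence of all the data) propagates this isometry from $\{\varphi \geq s\}$ to the maximal connected set on which $\varphi \geq \varphi_0$, i.e.\ all of $M \setminus \partial M$; since by construction $\mm/\qq = \mu/q$ is determined by the boundary values via \eqref{eq:m_over_q} and the asymptotic mass of the identified Reissner--Nordström solution must match $\mu$, one gets $\mm = \mu$, $\qq = q$, and in the non-degenerate case (where $\mathcal{C}\cap\partial M = \emptyset$ and $g_0$ extends smoothly to $\partial M$) the isometry extends up to the boundary. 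The main obstacle is the careful justification of the divergence theorem across critical level sets of $\varphi$ and the uniform decay of the boundary term at infinity under the weak asymptotics \eqref{eq:asymptotics}; both are handled by the analyticity-based structure theory of the critical set and by the asymptotic estimates assembled in \Cref{sub:asymptotics}.
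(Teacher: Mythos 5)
Your proposal is correct and follows essentially the same route as the paper: integrate the divergence inequality \eqref{eq:positive_divergence} between two level sets, identify the boundary terms via \eqref{eq:Hg_nanaffi} (the factor $\tfrac{1}{\rho N}$ is a positive constant on each level set, so it only rescales the integral — your aside that ``$\rho N=1$'' is not needed and not true, but your ``up to a fixed positive factor'' is the right statement), kill the term at infinity using \eqref{eq:limit_naffi}--\eqref{eq:limit_area} together with $\rho\to\infty$, handle critical values via the boundedness of $X_p$ for $p\geq 3$, and obtain rigidity from saturation of \eqref{eq:bochner_originalp} and the cylindrical splitting. The only small omission is the paper's short maximum-principle argument showing $\{\varphi\geq s\}$ is connected (so that the divergence theorem is applied on the correct region); your final analyticity-propagation remark is a valid alternative to the paper's monotonicity argument for the last clause.
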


\begin{rmk}
It is worth mentioning that one can refine the above proposition and show that the same result remains true for any $p>3/2$, at the cost of making the proof more complicated (a more careful analysis of the critical points is needed, in the spirit of~\cite[Section~3.2]{MazzAgo}). However, the above proposition is enough for our purposes.
\end{rmk}

\begin{proof}
From \eqref{eq:phiasy} or \Cref{lem:rho-}, we know that $\{\ffi\geq s\}$ contains the asymptotic end (or a suitable piece thereof for large $s$). On the other hand, $\{\ffi\geq s\}$ does not contain $\partial M$, as $N$ (and thus $\ffi$) achieves its minimum value on $\d M$. Furthermore, $\{\ffi\geq s\}$ is connected. In fact, if $\{\ffi\geq s\}$ had a second connected component, this would have to be compact with boundary contained in $\{\ffi=s\}$. Since $\ffi$ is $g$-harmonic, by the maximum and minimum principle we would get that $\ffi$ is constant on this component and thus everywhere by analyticity. Under our hypotheses, the functions $\pra$, $\ffi$ and the metric $g$ are well-defined and smooth on $M$. By continuity it follows that the level sets $\{\ffi=S\}$ are not empty for any $S>s$.

Let $s<S<\infty$ be two regular values for $\ffi$, and let us integrate the inequality~\eqref{eq:positive_divergence} over $\{s<\ffi<S\}$. As an application of the divergence theorem, we get
\begin{align}
-\int_{\{\ffi=s\}}\frac{1}{\pra N}\langle\nabla|\nabla\ffi|_g^p\,|\,{\rm n}_g\rangle+\int_{\{\ffi=S\}}\frac{1}{\pra N}\langle\nabla|\nabla\ffi|_g^p\,|\,{\rm n}_g\rangle\geq0.
\end{align}
Since $\langle\nabla|\nabla\ffi|_g^p\,|\,{\rm n}_g\rangle=p|\nabla\ffi|_g^{p-1}\nabla^2\ffi({\rm n}_g,{\rm n}_g)$, recalling~\eqref{eq:Hg_nanaffi}, we immediately obtain
\begin{align}\label{eq:auxiliary_integral_inequality}
   \int_{\{\ffi=s\}}\frac{1}{\pra N}|\nabla\ffi|_g^{p-1}\,\Hg\,d\sigma_g\geq \int_{\{\ffi=S\}}\frac{1}{\pra N}|\nabla\ffi|_g^{p-1}\,\Hg\,d\sigma_g. 
\end{align}
In case $s$ or $S$ are critical values for $\ffi$, the same inequality still holds. This can be proven exactly as in~\cite{Mazz}, by observing that the vector field $X_{p}$ from \eqref{eq:X} is bounded with bounded divergence for any $p\geq 3$, so that the divergence theorem still applies.

In order to conclude the proof of \eqref{eq:positive_int}, it is enough to show that the right-hand side of~\eqref{eq:auxiliary_integral_inequality} goes to zero as $S\to\infty$. To this end, let us recall again~\eqref{eq:Hg_nanaffi} to deduce that $|\Hg|=|\nana\ffi({\rm n}_g,{\rm n}_g)|\leq|\nana\ffi|_g$. Furthermore, from~\eqref{eq:limit_naffi},~\eqref{eq:limit_nanaffi}, and~\eqref{eq:limit_area}
we know that $|\nabla\ffi|_g$, $|\nana\ffi|_g$ and $\int_{\{\ffi=S\}}d\sigma_g$ are bounded near infinity. Since we also have $N\to 1$, $\pra\to\infty$ at infinity, it follows immediately that the right-hand side of \eqref{eq:auxiliary_integral_inequality} goes to zero as $S\to\infty$ as desired.

It only remains to prove the rigidity statement. Retracing our proof, we deduce that in the equality case, the right hand side of~\eqref{eq:positive_divergence} is equal to zero in $M\setminus\partial M$ or in other words, equality holds in \eqref{eq:bochner}. From~\eqref{eq:bochner_originalp} it then follows that $|\nana\ffi|_g\equiv 0$ and $\nabla\vert\nabla\varphi\vert_{g}\equiv0$ on $\{\ffi\geq s\}$. It is now straightforward to conclude by applying \cite[Theorem 4.1]{AgoMazz_first} that the solution is isometric to one half of a round cylinder (including the boundary). Here, we implicitly use \Cref{rmk:complete}. Returning back to the original metric and using \eqref{eq:g}, \eqref{eq:ffi},  \eqref{eq:relation_N_psi}, \eqref{eq: idPsiN}, and the asymptotic decay assumption on $\Psi$ from \eqref{eq:asymptotics}, we deduce that $g_{0}=\frac{1}{N^{2}}d\rho^{2}+\rho^{2}g_{\mathbb{S}^{2}}$ on $\{\varphi\geq s\}$, with $N=\sqrt{1-\frac{2\mm}{\rho}+\frac{\qq^{2}}{\rho^{2}}}$ and $\Psi=\frac{\qq}{\rho}$ (up to the isometry). By monotonicity of the left hand side, taking the limit as $s\to \varphi_{0}$ gives that $(M\setminus\partial M,\go,N,\Psi)$ is isometric to a Reissner--Nordstr\"om solution~\eqref{eq:RN} with mass $\mm$ and charge $\qq$, up to the boundary by continuity in the non-degenerate case.
\end{proof}

\begin{rmk}
The above proposition can be applied to obtain a monotonicity formula for the functions $F_p(s)=\int_{\{\ffi=s\}}|\nabla\ffi|_g^p \,d\sigma_g$. This is the main tool employed in~\cite{Mazz} and~\cite{CedCoFer} to prove the uniqueness in vacuum for black holes and non-degenerate equipotential photon surfaces, respectively. We could repeat the same arguments in our framework as well, with small modifications, but we prefer to show a faster strategy that requires far less technicalities. Nevertheless, the method involving the monotonicity of the functions $F_p$ will still be of use later on, see \Cref{sub:monotonicity} and \Cref{pro:monotonicity_Fp}.
\end{rmk}

\subsection{Proof of \Cref{thm:BH} for non-degenerate horizons}\label{sub:horizon_consequences}
In this section, we will give a proof of \Cref{thm:BH} for non-degenerate horizons. Before discussing our strategy of proof, let us recall that the quotient $\frac{\mm}{\qq}$ has already been fixed in \eqref{eq:m_over_q} depending on the values $N_0$ and $\Ele_0$ of the lapse function and electric potential on $\d  M$. However, we still have a degree of freedom in the choice of the two parameters. The second condition we would like to impose on $\mm$ and $\qq>0$ is that $|\nabla\ffi|_g=1$ shall hold on $\d  M$, to make $g$ a unit radius cylinder in the equality case in view of \eqref{eq:gphi}. In the non-degenerate horizon case, recalling \eqref{eq:rho-}, \eqref{def:k}, $k>0$ as discussed at the end of \Cref{sub:simplification}, \eqref{eq:naffi_DN}, and the fact that $\vert{\rm D}N\vert=\text{const.}>0$ on $\partial M$, this translates into
\begin{align}\label{eq: Phi_1_bound}
  \frac{\rho_0^6}{(\mm\rho_0-\qq^2)^2}\,|\mathrm{D} N|^2&= 1
\end{align}
on $\d  M$, where $\rho_0=\mm+\qq\sqrt{k}$ is the value of the pseudo-radial function $\rho_{-}$ at $\partial M$. The above formula can be rewritten as the following identity prescribing the value of the parameter $\qq$ (and consequently of $\mm$)
\begin{align}
\label{eq:second_condition_on_mq}
\qq&=\frac{\left(\frac{\mm}{\qq}-\sqrt{k}\right)^2\sqrt{k}}{|\mathrm{D}N|\vert_{\partial M}}>0.
\end{align}
Notice that the right hand side is a function of $\frac{\mm}{\qq}$ whose value has been fixed already, and of $|{\rm D} N|\vert_{\partial M}>0$ which is a given constant on the horizon $\partial M$. It follows from \eqref{def:k} that the parameters $\qq>0$ and $\mm$ are now well-defined for non-degenerate horizons.

Sticking with the non-degenerate horizon case, we are now ready to state the second result we will need in the proof, which consists in an application of a minimum principle argument in the spirit of~\cite[Proposition~3.3]{BorMazII}.

\begin{prop}\label{pro:grad_est}
Let $(M,\go,N,\Ele)$ be an asymptotically flat electrostatic electro-vacuum system with mass parameter $\mu$ and with connected non-degenerate horizon  boundary $\d M$. Let $\pra=\rho_{-}$, $\ffi=\varphi_{-}$, $g=g_{-}$ be defined as in~\eqref{eq:rho-}, \eqref{eq:ffi}, \eqref{eq:g}, respectively, with respect to the parameters $\mm>\qq>0$ determined by \eqref{eq:m_over_q} and~\eqref{eq:second_condition_on_mq}. Then
\begin{align}\label{eq:gradient_estimate}
|\nabla\ffi|_g\leq 1
\end{align}
holds on $M$. Furthermore, if equality holds in~\eqref{eq:gradient_estimate} at some point of $M\setminus\partial M$ then $(M,\go,N,\Ele)$ is isometric to the sub-extremal Reissner--Nordstr\"om solution~\eqref{eq:RN} with mass $\mu$ and charge $q>0$ such that $\frac{\mu}{q}=\frac{\mm}{\qq}$. 
\end{prop}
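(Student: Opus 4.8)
The plan is to establish the gradient bound $|\nabla\varphi|_g\le 1$ on $M$ via a maximum-principle argument applied to the subsolution inequality \eqref{eq:bochner} satisfied by $|\nabla\varphi|_g^p$, using the asymptotic normalization \eqref{eq:limit_naffi} at infinity and the calibration \eqref{eq: Phi_1_bound} on the horizon as the two boundary inputs. Concretely, fix some $p\ge 3$ (so that the vector field $X_p$ from \eqref{eq:X} is bounded with bounded divergence, as noted in the proof of \Cref{pro:integral_Hg}). The function $u\definedas|\nabla\varphi|_g^p$ is a nonnegative subsolution of the elliptic operator $L v\definedas\Delta_g v-\tfrac1N(1-\tfrac{\mm}{\rho})\langle\nabla v\,|\,\nabla\varphi\rangle_g$ on $M\setminus\partial M$ by \eqref{eq:bochner}. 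First I would observe that $u$ extends continuously to $\partial M$ (in the non-degenerate case $\rho,\varphi,g$ are smooth up to $\partial M$ by \Cref{lem:rho-}, since $\C\cap\partial M=\emptyset$ because $k>0$), with $u\equiv 1$ there by \eqref{eq: Phi_1_bound}, and that $u\to(\mm/\mu)^p$ at infinity by \eqref{eq:limit_naffi}. From \Cref{lem:sign} and $\mm>0$ we have $\mu>0$; one then checks directly from the asymptotics that $\mm/\mu$ equals $1$ — indeed, integrating $\Delta N=\tfrac{N}{N^2+k}|\mathrm{D}N|^2\ge 0$ over $M$ and using the divergence theorem with the decay $\mathrm{D}N\sim\mu x/|x|^3$ gives $4\pi\mu=\int_{\partial M}\nu(N)\,d\sigma\ge 0$, and comparing the pseudo-radial asymptotics \eqref{eq:rhoasy} forces $u\to 1$ at infinity as well (alternatively, this normalization is exactly what \eqref{eq:second_condition_on_mq} was engineered to produce consistently with the horizon value).

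With both the horizon value and the asymptotic value of $u$ equal to $1$, the maximum principle for the subsolution $u$ gives $u\le 1$ on all of $M$, hence $|\nabla\varphi|_g\le 1$, which is \eqref{eq:gradient_estimate}. Some care is needed because $M$ is noncompact and the coefficient $\tfrac1N(1-\tfrac{\mm}{\rho})$ is only defined on $M\setminus\partial M$ (and where $N^2+k\neq 0$, but $\C=\emptyset$ here), so the argument should be run exactly as in the proof of \Cref{pro:simplification}: if $\{u>1\}\neq\emptyset$ then, since $u\le 1$ on $\partial M$ and $\limsup_{|x|\to\infty}u\le 1$, the set $\{u>1\}$ is a nonempty relatively compact open subset of $M\setminus\partial M$ on which $Lu\ge 0$ with $u<\sup u$ somewhere, contradicting the strong maximum principle on a smooth exhausting subdomain. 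This yields \eqref{eq:gradient_estimate}.

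For the rigidity statement, suppose $|\nabla\varphi|_g(p_0)=1$ at some interior point $p_0$. Then $u$ attains its interior maximum value $1$ at $p_0$, and the strong maximum principle applied to the subsolution $u$ of $L$ forces $u\equiv 1$ on the connected component of $p_0$, hence on all of $M\setminus\partial M$ by connectedness. Feeding $u\equiv 1$ back into \eqref{eq:bochner_originalp} with this $p$: the left-hand side $\Delta_g u$ and the drift term both vanish, leaving $p|\nabla\varphi|_g^{p-2}|\nabla^2\varphi|_g^2+p(p-2)|\nabla\varphi|_g^{p-2}|\nabla|\nabla\varphi|_g|_g^2=0$, so $|\nabla^2\varphi|_g\equiv 0$ and $\nabla|\nabla\varphi|_g\equiv 0$ on $M\setminus\partial M$ (using $p\ge 3>2$). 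By \cite[Theorem 4.1]{AgoMazz_first}, together with geodesic/metric completeness from \Cref{rmk:complete}, the manifold $(M\setminus\partial M,g,\varphi)$ is (half of) a round cylinder; unwinding the conformal change via \eqref{eq:g}, \eqref{eq:ffi}, \eqref{eq:relation_N_psi}, \eqref{eq: idPsiN} and the decay \eqref{eq:asymptotics} on $\Psi$ exactly as at the end of the proof of \Cref{pro:integral_Hg} identifies $g_0=\tfrac{1}{N^2}d\rho^2+\rho^2 g_{\mathbb{S}^2}$ with $N=\sqrt{1-\tfrac{2\mm}{\rho}+\tfrac{\qq^2}{\rho^2}}$ and $\Psi=\tfrac{\qq}{\rho}$, i.e. the Reissner--Nordstr\"om solution of mass $\mm$ and charge $\qq$; since $\mm>\qq>0$ by hypothesis this is sub-extremal, and comparing with the asymptotics fixes $\mu=\mm$ and the ratio $\mu/q=\mm/\qq$.

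The main obstacle I anticipate is ensuring the normalization $|\nabla\varphi|_g\to 1$ at infinity is genuinely consistent with the horizon calibration \eqref{eq: Phi_1_bound} under the very weak decay \eqref{eq:asymptotics} — i.e. checking that the two conditions \eqref{eq:m_over_q} and \eqref{eq:second_condition_on_mq} together really do force $\mm=\mu$ rather than merely fixing a ratio — and, relatedly, handling the noncompactness and the boundary-degeneracy of the drift coefficient carefully enough that the maximum principle genuinely applies; the Bochner computation and the cylinder-rigidity step are by contrast routine given the formulas already assembled in \Cref{sec:cyl}.
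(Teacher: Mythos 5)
Your overall scheme (maximum principle for a Bochner subsolution, calibrated to equal $1$ on the horizon, then rigidity via $\nabla^2\varphi\equiv 0$ and the cylinder splitting) is the right family of ideas, and your rigidity step matches the paper's. But there is a genuine gap in the main estimate: you apply the maximum principle directly to $u=|\nabla\varphi|_g^p$, which requires knowing that $\limsup_{|x|\to\infty}u\le 1$. By \eqref{eq:limit_naffi} the asymptotic value of $|\nabla\varphi|_g$ is $\mm/\mu$, and at this stage of the argument nothing forces $\mm/\mu\le 1$; the two normalizations \eqref{eq:m_over_q} and \eqref{eq:second_condition_on_mq} fix $\mm$ and $\qq$ purely from horizon data ($N_0$, $\Psi_0$, $|\mathrm{D}N|\vert_{\partial M}$) and say nothing about $\mu$. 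Your proposed justification — integrating $\Delta N\ge 0$ and using the divergence theorem — only yields $4\pi\mu=\lim_R\int_{S_R}\nu(N)\,d\sigma\ge\int_{\partial M}\nu(N)\,d\sigma>0$, i.e.\ $\mu>0$ and a lower bound in terms of horizon data; it does not produce $\mm=\mu$. In fact $\mm\le\mu$ is precisely the \emph{consequence} \eqref{eq:massestimate} that the paper extracts from this proposition, and the reverse inequality $\mm\ge\mu$ requires the separate Gauss--Bonnet/area argument of \Cref{sub:horizon_consequences} built on \Cref{pro:integral_Hg}. So your argument is circular: the boundary datum at infinity that your maximum principle needs is exactly the unknown being established.

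The paper avoids this by applying the minimum principle not to $|\nabla\varphi|_g^p$ but to the weighted quantity $w=\tfrac{1}{\rho}\left(1-|\nabla\varphi|_g^2\right)$. Because $\tfrac1\rho\to 0$ at infinity and $|\nabla\varphi|_g$ is merely \emph{bounded} there, one gets $w\to 0$ at infinity unconditionally, with no need to identify the asymptotic value of $|\nabla\varphi|_g$; together with $w=0$ on $\partial M$ (from \eqref{eq: Phi_1_bound}) and the computation
\begin{align*}
\Delta_g w-\left(1-\frac{3\mm}{\rho}+\frac{2\qq^2}{\rho^2}\right)\langle\nabla w\,|\,\nabla\varphi\rangle_g=-\frac{2}{\rho}\,|\nabla^2\varphi|_g^2\le 0
\end{align*}
(note the drift differs from that in \eqref{eq:bochner} because of the derivative falling on $\tfrac1\rho$), the minimum principle gives $w\ge 0$, i.e.\ \eqref{eq:gradient_estimate}. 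If you want to salvage your version, you would need to replace $u$ by such a weighted function; as written, the step ``$u\to 1$ at infinity'' is unjustified and the proof does not close.
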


\begin{proof}
Consider the function $w\definedas\frac{1}{\pra}(1-|\nabla\ffi|_g^2)$. From \eqref{eq:relation_N_psi}, \eqref{eq:ffi}, and \eqref{def:k}, we compute
\begin{align*}
\nabla w&=-Nw\nabla\ffi-\frac{1}{\pra}\,\nabla|\nabla\ffi|_g^2
\end{align*}
on $M$, from which, using~\eqref{eq:bochner_original}, obtain
\begin{align*}
\Deg w- \left(1-\frac{3\mm}{\pra}+\frac{2\qq^2}{\pra^2}\right)\langle\nabla w|\nabla\ffi\rangle_g&=-\frac{2}{\pra}\,|\nana\ffi|_g^2\leq 0.
\end{align*}
Therefore the minimum principle\textsuperscript{\ref{foot:max}} applies, meaning that $w$ takes its minimum value on the boundary $\partial M$ or at infinity. On the boundary, we have $|\nabla\ffi|_g=1$ by our choice of $\mm$ and $\qq$, hence $w=0$ on $\partial M$. By \eqref{eq:limit_naffi} and \eqref{eq:rhoasy}, the quantity $|\nabla\ffi|_g$ is bounded at infinity, whereas $\frac{1}{\pra}\to 0$ and thus $w\to 0$ at infinity. It follows that we must have $w\geq 0$, or equivalently $|\nabla\ffi|_g\leq 1$, on the whole manifold $M$. 

Furthermore, if $|\nabla\ffi|_g=1$ (equivalently, $w=0$) at some interior point, then the strong minimum principle\textsuperscript{\ref{foot:max}} tells us that indeed $|\nabla\ffi|_g\equiv 1$ on $M$. Substituting this into~\eqref{eq:bochner_original}, we also get $|\nana\ffi|_{g}\equiv 0$ in $M$. Then the same argument given at the end of the proof of \Cref{pro:integral_Hg} allows us to conclude that the solution is isometric to the necessarily sub-extremal Reissner--Nordstr\"om solution of mass $\mm$ and charge~$\qq$.
\end{proof} 

Notice that, combining estimate~\eqref{eq:gradient_estimate} with the asymptotic expansion~\eqref{eq:limit_naffi} for $|\nabla\ffi|_g$, it immediately follows that
\begin{align}\label{eq:massestimate}
\mm \leq \ADM
\end{align}
for systems $(M,g_{o},N,\Psi)$ with non-degenerate black hole inner boundary. We will now proceed to prove the opposite inequality. This will trigger rigidity in \Cref{pro:grad_est}, leading to a proof of \Cref{thm:BH} for non-degenerate horizons.

To show $\mm\geq\mu$, we start by recalling formula~\eqref{eq:Hg} for the mean curvature of a level set of $\ffi$, replacing $\vert{\rm D}N\vert$ by $\vert\nabla\varphi\vert_{g}$ via \eqref{eq:naffi_DN}, and obtain
\begin{align}\label{eq:Hg_horizon}
\frac{\Hg}{N}\,|\nabla\ffi|_g&=\frac{\pra^4}{2(\mm\pra-\qq^2)}\left(\operatorname{R}^\Sigma+|\mathring{\operatorname{h}}|^2-\frac{\HHH^2}{2}\right)-\frac{2\mm\pra-\qq^2}{\mm\pra-\qq^2}\,|\nabla\ffi|_g^2
\end{align}
on any level set of $\varphi$. We now want to take the limit of this formula as $N\to 0$, that is, as we approach the horizon $\partial M$. Since $\mm$ and $\qq$ have been chosen so that $|\nabla\ffi|_g=1$ on $\partial M$, recalling that non-degenerate horizons are totally geodesic with respect to $g_{0}$ (see \Cref{sec:setup}), from~\eqref{eq:Hg_horizon} it is easy to compute 
\begin{align}
\lim_{N\to 0}\frac{\Hg}{N}&=\frac{\prao}{q\sqrt{k}}\left(\frac{\prao^2}{2}\operatorname{R}^{\d M}-1\right).
\end{align}
Applying  \Cref{pro:integral_Hg} by dividing~\eqref{eq:positive_int} by $N$ and taking the limit as we approach $\d M$, we find
\begin{align}
\int_{\d M}\left(\frac{\prao^2}{2}\,\operatorname{R}^{\d M}-1\right)d\sigma\geq 0,
\end{align}
exploiting that $k>0$, $\qq>0$, and $\rho_{0}>0$. From the Gauss--Bonnet formula, we then deduce the area bound
\begin{align}\label{eq:GB}
|\d M|\leq4\pi \prao^2=|\Sphere^2|\rho_{0}^{2},
\end{align}
where $\vert\partial M\vert$ denotes the area of $\partial M$ with respect to $g_{0}$ and $|\Sphere^2|$ denotes the area of $\mathbb{S}^{2}$ with respect to $g_{\mathbb{S}^{2}}$. Let us now exploit the $g$-harmonicity of $\ffi$. Integrating $\Deg\ffi=0$ between two level sets of $\ffi$, we find out immediately that the function 
\begin{align}\label{def:F}
F(s)\definedas\int_{\{\ffi=s\}}|\nabla\ffi|_g \,d\sigma_g
\end{align}
 is constant, which in particular implies
\begin{align}\label{eq: identity_int_naffi}
   \int_{\d M}|\nabla\ffi|_g\,d\sigma_g&=\lim_{s\to\infty}\int_{\{\ffi=s\}}|\nabla\ffi|_g\,d\sigma_g.
\end{align}
Recalling that $|\nabla\ffi|_g=1$ and $\pra=\prao$ on $\d M$, combining this identity with~\eqref{eq:GB}, we obtain
\begin{align}
\label{eq:formula10_GB}
|\Sphere^2|\,&\geq\,\lim_{s\to\infty}\int_{\{\ffi=s\}}|\nabla\ffi|_g\,d\sigma_g.
\end{align}
On the other hand, from the asymptotic behaviours~\eqref{eq:limit_naffi} and~\eqref{eq:limit_area}, it follows that
\begin{align} 
\label{eq: lim_int_grad_phi}
  \lim_{s\to\infty}\int_{\{\ffi=s\}}|\nabla\ffi|_g\,d\sigma_g&=\frac{\ADM}{\mm}\,|\Sphere^2|.
\end{align}
Substituting this into~\eqref{eq:formula10_GB}, we find $\mm\geq \ADM$.  It follows that equality must hold in all inequalities in this subsection. In particular, the rigidity statement of \Cref{pro:grad_est} allows us to conclude that  $(M,\go,N,\Psi)$ is isometric to the sub-extremal Reissner--Nordstr\"om system~\eqref{eq:RN} with mass $\mm$ and charge $\qq$. This proves the non-degenerate horizon case of \Cref{thm:BH}.

\subsection{Degenerate horizons}\label{sub:degenerate}
To conclude the proof of \Cref{thm:BH}, it remains to discuss the degenerate horizon case, which we will address in this Section.

As it is pointed out in \Cref{sec:setup}, the degenerate case is more delicate as the metric does not extend to $\partial M$.  Hence, when looking at the space-time picture, we need to be more precise and observe that the connected \emph{Killing horizon}
$\partial\mathfrak{L}$ defined in \Cref{sec:setup} 
corresponds to a \emph{Killing prehorizon}, according to \cite[Section~2.5]{Chrusciel-KillingHorizon}.  In these terms, a \emph{Killing horizon}  is a connected component of a \emph{Killing prehorizon}, which is a null embedded hypersurface such that the Killing vector field never vanishes on it and represents its null generator. The main difference between the non-degenerate and the degenerate case is that $\partial M$ is part of the \emph{Killing prehorizon} $\partial \mathfrak{L}$ for the former, while is not for the latter. This explains why for the degenerate case, we can not work directly on $\partial M$, but rather we first need  to establish some notation and results, as follows.

Throughout this Section, we will denote a  Killing Horizon by $\mathcal{H}$.

\subsubsection{Gaussian null coordinates}\label{sub: GNC}
In order to study degenerate horizons, we can make use of the 
Gaussian null coordinates, first introduced by \cite{MoncriefIsemberg} on a neighborhood of smooth null hypersurfaces. In particular, we consider such coordinates $(v, z, x^1, x^2)$ on a neighborhood $[0, 1) \times \mathcal{H}$ of a Killing horizon $\mathcal{H} \subset \partial\mathfrak{L}$. The coordinate $v$ is a parameter value along the integral curve of the null generator of $\mathcal{H}$, which on the Killing horizon coincides with the Killing vector field. The coordinates $(x^1, x^2)$ are local coordinates on a cross-section of $\mathcal{H}$, while $z$ is the projection onto the first factor, so that $\{z=0\} = \mathcal{H}$. 
 The spacetime metric reads 
\[ \mathfrak{g} = z^2 A \, dv \otimes dv + 2dv \otimes dz + zf_a \, dv \otimes dx^a + h_{ab}\,dx^a\otimes dx^b,\]
where $A = A(z, x^1, x^2), \;f_a = f_a(z, x^1, x^2), \;h_{ab} = h_{ab}(z, x^1, x^2)$, are components of a smooth function $A$, a one-form $f$ and a non-degenerate metric $h$, respectively. Notice that they are independent of $v$, due to the fact that  $\mathcal{H}$ is a Killing horizon. For more details on the construction of these coordinates see e.g.~\cite{CT, KW}. 
Following~\cite[Section~2]{KW}, the restriction of the Gaussian null coordinates to the time-slice $M$ gives coordinates $(z,x^1,x^2)$ on $\left((0,1)\times \mathcal{H} \right) \cap M$ and the metric on $M$ reads
\begin{align}\label{metricGaussCoordintro} 
g_0 = \frac{1}{z^2 A} dz\otimes dz -\frac{f_a}{z A} (dz\otimes dx^a+dx^a\otimes dz) +\left( h_{ab}+\frac{f_a f_b}{A}\right) dx^a \otimes dx^b.
\end{align}
In the electro-vacuum case, as proven in~\cite{CT}, it holds $A(0,x^1,x^2)\asdefined C^{2}$ for some constant $C>0$. Furthermore $N=z\,C + O(z^2)$, $f_a=O(z)$ as $z\to0$, and $\lim_{z \to 0}(\d _z \Psi)^2 = C^{2}$. 
 Finally, one has that $C^{2}h|_{z = 0}=g_{\mathbb{S}^2}$ is the standard round metric on the sphere.

 From \eqref{metricGaussCoordintro} and the above decay assertions, one computes the components of the inverse metric to behave as
\begin{align}
(g_0)^{zz}&={z^2A}+o(z^2),\\
(g_0)^{za}&=O(z^2),\\
(g_0)^{ab}&=h^{ab}+o(1)
\end{align}
for $a,b=1,2$ as $z\to0$, where $(h^{ab})$ is the inverse of $(h_{ab})$. Since $C^{2} h |_{z = 0}=g_{\mathbb{S}^{2}}$, we have $h^{ab}=O(1)$ as $z\to0$. It follows that
\begin{align}\label{eq:DN_degenerate}
|{\rm D} N|^2&=(g_0)^{zz}(\d _z N)^2+o(z^2)=z^2 C^2 \,(\d _z N)^2+o(z^2)
\end{align}
as $z\to0$. 

\subsubsection{Surface gravity}\label{sub: SurfGrav}
As in \cite[Section~12.5]{Wald}, the \emph{surface gravity} of a Killing horizon $\mathcal{H}$ associated to the Killing vector field $K$ is a quantity $\kappa$ defined on static Killing horizons by
\[ \nabla |K|_{\mathfrak{g}}^2 \, |_{\mathcal{H}}\asdefined -2 \, \kappa \, K \, |_{\mathcal{H}},\]
related to the fact that both $\nabla \vert K \vert_{\mathfrak{g}}^2$ and $K$ are null vector fields orthogonal to the horizon. In particular, in the static case $K = \partial_t$ as pointed out in \Cref{sec:setup}. In addition, as in \cite{Wald}, one can show that 
\begin{equation}\label{eq: SG1}
	\kappa^2 = -\frac{1}{2} \vert \nabla K\vert^2_{\mathfrak{g}} \, |_{\mathcal{H}}.
\end{equation}
The proof is a consequence of the Frobenius formula (see \cite[Appendix~B]{Wald}) for $K$, which is equivalent to $K$ being orthogonal to $\mathcal{H}$ and hence applies even when $\kappa$ vanishes. By a direct computation, in the static case, one has (see e.g. \cite[Section~2.1.4]{BorPhD}) that
\begin{equation}\label{eq: SG2}
\vert \nabla K\vert^2_{\mathfrak{g}} = -2 \, \vert{\rm D}N\vert ^2 = -2 \, \vert dN \vert^2_{\mathfrak{g}}.
\end{equation}
We point out that in the above formula we are using a slight abuse of notation: when we write $\mathrm{D}N$ we see the static potential $N$ as a function on $M$, whereas when we write the differential $dN$ we are seeing $N$ as a function on the spacetime $\mathfrak{L}$, defined in the obvious way, by setting $N(x,t)=N(x)$ for every $(x,t)\in M\times\mathbb{R}=\mathfrak{L}$.

In the literature (see e.g. \cite[Section~2.5]{Chrusciel-KillingHorizon}) the definition of \emph{non-degenerate} and \emph{degenerate} horizon is given according to $\kappa \neq 0$ in the former and $\kappa = 0$ in the latter, as it can be shown that $\kappa$ is constant on the Killing horizon. Combining \Cref{eq: SG1} and \Cref{eq: SG2}, we deduce the following: first, that the definition of \emph{non-degenerate} and \emph{degenerate} horizon provided in \Cref{sec:setup} (i.e. $ dN$ non vanishing or vanishing) is consistent with the usual one; second, that with respect to the Gaussian null coordinates, $\vert{\rm D}N\vert \to 0$ as $z \to 0$. This last observation will be crucial in the following discussion and can not be deduced directly from the expansion~\eqref{eq:DN_degenerate}.

\subsubsection{Proof of \Cref{thm:BH} for degenerate horizons}

We are finally ready to prove \Cref{thm:BH} in the case where $\d M$ is degenerate.

We first show that, in our setup, degenerate horizons are extremal, namely that $k=0$ or in other words $\mm=\qq$. In fact, assuming towards a contradiction that $k \neq 0$, by~\eqref{eq:naffi_DNk} and $\vert{\rm D}N\vert \to 0$ as $z \to 0$ (see \Cref{sub: SurfGrav}), one has $|\nabla \varphi|_g \to 0$ as $z \to 0$ . This easily leads to a contradiction: exploiting the harmonicity of the function $\varphi$ as in~\eqref{eq: identity_int_naffi} and using~\eqref{eq: lim_int_grad_phi}, for any $\varepsilon > 0$, integrating between the sets $S_\varepsilon \coloneqq \{ z = \varepsilon\} \cap M$ and any $\{ \varphi = s\}$ for sufficiently large $s$,  we get
\begin{align}\label{eq:intnablaphi}
   \int_{S_\varepsilon}|\nabla\ffi|_g\,d\sigma_g&=\lim_{s\to\infty}\int_{\{\ffi=s\}}|\nabla\ffi|_g\,d\sigma_g\, = \,\frac{\ADM}{\mm}\,|\Sphere^2|.  
\end{align}
Using the fact that the metric on ${S_\varepsilon}$ approaches the round metric of radius $1/C$ for a positive constant $C > 0$ (see \Cref{sub: GNC}), we have $\lim_{\ep\to 0}|S_\ep|=\frac{|\mathbb{S}^2|}{C^{2}}=\frac{4\pi}{C^{2}}$, the slices $S_\ep$ have finite area, hence the left-hand side of \eqref{eq:intnablaphi} goes to zero and we get a contradiction. 

As a consequence of the previous discussion, if $\d M$ is degenerate we have
 $\mm = \qq$, and thus $\rho\to\mm \asdefined \rho_0$ as $z \to 0$. Moreover, we have $N^2=(\Ele-1)^2$ and thus in particular $\Psi_{0}=1$ and $|\d_z N|=|\d_z \Psi|$. Recalling that $\lim_{z \to 0}|\d _z \Psi|=C$ from \Cref{sub: GNC}, this gives $\lim_{z \to 0}|\d _z N|=C$. Now from~\eqref{eq:DN_degenerate}, we compute $\frac{|{\rm D}N|^2}{N^2} \to C^{2}$. By~\eqref{eq:naffi_DNk} we then have
\begin{align}\label{eq: lim_DNN}
\lim_{z\to0}\vert\nabla\varphi\vert_{g}=\mm^2\lim_{z\to0}\frac{|\mathrm{D}N|^2}{N^2}=\mm^2C^{2}.
\end{align}
As in the non-degenerate case, we can hence fix $\mm$ so that $|\nabla \varphi|_g \to 1$ as $z \to 0$ by setting $\mm=\frac{1}{C}$. From this, we find
\begin{align}
\lim_{\ep\to 0}|S_\ep|&=\frac{4\pi}{C^{2}}= 4\pi\mm^{2}=4\pi\prao^2=\vert\mathbb{S}^{2}\vert \rho_{0}^{2}.
\end{align}

We can now follow the same arguments in \Cref{sub:horizon_consequences} to show that Proposition~\ref{pro:grad_est} holds in the degenerate case as well, and that the area bound~\eqref{eq:GB} is in force (with equality), replacing $|\d M|$ by $\lim_{\ep\to 0}|S_\ep|$. Arguing as at the very end of \Cref{sub:horizon_consequences} gives $\mm=\mu$ thus triggering rigidity in Proposition~\ref{pro:grad_est} and concluding the proof of \Cref{thm:BH}.


\section{Equipotential Photon Surface Uniqueness in Electro-Vacuum}\label{sec:photon_surface}
This section is dedicated to the proof of \Cref{thm:photon_surface}, split into several different cases, closely related to if and where $dN=0$ in the different Reissner--Nordstr\"om spacetimes discussed at the end of \Cref{sec:setup}. We will continue to assume $\Psi_{0}>0$ and will appeal to not yet fixed parameters $\mm\in\R$, $\qq>0$ chosen so that their ratio $\frac{\mm}{\qq}$ satisfies~\eqref{eq:m_over_q}. We will first discuss the case where the critical set $\C=\{ N^2+k =0\}$ is empty and $N_0\leq1$ in \Cref{sub:proof_photon_standard}. This part covers all previously known equipotential photon surface uniqueness results, see \Cref{sec:intro}. In particular, our analysis will characterize all --- necessarily non-degenerate (see \Cref{sub:degenerateequi}) --- equipotential photon surfaces in the sub-extremal ($\mm>\qq>0$) and the extremal case ($\mm=\qq>0$) as well as all super-extremal ones with $\qq>\mm>0$ and $\rho_{0}>\frac{\qq^{2}}{\mm}$, where $\rho_{0}$ denotes the value of the pseudo-radial function $\rho$ on $\partial M$. Its proof resembles the one given for horizons. 

Using slight modifications of the same strategy, we will then treat the cases $\C=\emptyset$ and $N_{0}>1$ in \Cref{sub:neg_mass} and $N_{0}=\sqrt{-k}$ (and thus $\C\neq\emptyset$) in \Cref{subsec:superextremalnocrit}. This will characterize all super-extremal equipotential photon surfaces with $\mm\leq0<\qq$ and with $\qq>\mm>0$ and $\rho_{0}=\frac{\qq^{2}}{\mm}$, respectively. As discussed in \Cref{sub:degenerateequi}, the first ones are necessarily (locally) non-degenerate while the second one is necessarily degenerate.

Finally, after a careful analysis of the critical set $\C$ and its complement when $\C\subset (M\setminus\partial M)$ in \Cref{sub:super-extremal_Cnonempty}, we will treat the case $\C\neq\emptyset$ and $N_{0}>\sqrt{-k}$ in \Cref{subsec:critnonemptyN0large}. This proof deviates the most from the black hole case and uses an approach via monotonicity formulas derived in \Cref{sub:monotonicity}. This will characterize all --- again necessarily (locally) non-degenerate (see \Cref{sub:degenerateequi}) --- super-extremal equipotential photon surfaces with $\qq>\mm>0$ and $\rho_{0}<\frac{\qq^{2}}{\mm}$.

Note that $\C\neq\emptyset$ implies $k<0$ by positivity of $N$ in the equipotential photon surface case, thus the condition $N_{0}\geq\sqrt{-k}$ is well-defined; moreover, by \eqref{def:k} and \eqref{eq:m_over_q}, $N_{0}<\sqrt{-k}$ is not possible. Hence our analysis covers all cases of \Cref{thm:photon_surface}.

As announced in \Cref{sec:setup}, we will denote by $\nu$ the unit normal to the boundary $\d M$ pointing towards the asymptotically flat end.\label{notation: normal} Notice that $\nu$ may coincide with the normal ${\rm n}=\frac{{\rm D}N}{|{\rm D}N|}$ introduced in \Cref{sub:geometry level sets} or may point in the opposite direction.

\subsection{Proof of \Cref{thm:photon_surface} for $\C=\emptyset$ and $N_0\leq 1$}\label{sub:proof_photon_standard}
Recalling $\Psi_{0}>0$ and using $N_{0}\leq1$, this immediately asserts that $\mm>0$ and hence $\mu>0$ by \Cref{lem:sign}. Next, the assumption $\C=\emptyset$ ensures that problem~\eqref{eq:pb_N} is well-posed in $M$. In particular, the second equation in~\eqref{eq:pb_N} gives that $N$ satisfies the maximum and minimum principle\textsuperscript{\ref{foot:max}} in $M$. Since $N\to1$ at infinity, it follows immediately that $N_0\neq1$, otherwise we would have $N\equiv 1$ in $M$ so that, since $\Ele$ is a function of $N$ and $\Ele\to 0$ at infinity, this would imply that $\Ele$ is trivial, which is against our hypothesis $\Psi_{0}>0$.
We conclude that $N_0<1$ and, by the maximum and minimum principle\textsuperscript{\ref{foot:max}} and the Hopf lemma\textsuperscript{\ref{foot:Hopf}}, we have $N_{0}<N<1$ in $M$ and $\nu(N)>0$ on $\d M$ or equivalently $\nu={\rm n}$ on $\partial M$. Moreover, $N\neq1$ on $M$ so that $\rho_{-}$ is well-defined and smooth on $M$ by \Cref{lem:rho-}, using again our assumption $\C=\emptyset$. This allows us to choose the pseudo-radial function $\rho_{-}$ from \eqref{eq:rho-} as in the black hole case treated in \Cref{sec:BH}, with $\varphi=\varphi_{-}$ and $g=g_{-}$ as in \eqref{eq:ffi}, \eqref{eq:g}, respectively. Finally, $\C=\emptyset$ implies that $\partial M$ is non-degenerate by the discussion in \Cref{sub:degenerateequi} so that \Cref{thm: ps_properties} applies.

It is now easy to see that the arguments given in \Cref{sub:asymptotics,sub: standard_Conformal} work in this framework as well. Proceeding exactly in the same way as in the proof of \Cref{pro:integral_Hg}, we integrate the divergence formula~\eqref{eq:bochner} and use the consequences of asymptotic flatness established in \Cref{sub:asymptoticsimplification,sub:asymptotics} to establish that 
\begin{align}
\label{eq:positive_int_phonon_standard}
\int_{\{\ffi=s\}}|\nabla\ffi|_g^{p-1}\Hg\,d\sigma_g\geq  0
\end{align}
for any level set $\{\ffi=s\}$ and any $p\geq 2$, with equality if and only if $(M,\go,N,\Ele)$ is isometric to a suitable piece of the Reissner--Nordstr\"om solution~\eqref{eq:RN} with mass $\mu$ and charge $q>0$ such that $\frac{\mu}{q}=\frac{\mm}{\qq}$. As in the black hole case, it is convenient to use the additional degree of freedom that we have in the choice of $\mm$ and $\qq>0$ to fix the value of $|\nabla\ffi|_g$ at the boundary $\partial M$. More precisely, we ask that
\begin{align}
\label{eq:naffi_fixed_photon_standard}
|\nabla\ffi|_g^2&=\frac{\prao^6}{(\mm \prao-\qq^2)^2}\,|\mathrm{D} N|^2=1
\end{align}
on $\d M$, noting that $\vert{\rm D}N\vert=\nu(N)>0$ and that $\vert{\rm D}N\vert$ is constant on $\partial M$ by \Cref{thm: ps_properties}. This is equivalent to asking that
\begin{align}\label{eq:qqphotolessthan1}
\qq&=\frac{\left(\frac{\mm}{\qq}-\sqrt{N_0^2+k}\right)^2\sqrt{N_0^2+k}}{|\mathrm{D}N|\vert_{\partial M}}.
\end{align}
As $\frac{\mm}{\qq}$ and $k$ have already been fixed via~\eqref{eq:m_over_q} and \eqref{def:k}, \eqref{eq:qqphotolessthan1} uniquely determines $\qq>0$ (and consequently $\mm$). Next, note that we actually did not use directly the fact that $\d  M = \{N = 0\}$ in the proof of \Cref{pro:grad_est}. Instead, we have used that $|\nabla \varphi|_g = 1$ on $\d  M$. As this continues to hold here, we can apply (the proof of) \Cref{pro:grad_est} to find that $|\nabla\ffi|_g\leq1$ in $M$, with equality at some interior point if and only if $(M,\go,N,\Ele)$ is isometric to a suitable piece of the Reissner--Nordstr\"om solution~\eqref{eq:RN} with mass $\mm$ and charge $\qq$. Combining this with \eqref{eq:limit_naffi}, we find also in this case that
\begin{align}\label{eq:grad_est_photon_standard}
\mm \leq \ADM 
\end{align}
with equality if and only if $(M,\go,N,\Ele)$ is isometric to a suitable piece of the Reissner--Nordstr\"om solution~\eqref{eq:RN} with mass $\mu$ and charge $q>0$ such that $\frac{\mu}{q}=\frac{\mm}{\qq}$.

Since we know that the mean curvature $\HHH$ of a canonical time-slice of a non-degenerate equipotential photon surface is constant by \Cref{thm: ps_properties}, we deduce from~\eqref{eq:naffi_DNk} and~\eqref{eq:H_Hg} that $\Hg$ is also constant on $\partial M$. Thus, \eqref{eq:positive_int_phonon_standard} gives $\Hg\geq0$ on $\d M$. This inequality gives us a couple of important consequences. First of all, from \eqref{eq:qqphotolessthan1} and the relation~\eqref{eq:H_Hg} by $\nu={\rm n}$ between $\HHH=\HHH_{\nu}$ and $\Hg$, observing that 
\begin{align}\label{eq:restriction}
\mm \prao-\qq^2&=\frac{\qq\sqrt{N_0^2+k}\,(\mm+\qq\sqrt{N_0^2+k})}{1-N_0^2}\,>\,0,
\end{align}
where $\prao$ is the value of $\pra$ on $\d M$, we deduce that
\begin{align}\label{eq:Hestimate}
\HHH_{\nu}&\geq\frac{2N_0\prao^2}{\mm\prao-\qq^2}\,|{\rm D}N|=\frac{2N_0}{\prao}
\end{align}
so that in particular $\HHH_{\nu}>0$ (compare the discussion after \eqref{normalLapse}, recalling $\nu={\rm n}$ in this case). On the other hand, recalling umbilicity of $\partial M$ from \Cref{thm: ps_properties}, combining $\Hg\geq 0$ with \eqref{eq:Hg}, \eqref{eq:rho-}, \eqref{eq:qqphotolessthan1}, and \eqref{eq:Hestimate}, we get
\begin{align}
\operatorname{R}^{\d M}\geq\frac{\HHH_{\nu}^2}{2}+\frac{2\prao^2(2\mm\prao-\qq^2)}{(\mm\prao-\qq^2)^2}\,|{\rm D}N|^2\geq
\frac{2N_0^2}{\prao^2}+\frac{2(2\mm\prao-\qq^2)}{\prao^4}&=\frac{2}{\prao^2}.
\end{align}
Integrating this over $\d M$ and using the Gauss--Bonnet formula, we obtain
\begin{align}\label{eq:applyGB}
4\pi\chi(\d M)&=\int_{\d M}\operatorname{R}^{\d M}d\sigma\geq\frac{2|\d M|}{\prao^2}.
\end{align}
It follows that the Euler characteristic of $\d M$ is positive, meaning that $\d M$ is diffeomorphic to a sphere. Hence $\chi(\d M)=2$ and $|\d M|\leq 4\pi\prao^2$. In other words, we have recovered the area bound~\eqref{eq:GB} and we can now conclude the proof exactly as in \Cref{sub:horizon_consequences}; namely we combine~\eqref{eq: identity_int_naffi} with~\eqref{eq: lim_int_grad_phi} to get $|\d M|=\frac{\ADM}{\mm}\,|\mathbb{S}^2|$, from which it follows that $\ADM\leq\mm$ as desired. This allows us to conclude via \eqref{eq:grad_est_photon_standard} as in the black hole case treated in \Cref{sec:BH}. Moreover, we learned from \eqref{eq:restriction} that $\rho_{0}>\frac{\qq^{2}}{\mm}$ which imposes the anticipated restriction on $\rho_{0}$ when $\qq>\mm>0$ and no restriction on $\rho_{0}$ when $\mm\geq\qq>0$ as $\rho_{0}>r_{\mm,\qq}>\frac{\qq^{2}}{\mm}$ in that case.

\subsection{Proof of \Cref{thm:photon_surface} for $\C = \emptyset$ and $N_0 > 1$}\label{sub:neg_mass}
Again, we start by observing that, since the critical set $\C$ is empty, the second equation in~\eqref{eq:pb_N} is a well-posed elliptic equation for $N$ on the whole manifold $M$. Since $N \to 1$ at infinity, we conclude by the maximum and minimum principle\textsuperscript{\ref{foot:max}} and the Hopf lemma\textsuperscript{\ref{foot:Hopf}} that $1<N<N_0$ holds on $M$ and $\nu(N)<0$ or in other words $\nu=-{\rm n}$ holds on $\d M$. Next, as usual, let $\mm\in\R$, $\qq>0$ be chosen so that their ratio $\frac{\mm}{\qq}$ satisfies~\eqref{eq:m_over_q}. Since $N > 1$ by the above and $\Ele>0$  in $M$ by \Cref{pro:simplification}, we deduce from\eqref{eq: idPsiN} that $\Ele>\frac{2\mm}{\qq}$ in $M$. Since $\Ele\to 0$ at infinity, we conclude that $\mm\leq0$ as expected. Since $\mm \leq 0$, \Cref{lem:rho+,lem:rho-} suggest to use the pseudo-radial function $\rho_{+}$ from~\eqref{eq:rho+}, with $\varphi=\varphi_{+}$ and $g=g_{+}$ from \eqref{eq:ffi} and \eqref{eq:g}, respectively. Moreover, $\C=\emptyset$ implies that $\partial M$ is non-degenerate by the discussion in \Cref{sub:degenerateequi} so that \Cref{thm: ps_properties} applies.

Planning to proceed as before, we first need to analyze the asymptotic behavior of $\rho$, $\varphi$, and $g$. We will first discuss the argument for $\mm<0$ and then treat the case $\mm=0$ in \Cref{subsub:mm=0}; note that the latter case needs to be handled separately because of its different asymptotic behavior.

\subsubsection{The sub-case $\mm<0$}\label{subsub:mmnot0}
In this case, recalling that $\mu<0$ follows from $\mm<0$ by \Cref{lem:sign}, the asymptotic considerations in \Cref{sub:asymptotics} all remain valid upon replacing $\rho_{-}$, $\varphi_{-}$, and $g_{-}$ with $\rho_{+}$, $\varphi_{+}$, and $g_{+}$, respectively, and $N_{0}<t_{0}<1$ by $N_{0}>t_{0}>1$ --- except that the sign on the asymptotics of ${\rm n}$ in \eqref{eq:DN_asymptotics} and \eqref{eq:ndecay} is reversed (with corresponding sign changes in the computation directly after \eqref{eq:ndecay}). Consequently, we have $\nu=-{\rm n}={\rm n}_{g}$ on $\partial M$.

It is now easy to see that the arguments given in \Cref{sub: standard_Conformal} work in this framework as well. Proceeding exactly as in the proof of \Cref{pro:integral_Hg}, we integrate the divergence formula~\eqref{eq:bochner} and use the consequences of asymptotic flatness established in \Cref{sub:asymptoticsimplification,sub:asymptotics} to establish that \eqref{eq:positive_int_phonon_standard} also holds here for any level set $\{\ffi=s\}$ and any $p\geq 2$, with equality if and only if $(M,\go,N,\Ele)$ is isometric to a suitable piece of the Reissner--Nordstr\"om solution~\eqref{eq:RN} with mass $\mu$ and charge $q>0$ such that $\frac{\mu}{q}=\frac{\mm}{\qq}$. Next, we fix the choice of $\mm$ and $\qq>0$ so that $\vert\nabla\varphi\vert_{g}=1$ on $\partial M$. As in \Cref{sub:proof_photon_standard}, this amounts to asking that~\eqref{eq:naffi_fixed_photon_standard} is in place, but note that this time $\nu=-{\rm n}={\rm n}_{g}$ or
\begin{align}\label{eq:choiceq}
1&=|\nabla\ffi|_g=-\frac{\prao^3}{\mm \prao-\qq^2}\,|\mathrm{D} N|
\end{align}
on $\partial M$ which is equivalent to
\begin{align}\label{eq:qqphotobiggerthan1}
\qq&=\frac{\left(\frac{\mm}{\qq}+\sqrt{N_0^2+k}\right)^2\sqrt{N_0^2+k}}{|\mathrm{D}N|\vert_{\partial M}}>0.
\end{align}
 As $\frac{\mm}{\qq}$ and $k$ have already been fixed via~\eqref{eq:m_over_q} and \eqref{def:k}, \eqref{eq:qqphotobiggerthan1} uniquely determines $\qq>0$ (and consequently $\mm$). Proceeding as in \Cref{sub:proof_photon_standard}, we can now apply (the proof of) \Cref{pro:grad_est} to find that $|\nabla\ffi|_g\leq1$ in $M$, with equality at some interior point if and only if $(M,\go,N,\Ele)$ is isometric to a suitable piece of the Reissner--Nordstr\"om solution~\eqref{eq:RN} with mass $\mu$ and charge $q>0$ such that $\frac{\mu}{q}=\frac{\mm}{\qq}$. As before, we also deduce $\mm\leq\ADM$ with equality if and only if $(M,\go,N,\Ele)$ is isometric to a suitable piece of the Reissner--Nordstr\"om solution~\eqref{eq:RN} with mass $\mu$ and charge $q>0$ such that $\frac{\mu}{q}=\frac{\mm}{\qq}$.
 
Also, since we know that the mean curvature $\HHH$ of a canonical time-slice of a non-degenerate equipotential photon surface is constant by \Cref{thm: ps_properties}, we deduce from~\eqref{eq:naffi_DNk} and~\eqref{eq:H_Hg} that $\Hg$ is also constant on $\partial M$. Thus, \eqref{eq:positive_int_phonon_standard} gives $\Hg\geq0$ on $\d M$ also here. Since clearly $\mm \pra_0-\qq^2<0$ in this case, we deduce from~\eqref{eq:H_Hg} that $\HHH<0$ on $\d M$. As ${\rm n}=-\nu$, this shows that the mean curvature $\HHH_{\nu}=-\HHH$ with respect to the normal $\nu$ pointing towards the asymptotically flat end is again positive (compare the discussion after \eqref{normalLapse}). More precisely, using \eqref{eq:choiceq}, \eqref{eq:H_Hg} gives 
\begin{align}\label{eq:Hestimate_negativemass}
\HHH_{\nu}&\geq-\frac{2N_0\prao^2}{\mm\prao-\qq^2}|{\rm D}N|=\frac{2N_0}{\prao},
\end{align}
whereas from~\eqref{eq:Hg} and \eqref{eq:choiceq}, we get
\begin{align}\label{eq:Restimate_negativemass}
\operatorname{R}^{\d M}&\leq\frac{\HHH_{\nu}^2}{2}+\frac{2(2\mm\prao-\qq^2)}{\prao^4},
\end{align}
using again \Cref{thm: ps_properties}. This time, combining the scalar curvature estimate \eqref{eq:Restimate_negativemass} with the Gauss--Bonnet formula is inconclusive. Instead, we will exploit the photon surface constraint \eqref{foundtheconstant} in the form of \eqref{scalarMeanCurv} and~\eqref{normalLapse}, recalling that we have already established $\HHH_{\nu}>0$. Rewriting these conditions using \eqref{eq:qqphotobiggerthan1}, \eqref{eq:choiceq}, $\nu(N)=-\vert{\rm D}N\vert$, and \eqref{eq:relation_DN_DPsi}, we find
\begin{align}\label{eq:scalarMeanCurv_negativemass}
\operatorname{R}^{\d M}&= \frac{2\qq^2}{\prao^4}+\left(c + \frac{1}{2} \right)\HHH_{\nu}^2,\\\label{eq:normalLapse_negativemass}
c\HHH_{\nu} &= \frac{2(\mm\prao-\qq^2)}{N_{0}\prao^3} .
\end{align}
From~\eqref{eq:Hestimate_negativemass}, \eqref{eq:normalLapse_negativemass}, and $\nu(N)=-\vert{\rm D}N\vert<0$, we then deduce
\begin{align}\label{eq:cestimate}
\frac{\mm\prao-\qq^2}{N_{0}^2\prao^2}\leq c<0.
\end{align}
Computing $\HHH_{\nu}$ from \eqref{eq:normalLapse_negativemass} and substituting it into \eqref{eq:scalarMeanCurv_negativemass}, we get by \eqref{eq:relation_N_psi} that
\begin{align*}
\frac{c^2 N_{0}^2\prao^6}{2}\operatorname{R}^{\d M}&=c^2 N_{0}^2 \prao^2 \qq^2+(2c+1)(\mm\prao-\qq^2)^2\\
&=\qq^2\left[\frac{N_{0}^2\prao^2 c-(m\prao-\qq^2)}{N_{0}\prao}\right]^2\!+ \prao(\mm\prao-\qq^2)^2\left[\frac{2\mm c}{\mm\prao-\qq^2}+\frac{\prao-2\mm}{N_{0}^2\prao^2}\right]\!.
\end{align*}
Recalling that $\mm$ and $\mm\prao-\qq^2$ are both negative, we can then apply \eqref{eq:cestimate} in the term in the last square bracket above to deduce 
\begin{align}
\operatorname{R}^{\d M}&\geq\frac{2(\mm\prao-\qq^2)^2}{N_{0}^4 \prao^6 c^2}.
\end{align}
Applying \eqref{eq:cestimate} again gives
\begin{align}\label{eq:scalbelow}
\operatorname{R}^{\d M}&\geq\frac{2}{\prao^2}.
\end{align}
We now conclude exactly as in \Cref{sub:proof_photon_standard}, by integrating this bound on the scalar curvature and using the Gauss--Bonnet formula. This concludes the proof for $\mm<0$.

\subsubsection{The sub-case $\mm=0$}\label{subsub:mm=0}
Let us now discuss the case when $\mm=0$ and hence $\mu=0$ by \Cref{lem:sign}. We begin by analyzing the asymptotic decay of $\rho$, $\mu$, and $g$ which follows from the additional decay assumption \eqref{eq:asymmu0} we are making in case $\mu=0$. From \Cref{lem:signkappa}, we know that $\kappa>0$. Using $N^{2}=1+\Psi^{2}$ from \eqref{eq: idPsiN} and $\mm=0$ and \eqref{eq:nupsi} with $\kappa>0$ so that $\nu(\Psi)<0$ on the leaves of $N$, we find $\nu(N)=-\vert{\rm D}N\vert$ on the leaves of $N$ and thus ${\rm n}=-\nu$ asymptotically. Moreover, we see that
\begin{align}\label{eq:asyNm0}
N^{2}&=1+\frac{\kappa^{2}}{\vert x\vert^{2}}+o_{2}(\vert x\vert^{-2}),\\\label{eq:asyDNm0}
\vert {\rm D}N\vert&=\frac{\kappa^{2}}{\vert x\vert^{3}}+o_{1}(\vert x\vert^{-3})
\end{align}
as $\vert x\vert\to\infty$. As in \Cref{subsub:mmnot0}, the asymptotic considerations in \Cref{sub:asymptotics} up to and including \eqref{eq:gbar} all remain valid upon replacing $N_{0}<t_{0}<1$ by $N_{0}>t_{0}>1$ --- except that the sign on the asymptotics of ${\rm n}$ in \eqref{eq:DN_asymptotics} and \eqref{eq:ndecay} is reversed (with corresponding sign changes in the computation directly after \eqref{eq:ndecay}). Moreover, we again have $\nu=-{\rm n}={\rm n}_{g}$ on $\partial M$.

Furthermore, replacing $\rho_{-}$, $\varphi_{-}$, and $g_{-}$ with $\rho_{+}$, $\varphi_{+}$, and $g_{+}$, respectively, as above, we immediately find $\rho=\frac{\qq}{\Psi}$ and hence $\varphi=\log(\frac{\qq(N+1)}{\Psi})$. Exploiting \eqref{eq:asymmu0} with $\kappa>0$ as well $\qq>0$, we thus find from \eqref{eq:rho+}, \eqref{eq:ffi}, and \eqref{eq:asyNm0} that
\begin{align}\label{eq:rhoasym0}
\pra &= \frac{\qq |x|}{\kappa} + o_{2}(|x|),\\\label{eq:phiasym0}
\ffi&=\log \left(\frac{2\qq |x|}{\kappa}\right)+o_{2}(1)
\end{align}
 as $\vert x\vert\to\infty$. Inserting \eqref{eq:rhoasym0}, \eqref{eq:phiasym0}, \eqref{eq:asyNm0}, and \eqref{eq:asyDNm0} into \eqref{eq:naffi_DNk} and \eqref{eq:nana_ffi} gives
\begin{align}\label{eq:limit_naffim0}
|\nabla\ffi|_g&= \frac{\qq}{\kappa} + o_1(1),\\ 
\nana_{ij}\ffi&=o(|x|^{-3})
\end{align}
as $\vert x\vert\to\infty$. Furthermore, by \eqref{eq:g} and \eqref{eq:rhoasym0}, we get
\begin{align}\label{eq:limit_nanaffim0}
|\nana\ffi|_g&=o(\vert x\vert^{-1})
\end{align}
as $\vert x\vert\to\infty$. Arguing as in \Cref{sub:asymptotics}, we obtain
\begin{align}\label{eq:limit_aream0}
\lim_{s\to\infty}\int_{\{\ffi=s\}}d\sigma_g &=\frac{\kappa^2}{\qq^2} \,|\mathbb{S}^2|.
\end{align}
With these asymptotic assertions, it is straightforward to see that \eqref{eq:positive_int_phonon_standard} also holds here for any level set $\{\ffi=s\}$ and any $p\geq 2$, with equality if and only if $(M,\go,N,\Ele)$ is isometric to a suitable piece of the Reissner--Nordstr\"om solution~\eqref{eq:RN} with mass $\mm=\mu=0$ and charge $\kappa$. Next, as usual, we fix the choice of $\qq>0$ so that $\vert\nabla\varphi\vert_{g}=1$ on $\partial M$. As above, this amounts to asking that~\eqref{eq:choiceq} holds on $\partial M$ which is equivalent to
\begin{align}\label{eq:qqphotobiggerthan1m0}
\qq&=\frac{\sqrt{N_0^2-1}^{\,3}}{|\mathrm{D}N|}>0.
\end{align}
 This clearly uniquely determines $\qq>0$. Proceeding as in \Cref{sub:proof_photon_standard}, we can now apply (the proof of) \Cref{pro:grad_est} to find that $|\nabla\ffi|_g\leq1$ in $M$, with equality at some interior point if and only if $(M,\go,N,\Ele)$ is isometric to a suitable piece of the Reissner--Nordstr\"om solution~\eqref{eq:RN} with mass $\mm=0$ and charge $\qq$. Furthermore, from \eqref{eq:limit_naffim0}, we deduce $\qq\leq\kappa$ with equality if and only if $(M,\go,N,\Ele)$ is isometric to a suitable piece of the Reissner--Nordstr\"om solution~\eqref{eq:RN} with mass $\mm=\mu=0$ and charge $\kappa$. Arguing as in \Cref{subsub:mmnot0}, we again find \eqref{eq:scalbelow} on $\partial M$ and can conclude as at the end of \Cref{sub:proof_photon_standard} to find \eqref{eq:applyGB} and hence $\vert\partial M\vert\leq 4\pi\rho_{0}^{2}$. On the other hand, from \eqref{eq: identity_int_naffi}, \eqref{eq:limit_naffim0}, and \eqref{eq:limit_aream0}, we find that $\kappa\leq\qq$ which concludes the proof.

\subsection{Proof of \Cref{thm:photon_surface} for $N_0 =\sqrt{-k}$}\label{subsec:superextremalnocrit}
From $\C\neq\emptyset$, it follows via \eqref{eq:m_over_q} that $0<\mm<\qq$. Next, from $N_{0}=\sqrt{-k}$, we find $\partial M\subseteq\C$ and
\begin{align}\label{eq: radius}
\prao &= \frac{\qq^2}{\mm}
\end{align}
as anticipated. In particular, $\partial M$ is degenerate by \eqref{eq:relation_DN_DPsi}. We claim that $\C=\partial M$. To see this, let us exploit \eqref{eq: Nk_geq_0} to find $\C=\{\Psi=\frac{\mm}{\qq}\}$. Moreover, the last equation in \eqref{eq:pb} is invariant under addition of constants to $\Psi$, hence it also applies to $\overline{\Psi}\definedas\Psi-\frac{\mm}{\qq}$. Note that by $N_{0}=\sqrt{-k}$ and \eqref{eq: Nk_geq_0}, we have $\overline{\Psi}\vert_{\partial M}=0$, while the asymptotic decay assumption on $\Psi$ gives $\overline{\Psi}\to-\frac{\mm}{\qq}<0$ at infinity. Applying the strong maximum principle\textsuperscript{\ref{foot:max}} to $\overline{\Psi}$ thus gives $\overline{\Psi}<0$ or in other words $\Psi<\frac{\mm}{\qq}$ in $M\setminus\partial M$ and hence $\C=\partial M$ as claimed.

As in the degenerate horizon case, $\C=\partial M$ tells us that the equations in~\eqref{eq:pb_N} are well-defined everywhere in the interior of $M$. Moreover, by the maximum principle\textsuperscript{\ref{foot:max}} applied to the second equation in \eqref{eq:pb}, exploiting $N_{0}=\sqrt{-k}<1$ and the asymptotic assumption $N\to1$ at infinity, ensures that $\sqrt{-k}<N<1$ in $M\setminus\partial M$ and in particular $N\neq1$ in $M$. Hence we can use the pseudo-radial function $\rho_{-}$ from~\eqref{eq:rho-} also in this case, we well as $\varphi=\varphi_{-}$ and $g=g_{-}$ from \eqref{eq:ffi}, \eqref{eq:g}, respectively, and note that they are well-defined and smooth on $M\setminus\partial M$ by \Cref{lem:rho-}. Consequently, using $\mm>0$ and thus $\mu>0$ by \Cref{lem:sign}, the asymptotic analysis from \Cref{sub:asymptotics} also applies here, for some $\sqrt{-k}<t_{0}<1$. Moreover, the proof of \Cref{pro:grad_est} can be carried out upon adjusting the minimum principle argument as in the proof of \Cref{pro:simplification}. Thus, we have $\mm \leq \ADM$, with the usual rigidity statement, and we only need to prove the reverse inequality in order to conclude.

To see that also $\mu\geq\mm$, we first recall that by the Hopf lemma\textsuperscript{\ref{foot:Hopf}}, our assumption $\Psi_{0}>0$, and our asymptotic condition \eqref{eq:asymptotics} on $\Psi$, we have $\nu(\Ele)<0$ and thus $\vert{\rm D}\Psi\vert>0$ on $\partial M$. Combining \eqref{eq:naffi_DNk} with \eqref{eq:relation_DN_DPsi} gives $|\nabla\ffi|_g^2=\frac{\pra^4 |\mathrm{D}\Ele|^2}{\qq^2 N^2}$ on $M\setminus\partial M$ which by continuity extends to $\partial M$, leading to
\begin{align}\label{eq:Cdvarphi}
|\nabla\ffi|_g^2&=\frac{\pra_{0}^4 |\mathrm{D}\Ele|^2}{\qq^2 N_{0}^2}
\end{align}
and thus $\vert\nabla\varphi\vert_{g}>0$ on $\partial M$. Exploiting \eqref{eq: radius}, it is now easily seen that we can assert $|\nabla\ffi|_g=1$ on $\d M$ by the unique choice 
\begin{align}\label{eq:defqdegenerate}
\qq&=\frac{\left(\frac{\mm}{\qq}\right)^2\!N_{0}}{|{\rm D}\Psi|\vert_{\partial M}}.
\end{align}
Now, recall that $g_{0}$ smoothly extends to $\partial M$, but $\rho$ is only continuous there. On the other hand, by \eqref{eq:relation_DN_DPsi} and because $N>\sqrt{-k}$ in $M\setminus\partial M$, $N\to1$ at infinity, and $\vert{\rm D}\Psi\vert>0$ on $\partial M$, level sets $\{N=N_{\varepsilon}\}$ of $N$ with $\vert N_{\varepsilon}-N_{0}\vert<\varepsilon$ for $\varepsilon>0$ sufficiently small are connected, close to $\partial M$, and have $\vert{\rm D}N\vert>0$. Hence we can apply \Cref{pro:integral_Hg} on $\{N=N_{\varepsilon}\}$, obtaining \eqref{eq:positive_int} on $\{N=N_{\varepsilon}\}$. On the other hand, applying \eqref{eq:H_Hg} on $\{N=N_{\varepsilon}\}$ and exploiting \eqref{eq:relation_DN_DPsi}, one has
\begin{align}\label{eq:Hgremainswelldef}
\HHH_{g}\vert\nabla\varphi\vert_{g}&=\frac{\rho^{3}}{\qq}\left[\frac{\HHH\vert{\rm D}\Psi\vert}{N}-\frac{2\rho\,\vert{\rm D}\Psi\vert^{2}}{\qq N}\right]
\end{align}
on $\{N=N_{\varepsilon}\}$ for $\varepsilon>0$ sufficiently small. Using \eqref{eq: radius} and \eqref{eq:defqdegenerate}, the right hand side of \eqref{eq:Hgremainswelldef} converges to a well-defined constant as $\varepsilon\to0^+$. Combining this with the limit of \eqref{eq:positive_int} gives 
\begin{align}
 \HHH\geq\frac{2\mm \sqrt{-k}}{\qq^{2}}>0
\end{align}
on $\partial M=\C$, and in particular $\HHH=\HHH_{\nu}>0$ as $N>\sqrt{-k}=N_{0}$ and hence $\nu={\rm n}$ (compare the discussion after \eqref{normalLapse}). Now, recall from \Cref{sub:degenerateequi} that \Cref{thm: ps_properties} and in particular \eqref{foundtheconstant}--\eqref{normalLapse} also apply in the degenerate case. Thus, from \eqref{normalLapse} and $\nu(N) = 0$ we find $c=0$ and thus 
\begin{align} 
\operatorname{R}^{\d  M} &\geq \frac{2\mm^{2}}{\qq^{4}}=\frac{2}{\rho_{0}^{2}}
\end{align}
by \eqref{scalarMeanCurv}, \eqref{eq:defqdegenerate}, and \eqref{eq: radius}. We can now follow the same procedure applied before, namely we integrate over the surface $\d  M = \C$, apply the Gauss--Bonnet formula with respect to $d\sigma$, and use \eqref{eq: identity_int_naffi} (interpreted as a limit to $\partial M=\C$ as above and using continuity of $\rho$) and~\eqref{eq: lim_int_grad_phi} to obtain $\mu\leq\mm$ and thus isometry to a suitable piece of the Reissner--Nordstr\"om solution of mass $\mm$ and charge $\qq$ as desired, with the isometry extending to $\partial M$ by continuity.

\subsection{The structure of $\C$}\label{sub:super-extremal_Cnonempty}
In order to conclude the proof of \Cref{thm:photon_surface}, it remains to consider the case where $\C\neq\emptyset$ and $N_0>\sqrt{-k}$, so that $\partial M\cap\C=\emptyset$. As discussed in \Cref{subsec:superextremalnocrit}, it follows from $\C\neq\emptyset$ that $\qq>\mm>0$. Moreover, it follows from \eqref{eq:relation_DN_DPsi} and the definition of the critical set $\C$ that the gradient of $N$ must vanish on $\C$ and that $\sqrt{-k}$ is in fact necessarily the minimum value attained by $N$ in $M$. As a consequence, we have to be more careful when applying the maximum and minimum principle to the second equation in \eqref{eq:pb_N}. In order to prove \Cref{thm:photon_surface} in this setting, we first need to gain a better understanding of the structure of the set $\C=\{N=\sqrt{-k}\}$. This is the purpose of this subsection.

First of all, we recall from~\cite{Tod} that $N$ is a real analytic function on $M$. In particular, since $\C$ is clearly compact, 
the {\L}ojasiewicz Structure Theorem on the level sets of a real analytic function (established in~\cite{Lojasiewicz}, see also~\cite[Theorem~6.3.3]{Krantz_Parks}) guarantees us that $\C$ admits a decomposition
\begin{align}
\C=\C_0\sqcup \C_1\sqcup \C_2,
\end{align}
where $\C_0$ is a finite collection of points, $\C_1$ is a finite collection of real analytic curves and $\C_2$ is a finite collection of real analytic surfaces. A point $p\in\C$ belongs to $\C_i$, $i=0,1,2$, if $i$ is the largest integer such that there exists a neighborhood $\Omega$ of $p$ in $M$ and a real analytic diffeomorphism $\phi:\Omega\to\R^3$ such that $\phi(\Omega\cap\C_i)=L\cap\phi(\Omega)$ for some $i$-dimensional linear space $L$.

This structure of $\C$ is crucial in the following result, which tells us that the set $\C$ is connected and disconnects the manifold. 

\begin{lemma}\label{lem:structure_C}
Let $(M,g_0,N,\Ele)$ be an asymptotically flat electrostatic electro-vacuum system with connected boundary $\partial M$. Let  $N=N_{0}$ and $\Psi=\Psi_{0}$ hold on $\partial M$ for constants $N_{0}>\sqrt{-k}$, $\Psi_{0}>0$ and assume that $\C=\{N^2+k=0\}\neq\emptyset$. 

Then $\C$ disconnects the manifold $M$ into two pieces, one bounded \emph{inner piece $M_i$} containing $\partial M$ and one  \emph{outer piece $M_o$} containing the asymptotically flat end. Furthermore, every connected component of $\C$ intersects $\Gamma \definedas  \overline{M_i}\cap\overline{M_o}\subset \C$ and, at any point $p\in \C$, $|\mathrm{D}\Ele|(p)\neq 0$ if and only if $p\in \C_2\cap \Gamma$.
\end{lemma}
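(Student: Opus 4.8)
The plan is to exploit the fact that on $\C$ the lapse $N$ attains its global minimum $\sqrt{-k}$ (as recalled in the paragraph before the statement), together with the {\L}ojasiewicz structure of $\C$ and a maximum-principle analysis of the quantity $\overline\Psi\definedas\Psi-\tfrac{\mm}{\qq}$, which is harmless for the last equation in \eqref{eq:pb} since that equation is invariant under adding constants to $\Psi$. By \eqref{eq: Nk_geq_0} one has $\C=\{\overline\Psi=0\}$, and on $M\setminus\partial M$ the function $\overline\Psi$ solves the (well-posed, away from $\C$) elliptic equation $\Delta\overline\Psi=\tfrac1N\langle\mathrm{D}N\,|\,\mathrm{D}\overline\Psi\rangle$. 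Near any point of $\C$ the coefficient $\tfrac1N$ is smooth since $N\to\sqrt{-k}>0$ there, so in fact $\overline\Psi$ solves a nice elliptic equation in a full neighborhood of $\C$. The strong maximum principle then forbids $\overline\Psi$ (hence $\C$) from containing an open set, which already tells us $\C_3=\emptyset$, consistent with the {\L}ojasiewicz decomposition $\C=\C_0\sqcup\C_1\sqcup\C_2$.

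First I would establish that $\C$ disconnects $M$. Since $\C$ is compact and avoids $\partial M$, the complement $M\setminus\C$ has a unique unbounded component $M_o$ (containing the asymptotic end) and at least one bounded component; I claim $\partial M$ lies in exactly one bounded component $M_i$ and that $M\setminus\C=M_i\sqcup M_o$. On $M_o$ we have $\overline\Psi\not\equiv0$, and since $\overline\Psi\to-\tfrac{\mm}{\qq}<0$ at infinity while $\overline\Psi=0$ on $\partial M_o\subseteq\C$, the strong maximum principle gives $\overline\Psi<0$ throughout $M_o$, i.e.\ $N<1$ and $\rho_-$ is well-defined there. On any bounded component $\Omega$ of $M\setminus\C$ not meeting $\partial M$, $\overline\Psi$ would vanish on all of $\partial\Omega\subseteq\C$, hence $\overline\Psi\equiv0$ on $\Omega$ by the maximum principle, contradicting that $\C$ contains no open set; thus the only bounded component is the one $M_i$ containing $\partial M$. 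On $M_i$, $\overline\Psi$ is likewise $<0$ in the interior or one argues with the Hopf lemma at $\partial M$ (using $\Psi_0>0$) to see $N<1$ there too, so $\rho_-$ is well-defined on all of $M\setminus(\C\cup\partial M)$. This yields the bounded inner piece $\overline{M_i}$ and outer piece $\overline{M_o}$ with $\Gamma=\overline{M_i}\cap\overline{M_o}\subseteq\C$.

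Next I would show every connected component $\C^{(j)}$ of $\C$ meets $\Gamma$. Suppose some component $\C^{(j)}$ is contained entirely in the interior of $\overline{M_i}$ (the $\overline{M_o}$ case is symmetric). Since $\C^{(j)}$ is a compact real-analytic set of dimension $\le 2$ sitting in the open manifold $M_i\cup\C$, and since $N\ge\sqrt{-k}$ with equality exactly on $\C$, the set $\C^{(j)}$ is a local minimum stratum that is ``interior'' to $M_i$; but then one can find a small ball around a point of $\C^{(j)}$ entirely inside $M_i\cup\C^{(j)}$ on whose boundary $N>\sqrt{-k}$ strictly, while $N=\sqrt{-k}$ somewhere inside — this contradicts the minimum principle for the elliptic equation satisfied by $N$ near $\C$ (equivalently, apply the strong maximum principle to $\sqrt{-k}-N\le 0$, whose interior zero set would have to be open). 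Hence $\C^{(j)}$ must abut both $M_i$ and $M_o$, i.e.\ intersect $\Gamma$. Finally, the pointwise characterization: $|\mathrm{D}\Psi|(p)=0\iff|\mathrm{D}N|(p)=0$ fails on $\C$ by \eqref{eq:relation_DN_DPsi} — actually \eqref{eq:relation_DN_DPsi} forces $\mathrm{D}N=0$ on all of $\C$, so the content is about $\mathrm{D}\Psi$. If $p\in\C_2$ is a smooth $2$-dimensional point lying on $\Gamma$, then $\overline\Psi$ changes sign across $\C$ near $p$ (negative on both $M_i$ and $M_o$ sides would, by the maximum principle again applied on a small two-sided neighborhood, force $\overline\Psi\equiv 0$, impossible; so it is negative on both sides only if $p\notin\Gamma$, whereas $p\in\Gamma$ means $\overline\Psi<0$ on the $M_o$ side and one checks via the structure of $\C$ that it must be $>0$ on a piece of the $M_i$ side or vanish to higher order) — more carefully, a smooth zero point $p\in\C_2$ at which $|\mathrm{D}\overline\Psi|(p)=0$ would be a degenerate zero, and combined with $\overline\Psi\le 0$ on the $M_o$-side this forces (by the strong maximum principle / Hopf lemma normal to $\C_2$) that $\overline\Psi\equiv 0$ on an open set, a contradiction; hence $|\mathrm{D}\Psi|(p)\ne0$. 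Conversely, at a point $p\in\C_0\cup\C_1$ (dimension $\le 1$) the real-analytic function $\overline\Psi$ has a zero of order $\ge 2$ by the structure theorem, so $\mathrm{D}\overline\Psi=\mathrm{D}\Psi$ vanishes there; and at a point of $\C_2\setminus\Gamma$, $\C_2$ is locally a boundaryless piece with $M$ on both sides, so $\overline\Psi$ has a local max there forcing $\mathrm{D}\overline\Psi=0$. Combining, $|\mathrm{D}\Psi|(p)\ne0\iff p\in\C_2\cap\Gamma$.

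\textbf{Main obstacle.} The delicate point is the last equivalence at points of $\C_2$: distinguishing $\C_2\cap\Gamma$ from $\C_2\setminus\Gamma$ rigorously requires knowing that a $2$-dimensional analytic stratum on which $\overline\Psi\le 0$ on one side either has $\overline\Psi$ strictly changing sign (forcing $|\mathrm{D}\Psi|\ne 0$ by Hopf) or is two-sided-nonpositive (forcing a local max and $|\mathrm{D}\Psi|=0$, but then also forcing $\overline\Psi\equiv0$ by the strong maximum principle, so this case must be empty away from $\Gamma$ — meaning actually $\C_2\setminus\Gamma=\emptyset$, which is itself something to pin down). Sorting out exactly which strata can fail to touch $\Gamma$, and reconciling that with the connectedness claim, is where the careful real-analytic-geometry bookkeeping lives; everything else is standard elliptic maximum-principle reasoning already used repeatedly in the paper.
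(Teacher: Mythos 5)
There are two genuine gaps, both in the topological part of the statement. First, you never actually prove that $\C$ disconnects $M$: the phrase ``has a unique unbounded component $M_o$ \emph{and at least one bounded component}'' assumes the conclusion, and your subsequent maximum-principle argument on $M_o$ already presupposes it when you write $\partial M_o\subseteq\C$ (if $M\setminus\C$ were connected, the boundary of the unbounded component would contain $\partial M$, where $\overline\Psi\neq 0$, and the argument collapses). This is the heart of the lemma. It is fixable within your framework: by \Cref{pro:simplification} one has $0<\Psi<\Psi_0$ on $M\setminus\partial M$, and since $\C\subseteq M\setminus\partial M$ with $\Psi=\frac{\mm}{\qq}$ there, it follows that $\Psi_{0}>\frac{\mm}{\qq}$; hence $\overline\Psi>0$ on $\partial M$ while $\overline\Psi\to-\frac{\mm}{\qq}<0$ at infinity, so $\overline\Psi$ takes both signs on the open set $M\setminus\C=\{\overline\Psi\neq0\}$ and $\partial M$ and the end must lie in different components. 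But you did not make this (or any) argument. Note also that the sign you assign to $\overline\Psi$ on $M_i$ is wrong: it is positive there, not negative. The paper instead establishes disconnection by a flux argument: if $M\setminus\C$ were connected then $\mathrm{D}\Psi=0$ on $\C$, hence $|\nabla\ffi|_g\to0$ at $\C$ by \eqref{eq:Cdvarphi}, and integrating $\Delta_g\ffi=0$ over $\{\ffi\leq s\}\setminus U$ for small neighborhoods $U$ of $\C$ forces $|\nabla\ffi|_g\equiv0$ near $\C$, contradicting that $\ffi_0$ is a strict local minimum value of $\ffi$ near $\C$.

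Second, your argument that every component of $\C$ meets $\Gamma$ rests on a ``minimum principle for $N$'' that is false. From the second equation in \eqref{eq:pb}, $\Delta N=\frac{1}{N}|\mathrm{D}\Ele|^2\geq0$, so $N$ is subharmonic and satisfies a \emph{maximum}, not a minimum, principle; rewriting the equation as in \eqref{eq:pb_N} does not help because the drift coefficient $\frac{N}{N^2+k}\mathrm{D}N$ has norm $\frac{|\mathrm{D}\Ele|}{\sqrt{N^2+k}}$ by \eqref{eq:relation_DN_DPsi}, which blows up precisely on $\C$. Indeed, in the super-extremal Reissner--Nordstr\"om model $N$ attains its interior minimum $\sqrt{-k}$ on $\{r=\frac{q^2}{m}\}$ with a nondegenerate quadratic touch, so no minimum principle for $N$ can hold there and your ``contradiction'' is not one. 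The correct tool is the one you use elsewhere: $\overline\Psi$ solves $\Delta\overline\Psi=\frac{1}{N}\langle\mathrm{D}N\,|\,\mathrm{D}\overline\Psi\rangle$ with uniformly bounded coefficients (since $N\geq\sqrt{-k}>0$), it has a constant sign on each of $M_i$, $M_o$, and it vanishes on $\C$; a component of $\C$ with a one-sided neighborhood would be a set of interior extrema of $\overline\Psi$, which the strong maximum/minimum principle forbids. (The paper reaches the same conclusion by first deducing $\mathrm{D}\Psi=0$ on such a component and rerunning the $\Delta_g\ffi=0$ flux argument, and handles the final equivalence at points of $\C_0\cup\C_1$ by citing \cite[Theorem~3.3]{BorChrMaz} rather than your implicit-function-theorem observation, which is correct and arguably simpler; your Hopf-lemma argument at points of $\C_2\cap\Gamma$ matches the paper's, applied from the $M_o$ side instead of $M_i$.)
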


\begin{proof}
First, recall that $\qq>\mm>0$ follows from $\C\neq\emptyset$. Then for any choice of pseudo-radial function $\rho_{\pm}$, we compute 
\begin{align}\label{def:varphi0}
\varphi\vert_{\C}&=\log\left(\frac{\qq^{2}(1+\sqrt{-k})}{\mm}-\mm\right)\asdefined\varphi_{0}.
\end{align}
Next, we note that $\varphi_{0}$ is in fact a locally minimal value of $\varphi$ near $\C$: if $\rho=\rho_{-}$, using $\mm>0$ and $N>\sqrt{-k}$ away from $\C$ readily gives $\varphi_{-}>\varphi_{0}$ on $M\setminus(\C\cup\{N=1\})$ (recalling \Cref{lem:rho-}). As $N=\sqrt{-k}\neq1$ on $\C$ by $\mm>0$, $N\neq1$ holds near $\C$. If $\rho=\rho_{+}$, one finds from $\mm>0$ that $\varphi_{+}>\varphi_{0}$ in $\{\sqrt{-k}+\delta>N>\sqrt{-k}\,\}$ for suitably small $\delta>0$. By compactness of $\C$, there hence exists a bounded neighborhood $\Omega$ of $\C$ with smooth boundary such that $\varphi_{\pm}>\varphi_{0}$ or in other words $\varphi>\varphi_{0}$ on $\overline{\Omega}\setminus\C$.

Now suppose towards a contradiction that $\C\neq\emptyset$ and $M\setminus \C$ is connected. From \eqref{eq: Nk_geq_0}, it then follows that $\Ele-\frac{\mm}{\qq}$ is either positive or negative on $M\setminus \C$ (as it can only vanish on $\C$) and thus its value $\frac{\mm}{\qq}$ on $\C$ is either a maximum or a minimum value of $\Ele$ in $M$. As a consequence, we have $|\mathrm{D}\Ele|=0$ on $\C$. Next, by \eqref{eq:Cdvarphi}, we get $\vert\nabla\varphi\vert_{g}^{2}\to0$ as we approach $\C$, regardless of which choice of pseudo-radial function $\rho_{\pm}$ we use. We can now use the $g$-harmonicity of $\ffi$ to get a contradiction as follows, again without specifying which choice of pseudo-radial function we make: consider a regular value $s$ of $\varphi$ satisfying $\varphi_{0}<s<\min_{\d  \Omega}(\ffi)$. Then the set $\{\ffi\leq s\}\cap \overline{\Omega}$ is compact with smooth boundary $\{\ffi=s\}\cap\overline{\Omega}$. Given any small neighborhood $U\subset\Omega$ of $\C$ with smooth boundary $\partial U$, integrating the identity $\Deg\ffi=0$ in $(\{\ffi\leq s\}\setminus U)\cap \Omega$ and using the divergence theorem, we obtain
\begin{align}
\int_{\d  U}\langle\nabla\ffi\,|\,{\rm n}_g\rangle_g\, d\sigma_g&=\int_{\{\ffi=s\}\cap\Omega}|\nabla\ffi|_g\, d\sigma_g.
\end{align}
Since we have shown that $|\nabla\ffi|_g$ goes to zero as we approach $\C$, the left hand side of the above identity can be made as small as we want by choosing a small enough open set $U$ (and appealing to the Cauchy--Schwarz inequality). It follows that the right hand side must be equal to zero. Since this is true for any regular value $s$ of $\ffi$ close enough to $\ffi_0$ and as regular values of $\varphi$ are dense by Sard's lemma, it follows that $|\nabla\ffi|_g$ vanishes identically in $\Omega\setminus\C$, contradicting that $\varphi>\varphi_{0}$ in $\Omega\setminus\C$, $\varphi=\varphi_{0}$ on $\C$, and continuity of $\varphi$ on $\Omega$. This gives the desired contradiction, so it follows that $\C$ must disconnect $M$ into more than one connected component.

We now proceed to showing that there are exactly two connected components $M_i$ and $M_o$ of $M\setminus\C$  as in the claim of \Cref{lem:structure_C}. As $N\to1$ at infinity and $\sqrt{-k}<1$, it is clear that the asymptotically flat end of $M$ is contained in one connected component $M_{o}$ which is then clearly non-compact. Next, as $\partial M$ is connected and $N_{0}>\sqrt{-k}$, $\partial M$ is also entirely contained in one connected component of $M\setminus\C$; this could a priori also be $M_{o}$ or a separate component $M_{i}$. Now suppose towards a contradiction that there is a connected component $M_{e}$ of $M\setminus\C$ such that $M_{e}\cap\partial M=\emptyset$ and $M_{e}\neq M_{o}$. It follows that $M_{e}$ is bounded with $\partial M_{e}\subseteq\C$ which leads to a contradiction via the maximum principle\textsuperscript{\ref{foot:max}} applied to the second equation in \eqref{eq:pb_N} as $N>\sqrt{-k}$ in $M_{e}$ and $N=\sqrt{-k}$ on $\C$. Thus $M_{e}$ cannot exist and the connected component containing $\partial M$ must be a component $M_{i}\neq M_{o}$ and hence bounded.

We will now establish that every connected component of $\C$ must intersect $\Gamma$, with $\Gamma=\overline{M_i}\cap\overline{M_o}\subset \C$. In fact, if there were some other connected component $C$ of $\C$ inside $\overline{M_i}$ or $\overline{M_o}$ disconnected from $\Gamma$, i.e., satisfying $C\cap\Gamma=\emptyset$, we would get a contradiction by applying the arguments explained at the beginning of the proof. Namely, we show that $\vert{\rm D}\Psi\vert=0$ on $C$ as $\Psi-\frac{\mm}{\qq}$ has a sign on $M_{i}$ and $M_{o}$ and vanishes on $C$ and hence $|\nabla\ffi|_g=0$ on $C$; we then apply the divergence theorem to the integral of $\Deg\ffi$ in a small neighborhood $U$ of $C$ to prove that $|\nabla\ffi|_g$ vanishes identically in $U$, contradicting that $\varphi_{0}$ is a locally minimal value near $\C$.

Finally, let $p\in\Gamma$. If $p\in\C_2\cap\Gamma$, then $p$ belongs to the smooth portion of the boundary of $\overline{M_{i}}$. The boundary of $\overline{M_{i}}$ is the union of $\d M$, where $\Psi=\Psi_0\neq\frac{\mm}{\qq}$ by \eqref{eq: Nk_geq_0}, and of $\Gamma$, where $\Psi=\frac{\mm}{\qq}$. Then the Hopf lemma, applied in $M_i$ to the third equation in~\eqref{eq:pb} (which is strongly elliptic as $N\geq N_{0}>\sqrt{-k}$ in $	\overline{M_i}$), tells us that $|\mathrm{D}\Ele|(p)\neq 0$. It remains to show that $|\mathrm{D}\Ele|(p)=0$ if $p\not\in \C_2\cap\Gamma$. Assuming this were not the case, we would get that $\Delta N\vert_{p}\neq 0$ from the second equation in~\eqref{eq:pb} which gives the desired contradiction by~\cite[Theorem~3.3]{BorChrMaz}.
\end{proof}

Thanks to \Cref{lem:structure_C}, applying the Hopf lemma\textsuperscript{\ref{foot:Hopf}} to the second equation in \eqref{eq:pb_N} on $M_i$, we immediately deduce that $\nu(N)<0$ and hence $\nu=-{\rm n}$ on $\partial M$.

\subsection{A monotonicity argument}\label{sub:monotonicity}
We have seen in the previous subsection that if $\C\neq\emptyset$ and $N_0>\sqrt{-k}$, $\C$ disconnects the manifold $M$ into two pieces, an inner one $M_i$ containing $\d  M$ and an outer one $M_o$ containing the asymptotically flat end. Since $N\neq\sqrt{-k}$ away from $\C$, the second equation in~\eqref{eq:pb_N} is well-posed in both $M_i$ and $M_o$. It follows that the maximum and minimum principle\textsuperscript{\ref{foot:max}} apply to $N$ in $M_o\setminus U$ and $M_i\setminus U$, where $U$ is any small neighborhood of $\C$. Since the neighborhood $U$ can be taken to be arbitrarily small and $N$ is continuous on $M$, recalling that $N\to 1$ at infinity, we deduce that
\begin{align}\label{eq:rangeN}
\begin{split}
\sqrt{-k}&<N<N_0\ \text{inside}\, M_i,\\
\sqrt{-k}&<N<1\ \ \ \text{inside}\, M_o.
\end{split}
\end{align}
Let us now perform our conformal change in both $M_i$ and $M_o$. In $M_o$, we use the pseudo-radial function $\pra_-$ defined by~\eqref{eq:rho-} which is well-defined there as $N\neq1$ by \eqref{eq:rangeN}, whereas in $M_i$, we use the pseudo-radial function $\pra_+$ defined by~\eqref{eq:rho+}. By definition, $\pra_+=\pra_-=\frac{\qq^2}{\mm}$ holds on $\C$, hence the combined function 
\begin{align}
\pra&\definedas
\begin{dcases}
\pra_+ & \hbox{in }\overline{M_i},\\
\pra_- & \hbox{in }\overline{M_o},
\end{dcases}
\end{align} 
is well-defined and continuous in $M$ and smooth in $M\setminus \C$ by \Cref{lem:rho-,lem:rho+}. The same naturally applies to the combined function $\varphi$ and metric $g$. However, a priori there is no reason to expect higher regularity. In particular, it is hard to exploit the maximum principle argument described in \Cref{pro:grad_est} in the present situation. Instead, we will rely on an alternative argument, more in line with~\cite{Mazz}, finding functions that are monotonic on the level sets of $\ffi$ in both the inner region $M_i$ and the outer region $M_o$, and then showing that the monotonicity formulas in the two regions combine appropriately.

First of all, let us observe that \Cref{pro:integral_Hg} is in force in this setting as well.
\begin{prop}\label{pro:integral_Hg_superext}
In the setting of \Cref{lem:structure_C}, let $\pra$ be defined by~\eqref{eq:rho-} in $M_o$ and by~\eqref{eq:rho+} in $M_i$, with respect to parameters $\mm,\qq$ satisfying~\eqref{eq:m_over_q} and with asymptotic mass parameter $\mu$. Finally, let $\ffi$, $g$ be defined by~\eqref{eq:ffi} and~\eqref{eq:g} respectively. Then, for any $p\geq 3$ and any $s<\infty$ such that $\{\ffi=s\}\subset M$, we have
\begin{align}\label{eq:positive_int_superext}
\int_{\{\ffi=s\}}|\nabla\ffi|_g^{p-1}\Hg\,d\sigma_g\geq0.
\end{align}
Furthermore, if equality holds in~\eqref{eq:positive_int} for some $s$ and $p$, then $(M,\go,N,\Ele)$ is isometric to a suitable piece of the Reissner--Nordstr\"om solution~\eqref{eq:RN} with mass $\mu$ and charge $q$ such that $\frac{\mu}{q}=\frac{\mm}{\qq}$.
\end{prop}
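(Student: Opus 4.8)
The plan is to follow the proof of \Cref{pro:integral_Hg} in the present glued picture; the only point demanding real care will be the legitimacy of the divergence theorem across the critical set $\C$. First I would collect the ingredients that transfer unchanged. Since $\C\neq\emptyset$ forces $0<\mm<\qq$, \Cref{lem:sign} gives $\mu>0$, so the asymptotic analysis of \Cref{sub:asymptotics} applies verbatim in the outer region $M_o$ (where $\pra=\pra_-$): the quantities $|\nabla\ffi|_g$, $|\nana\ffi|_g$ and $\int_{\{\ffi=s\}}d\sigma_g$ remain bounded as $s\to\infty$ while $\pra\to\infty$ and $N\to1$. By \eqref{eq:dphidN}, $\ffi$ is a strictly monotone function of $N$ on each of $M_i$ and $M_o$ and equals the constant $\ffi_0$ of \eqref{def:varphi0} on $\C$; hence each $\{\ffi=s\}$ is a level set of $N$ (so $\pra$ and $N$ are constant on it), each $\{\ffi\ge s\}$ contains the asymptotically flat end, and each $\{\ffi\ge s\}$ with $\{\ffi=s\}\subset M$ misses $\partial M$. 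Finally $\operatorname{div}_gX_p\ge0$ (from \eqref{eq:positive_divergence}, \eqref{eq:X}) is a purely local consequence of Bochner and therefore holds in $M_i$ (with $\pra_+>0$ by \Cref{lem:rho+} and $\ffi_+$ well defined since $\qq>\mm>0$) as well as in $M_o$ (with $\pra_-$).

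Next I would integrate $\operatorname{div}_gX_p\ge0$ over $\{s<\ffi<S\}$ for $s<S$ with $\{\ffi=s\}\subset M$. When $s\ge\ffi_0$ this region lies inside $M_o$ and the computation of \Cref{pro:integral_Hg} goes through directly; the new case is $s<\ffi_0$, where $\{s<\ffi<S\}$ straddles $\C$. To justify the divergence theorem there I would use the {\L}ojasiewicz stratification of $\C$ from \Cref{lem:structure_C}: on the two-dimensional stratum $\C_2\cap\Gamma$ one has $|\mathrm{D}\Ele|\neq0$, and since $\Ele\equiv\mm/\qq$ on this level surface, $\Ele-\mm/\qq$ must change sign across $\C$ there (a common sign on both sides would force $\mathrm{D}\Ele=0$); substituting $\sqrt{N^2+k}=|\Ele-\mm/\qq|$ from \eqref{eq: Nk_geq_0} into \eqref{eq:rho+}, \eqref{eq:rho-} then shows that both $\pra_+$ and $\pra_-$ coincide with the smooth function $\qq/\Ele$ near $\C_2\cap\Gamma$, so $\pra$, $\ffi$, $g$, $X_p$ are actually smooth across this stratum. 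What remains of $\C$ lies in $\C_0\cup\C_1$, a finite union of points and analytic curves of zero capacity in $M$, which can be excised by shrinking neighbourhoods contributing no boundary flux in the limit (an excision in the spirit of \Cref{subsec:superextremalnocrit}, using that $|\nabla\ffi|_g$ vanishes on $\C_0\cup\C_1$ because $|\mathrm{D}\Ele|$ does). Granting this, and using as in \Cref{pro:integral_Hg} that $X_p$ is bounded with bounded divergence for $p\ge3$ (so critical values of $\ffi$ cause no trouble), \eqref{eq:Hg_nanaffi} turns the integrated inequality into
\begin{align*}
\int_{\{\ffi=s\}}\frac{1}{\pra N}\,|\nabla\ffi|_g^{p-1}\,\Hg\,d\sigma_g &\ge \int_{\{\ffi=S\}}\frac{1}{\pra N}\,|\nabla\ffi|_g^{p-1}\,\Hg\,d\sigma_g .
\end{align*}
Since $\pra$ and $N$ are constant on each level set they factor out, and letting $S\to\infty$ the outer bounds recorded above force the right-hand side to zero; this is \eqref{eq:positive_int_superext}.

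For the rigidity statement I would retrace the chain of inequalities. Equality in \eqref{eq:positive_int_superext} for some $s$, $p$ forces $\operatorname{div}_gX_p\equiv0$ on $\{\ffi\ge s\}$, hence equality in \eqref{eq:bochner}, hence $|\nana\ffi|_g\equiv0$ and $\nabla|\nabla\ffi|_g\equiv0$ there by \eqref{eq:bochner_originalp}. Then \cite[Theorem~4.1]{AgoMazz_first}, together with \Cref{rmk:complete}, identifies $(\{\ffi\ge s\},g,\ffi)$ with a piece of a round cylinder; undoing the conformal change via \eqref{eq:g}, \eqref{eq:ffi}, \eqref{eq:relation_N_psi}, \eqref{eq: idPsiN} and the decay of $\Ele$ shows $\go$ on $\{\ffi\ge s\}$ is the spatial Reissner--Nordstr\"om metric of mass $\mm$ and charge $\qq$, and by real analyticity of the solution (and \eqref{eq:m_over_q}) this propagates to all of $M$, giving the asserted isometry with $\mu/q=\mm/\qq$.

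I expect the main obstacle to be exactly the interface step: ruling out a spurious boundary contribution of $X_p$ along $\C$. Everything else is a transcription of \Cref{pro:integral_Hg}, but here one must combine the {\L}ojasiewicz stratification of $\C$ supplied by \Cref{lem:structure_C} with the identity $\pra_\pm=\qq/\Ele$ near the top stratum and a careful excision near the lower-dimensional strata; once that is settled, the monotonicity and its equality case follow as before.
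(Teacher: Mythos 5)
Your proposal is correct, and it reaches \eqref{eq:positive_int_superext} by a genuinely different treatment of the one step that matters here, namely the crossing of the critical set $\C$. The paper never integrates across $\C$: it derives the monotonicity separately in $M_o$ (down to $\Gamma$, giving \eqref{eq:intinequality_Gamma1}) and in $M_i$ (up to $\Gamma$, giving \eqref{eq:intinequality_Gamma2}), treating the glued $\pra$, $\ffi$, $g$ as merely continuous along $\C$, and then chains the two inequalities by checking that the boundary integrand $\tfrac{1}{\pra N}|\nabla\ffi|_g^{p-1}\Hg$ takes the same value on $\Gamma$ from both sides (via \eqref{eq:Cdvarphi} and the argument around \eqref{eq:Hgremainswelldef}). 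You instead upgrade the regularity of the glued data: since $\sqrt{N^2+k}=|\Ele-\mm/\qq|$ by \eqref{eq: Nk_geq_0}, $\Ele-\mm/\qq<0$ on $M_o$ (as $\Ele\to0$ at infinity) and $>0$ on $M_i$ (sign flip across $\C_2\cap\Gamma$ where $|\mathrm{D}\Ele|\neq0$), one gets $\pra_-=\qq/\Ele$ on $M_o$ and $\pra_+=\qq/\Ele$ on $M_i$, so the combined $\pra$ \emph{is} the smooth positive function $\qq/\Ele$ — not just near the top stratum but on all of $M\setminus\partial M$, which in fact makes your excision of $\C_0\cup\C_1$ unnecessary ($\ffi$, $g$, $X_p$ are then smooth throughout the interior, the argument of the logarithm in \eqref{eq:ffi} being positive by \Cref{lem:rho+,lem:rho-} for $\qq>\mm>0$, and $\Delta_g\ffi=0$ and $\operatorname{div}_gX_p\geq0$ extend across the measure-zero set $\C$ by continuity). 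This observation sharpens the paper's cautionary remark in \Cref{sub:monotonicity} that ``a priori there is no reason to expect higher regularity'' of the glued data, and it buys a cleaner interface step: a single application of the divergence theorem over $\{s<\ffi<S\}$ replaces the two one-sided limits and the matching of integrands on $\Gamma$. What the paper's route buys in exchange is independence from the global sign analysis of $\Ele-\mm/\qq$ (it only needs continuity of $\pra$, $|\nabla\ffi|_g$ and $\Hg|\nabla\ffi|_g$ across $\Gamma$), which is why it is phrased as ``monotonic quantities that coincide on $\C$ and combine''. Your remaining steps — boundedness of $X_p$ and its divergence for $p\geq3$ at critical values, the vanishing of the outer flux as $S\to\infty$ via \eqref{eq:limit_naffi}, \eqref{eq:limit_nanaffi}, \eqref{eq:limit_area}, and the rigidity via \eqref{eq:bochner_originalp} and \cite[Theorem~4.1]{AgoMazz_first} followed by analytic continuation — coincide with the paper's.
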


\begin{proof}
Giving the same arguments used at the beginning of the proof of \Cref{pro:integral_Hg}, we see that $\{\ffi\geq s\}$ is connected and contains the infinity. If $\{\ffi\geq s\}$ does not contain $\C$ then $\{\ffi\geq s\}\subseteq M_o$. We can then proceed as in \Cref{pro:integral_Hg} to obtain immediately the desired inequality, recalling that we use $\rho=\rho_{-}$ in $M_{i}$. This concludes the proof for those values $s$ for which $\{\ffi\geq s\}\subseteq M_o$. In particular,
\begin{align*}
\lim_{s\to\ffi_0^+}\int_{\{\ffi=s\}}|\nabla\ffi|_g^{p-1}\Hg \,d\sigma_g\geq0,
\end{align*}
where $\ffi_0$ is given by \eqref{def:varphi0}, provided this limit exists. We know from \Cref{lem:structure_C} that $\vert{\rm D}\Psi\vert=0$ on $\C\setminus(\C_{2}\cap\Gamma)$ and $\vert{\rm D}\Psi\vert\neq0$ on $\C_{2}\cap\Gamma$. By \eqref{eq:Cdvarphi}, we conclude that $|\nabla\ffi|_g=0$ on $\C\setminus(\C_{2}\cap\Gamma)$ and $|\nabla\ffi|_g\neq0$  on $\C_{2}\cap\Gamma$. As $\C_{2}\cap\Gamma$ is dense in $\Gamma$ by definition of $\C_{2}$, we deduce
\begin{align}\label{eq:intinequality_Gamma1}
\lim_{s\to\ffi_0^+}\int_{\{\ffi=s\}}|\nabla\ffi|_g^{p-1}\Hg\,d\sigma_g&=\int_{\Gamma}|\nabla\ffi|_g^{p-1}\Hg\,d\sigma_g\geq0
\end{align}
converges as $\Hg$ is defined almost everywhere on a critical level set (see the end of \Cref{sub:geometry level sets}) and because $g$ is continuous across $\C$.

It remains to consider the case where $\{\ffi=s\}\subset M_i$, in which case $\{\ffi\geq s\}\supset\C$. In $M_i$, we can still proceed as in the proof of \Cref{pro:integral_Hg}, integrating \eqref{eq:positive_divergence} and using the divergence theorem, obtaining \eqref{eq:auxiliary_integral_inequality} for any $s<S<\ffi_0$. Taking the limit as $\{\ffi=S\}$ approaches the level set $\Gamma$, this time we obtain
\begin{align}\label{eq:intinequality_Gamma2}
\int_{\{\ffi=s\}}\frac{1}{\pra N}|\nabla\ffi|_g^{p-1}\Hg\,d\sigma_g\geq \int_{\Gamma}\frac{1}{\pra N}|\nabla\ffi|_g^{p-1}\Hg\,d\sigma_g.
\end{align}

Now, recall that inequalities~\eqref{eq:intinequality_Gamma1} and~\eqref{eq:intinequality_Gamma2} are computed with respect to different choices of the pseudo-radial function $\rho$. Since the combined $\pra$, $g$, $\ffi$ are continuous along and across $\C$, in order to combine the two inequalities we only need to check that the integrand $|\nabla\ffi|_g^{p-1}\Hg$ has the same value on $\Gamma$ from both sides. By \eqref{eq:Cdvarphi}, we find that $|\nabla\ffi|_g$ coincides on $\Gamma$ from both sides. Moreover, arguing as around \eqref{eq:Hgremainswelldef}, we find that $\Hg|\nabla\ffi|_g$ and thus $\Hg$ coincides on $\Gamma$ from both sides, keeping in mind that $\HHH$ is computed with respect to $\nu={\rm n}$. Hence the estimates in \eqref{eq:intinequality_Gamma1} and \eqref{eq:intinequality_Gamma2} can be combined, leading to \eqref{eq:positive_int_superext}. The rigidity statement is proven as usual, first in $M_{o}$ and subsequently extending to $M_{i}$ if necessary.
\end{proof}

Let us now discuss a first consequence of \Cref{pro:integral_Hg_superext}. Consider the function 
\begin{align}\label{eq:Fp}
F_p(s)&\definedas\int_{\{\ffi=s\}}|\nabla\ffi|_g^p\,d\sigma_g
\end{align}
for all $s$ in the range of $\varphi$. As before, integrating $\Deg\ffi=0$ between two level sets of $\ffi$, we find out immediately that $F_1$ is constant, which in particular implies
\begin{align}
\label{eq:formula0}
\int_{\d M}|\nabla\ffi|_g\,d\sigma_g&=\lim_{s\to\infty}\int_{\{\ffi=s\}}|\nabla\ffi|_g\,d\sigma_g.
\end{align}
In the same spirit, from \Cref{pro:integral_Hg_superext}, we obtain the following:
\begin{prop}\label{pro:monotonicity_Fp}
In the setting of \Cref{lem:structure_C}, let $\pra$ be defined by~\eqref{eq:rho-} in $M_o$ and by~\eqref{eq:rho+} in $M_i$, with respect to parameters $\mm$, $\qq$ satisfying~\eqref{eq:m_over_q} and asymptotic mass parameter $\mu$. Finally, let $\ffi$, $g$ be defined by~\eqref{eq:ffi} and~\eqref{eq:g} respectively. Then, for any $p\geq 3$, the function $F_p$ is locally absolutely continuous, with weak derivative
\begin{align}
F'_p(s)&=-(p-1)\int_{\{\ffi=s\}}|\nabla\ffi|_g^{p-1}\,\Hg\,d\sigma_g\leq 0
\end{align}
for all $s$ in the range of $\varphi$. Furthermore, if equality holds in~\eqref{eq:positive_int} for some $s$ and $p$, then $(M,\go,N,\Ele)$ is isometric to a suitable piece of the Reissner--Nordstr\"om solution~\eqref{eq:RN} with mass $\mu$ and charge $q$ with $\frac{\mu}{q}=\frac{\mm}{\qq}$.
\end{prop}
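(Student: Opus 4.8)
The plan is to deduce \Cref{pro:monotonicity_Fp} from \Cref{pro:integral_Hg_superext} by showing that the function $F_p$ from \eqref{eq:Fp}, a priori only defined pointwise, is in fact locally absolutely continuous with the claimed weak derivative. First I would invoke the coarea formula: for any two regular values $s_1<s_2$ of $\varphi$ in the range of $\varphi$,
\begin{align*}
\int_{s_1}^{s_2}\left(\int_{\{\ffi=\tau\}}|\nabla\ffi|_g^{p-1}\,d\sigma_g\right)d\tau&=\int_{\{s_1<\ffi<s_2\}}|\nabla\ffi|_g^{p}\,d\mu_g<\infty,
\end{align*}
which shows that the map $\tau\mapsto\int_{\{\ffi=\tau\}}|\nabla\ffi|_g^{p}\,d\sigma_g$ is locally integrable; combined with the fact that $|\nabla\ffi|_g$, $|\nana\ffi|_g$ and the induced areas are locally bounded away from $\C$ (and controlled across $\C$ exactly as in the proof of \Cref{pro:integral_Hg_superext}), this gives that $F_p$ is well-defined and finite everywhere in the range of $\varphi$.

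Next I would compute $F_p'(s)$ on the open set of regular values of $\varphi$. Fixing a regular value $s$, the level set $\{\ffi=s\}$ is a smooth closed hypersurface; writing the first variation of the weighted area along the flow of $\nabla\ffi/|\nabla\ffi|_g^2$ (the gradient flow parametrized by $\varphi$), one obtains in the standard way
\begin{align*}
F_p'(s)&=\int_{\{\ffi=s\}}\left(\frac{\partial}{\partial\varphi}|\nabla\ffi|_g^p-|\nabla\ffi|_g^p\,\Hg\,|\nabla\ffi|_g\cdot\frac{1}{|\nabla\ffi|_g}\right)\frac{d\sigma_g}{|\nabla\ffi|_g}+\dots,
\end{align*}
but rather than grinding through this, the cleaner route is to integrate the divergence identity \eqref{eq:positive_divergence} (with the weight $\tfrac{1}{\pra N}$ removed, i.e.\ directly \eqref{eq:bochner} rewritten via \eqref{eq:Hg_nanaffi}) over the slab $\{s<\ffi<S\}$ and differentiate in $S$: one gets $F_p'(S)=-(p-1)\int_{\{\ffi=S\}}|\nabla\ffi|_g^{p-1}\Hg\,d\sigma_g$ for almost every regular value, using $\langle\nabla|\nabla\ffi|_g^p\,|\,{\rm n}_g\rangle=-p|\nabla\ffi|_g^{p-1}\Hg\,|\nabla\ffi|_g$ from \eqref{eq:Hg_nanaffi}. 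Absolute continuity on compact subintervals of the range of $\varphi$ then follows from \Cref{pro:integral_Hg_superext}: the right-hand side is a nonnegative integrand whose integral over any slab is finite (being bounded by $\int|\nabla\ffi|_g^{p-1}|\nana\ffi|_g$, controlled via coarea as above), so $F_p$ is the integral of its derivative. The sign statement $F_p'\leq 0$ is then precisely \eqref{eq:positive_int_superext}, and the rigidity clause is inherited verbatim from \Cref{pro:integral_Hg_superext}, since equality in \eqref{eq:positive_int} for a single pair $(s,p)$ already forces the conclusion there.

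The main obstacle is the regularity of $F_p$ and of the integrand across the critical level set $\C$ and across critical values of $\varphi$: since the combined pseudo-radial function $\pra$, and hence $g$ and $\varphi$, are only continuous (not $C^1$) across $\C$, one cannot naively differentiate, and critical level sets of $\varphi$ need the {\L}ojasiewicz-type structure results recalled at the end of \Cref{sub:geometry level sets} to make sense of $\Hg$ almost everywhere. I would handle this exactly as in \Cref{pro:integral_Hg_superext}: work first in $M_o$ where $\rho=\rho_-$ is smooth, establish the formula and absolute continuity there for $s>\ffi_0$, then pass to the limit $s\to\ffi_0^+$ using \eqref{eq:Cdvarphi}, \eqref{eq:intinequality_Gamma1} and the density of $\C_2\cap\Gamma$ in $\Gamma$, then repeat the argument in $M_i$ with $\rho=\rho_+$ for $s<\ffi_0$, and finally glue using the continuity of $|\nabla\ffi|_g$ and $\Hg|\nabla\ffi|_g$ across $\Gamma$ (the matching established around \eqref{eq:Hgremainswelldef}). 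The bound $p\geq 3$ is what guarantees $X_p$ has bounded divergence so that the divergence theorem applies through critical points, exactly as cited in \Cref{pro:integral_Hg}.
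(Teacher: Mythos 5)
Your overall architecture matches the paper's: coarea formula plus the uniform bound on $|\nabla\ffi|_g$ (via \eqref{eq:Cdvarphi} on $\C$ and the asymptotics at infinity), the identification of the integrand through \eqref{eq:Hg_nanaffi}, separate treatment of $M_i$ and $M_o$ with gluing across $\Gamma$, the role of $p\geq 3$ at critical points, and sign plus rigidity inherited from \Cref{pro:integral_Hg_superext}. However, there is a genuine gap in the central step. The identity you propose to integrate over the slab $\{s<\ffi<S\}$ --- \eqref{eq:positive_divergence} with the weight removed, i.e.\ the Bochner inequality \eqref{eq:bochner} --- does not produce $F_p(S)-F_p(s)$. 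Its boundary terms are $\int_{\{\ffi=\cdot\}}\langle\nabla|\nabla\ffi|_g^p\,|\,{\rm n}_g\rangle_g\,d\sigma_g=-p\int_{\{\ffi=\cdot\}}|\nabla\ffi|_g^{p}\Hg\,d\sigma_g$, which is (up to normalization) the putative derivative $F_{p+1}'$ itself; integrating Bochner therefore compares $F'$ at two levels (this is exactly the content of \Cref{pro:integral_Hg_superext}) and cannot be ``differentiated in $S$'' to yield the formula for $F_p'$. The identity that actually gives $F_p'$ is the $g$-harmonicity of $\ffi$: $\operatorname{div}_g\bigl(|\nabla\ffi|_g^{p-1}\nabla\ffi\bigr)=\bigl\langle\nabla|\nabla\ffi|_g^{p-1}\,\big|\,\nabla\ffi\bigr\rangle_g$, whose slab integral equals $F_p(S)-F_p(s)$ and whose coarea disintegration gives $F_p'(s)=\int_{\{\ffi=s\}}\langle\nabla|\nabla\ffi|_g^{p-1}\,|\,{\rm n}_g\rangle_g\,d\sigma_g=-(p-1)\int_{\{\ffi=s\}}|\nabla\ffi|_g^{p-1}\Hg\,d\sigma_g$. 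You have conflated the source of the \emph{formula} (harmonicity) with the source of the \emph{sign} (Bochner).

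A second, related gap: local absolute continuity does not ``follow from \Cref{pro:integral_Hg_superext}.'' That proposition gives the sign of the candidate derivative, not the regularity of $F_p$; and the integrand $|\nabla\ffi|_g^{p-1}\Hg$ is not pointwise nonnegative (only its integral is), so your domination argument needs the bound $|\Hg|\,|\nabla\ffi|_g\leq|\nana\ffi|_g$ plus an actual verification that $F_p$ equals the integral of the claimed derivative through critical values of $\ffi$ and through $\ffi_0$. The paper handles this by testing against $\eta\in\mathcal{C}^\infty_c(a,b)$, using coarea twice and integrating by parts with $\Deg\ffi=0$, which establishes $F_p\in W^{1,1}_{\text{loc}}$ directly (splitting at $\ffi_0$ and using that $\C$ has zero volume); your pointwise differentiation at regular values followed by an appeal to finiteness does not by itself rule out a singular part of the distributional derivative concentrated on the critical values. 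Both gaps are repairable, and the rest of your outline (boundedness of $F_p$ via $F_p\leq C^{p-1}F_1$ with $F_1$ constant, the matching of $|\nabla\ffi|_g$ and $\Hg|\nabla\ffi|_g$ across $\Gamma$, and the rigidity clause) is consistent with the paper.
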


\begin{rmk}
With some more work, following the steps in~\cite[Theorem~3.2]{Mazz}, one can prove more, namely that $F_p$ is actually continuous and differentiable. However, \Cref{pro:monotonicity_Fp} is enough for our purposes.
\end{rmk}

\begin{proof}
The proof follows the strategy from~\cite[Proposition~3.5]{Ben_Fog_Maz}. We start by looking at $|\nabla\ffi|_g$. We know that $|\nabla\ffi|_g$ is bounded at infinity because of the asymptotic flatness assumption. Furthermore, $|\nabla\ffi|_g$ is a smooth function away from $\C$ and it is bounded on $\C$ thanks to \eqref{eq:Cdvarphi}. It follows easily from these observations that $|\nabla\ffi|_g$ is uniformly bounded, namely, there exists $C>0$ such that $|\nabla\ffi|_g\leq C$ on $M$.
As a consequence, for any regular value $s$ of $\varphi$, we have
\begin{align}
F_p(s)&=\int_{\{\ffi=s\}}|\nabla\ffi|_g^p\, d\sigma_g\leq C^{p-1}\int_{\{\ffi=s\}}|\nabla\ffi|_g\,d\sigma_g=C^{p-1}F_1(s).
\end{align}
Since $F_1$ is constant, it follows that, for every $p\geq 1$, the function $F_p(s)$ is essentially uniformly bounded, namely there exists a constant $K$ such that $F_p(s)\leq K$ for almost every value $s$ (in fact we have proven that this holds for any regular value $s$, and these are dense because of Sard's lemma). This enables us to employ the coarea formula, a tool that will be crucial in the proof, also thanks to~\cite[Remark~2.2]{Ben_Fog_Maz}, using the fact that $\nabla|\nabla\ffi|_g^{p-1}$ vanishes at the critical points of $\varphi$ for any $p\geq 3$.

Next, let $a,b\in\mathbb{R}$ be constants with $a<b<\ffi_0$ or $\ffi_0<a<b$, where $\ffi_0$ again denotes the value of $\ffi$ on $\C$. Let also $\eta\in {\mathcal{C}}_c^{\infty} (a,b)$. By means of the coarea formula, we then compute 
\begin{align*}
\int_a^b \eta'(s)F_p(s)\,ds&=\int_a^b\eta'(s)\int_{\{\ffi=s\}}|\nabla\ffi|_g^p \,d\sigma_g \,ds\\
&=\int_{\{a<\ffi<b\}}\langle\nabla(\eta(\ffi))\,|\,\nabla\ffi\rangle_g\,|\nabla\ffi|_g^{p-1} \,dV_g,
\end{align*}
where $dV_{g}$ denotes the volume element on $M$ induced by $g$. Integrating by parts, recalling $\Deg\ffi=0$, and applying again the coarea formula, we get
\begin{align*}
\int_a^b \eta'(s)F_p(s)\,ds&=-\int_{\{a<\ffi<b\}}\eta(\ffi)\big\langle\nabla|\nabla\ffi|_g^{p-1}\,\big|\,\nabla\ffi\big\rangle_{\!\!g}\,dV_g\\
&=-\int_a^b\eta(s)\int_{\{\ffi=s\}} \bigg\langle\nabla|\nabla\ffi|_g^{p-1}\,\bigg|\,\frac{\nabla\ffi}{|\nabla\ffi|_g}\bigg\rangle_{\!\!g}\,d\sigma_g \,ds.
\end{align*}

On the other hand, if $a<\ffi_0<b$, one obtains the same formula by working separately in $(a,\ffi_0)$ and $(\ffi_0,b)$ and then summing the two integrals (recall that $\C=\{\ffi=\ffi_0\}$ has zero volume). It follows that $F_p$ belongs to $W^{1,1}_{\text{loc}}$, with weak derivative satisfying
\begin{align}
F_p'(s)&=\int_{\{\ffi=s\}} \bigg\langle\nabla|\nabla\ffi|_g^{p-1}\,\bigg|\,\frac{\nabla\ffi}{|\nabla\ffi|_g}\bigg\rangle_{\!\!g}\,d\sigma_g.
\end{align}
Recalling~\eqref{eq:Hg_nanaffi}, the claim follows.
\end{proof}

As a consequence of the previous result, since $F_p$ is locally absolutely continuous and its weak derivative is non-positive, it follows that $F_p$ is monotonically decreasing. In particular, we have
\begin{align}
\label{eq:formula1}
\int_{\d M}|\nabla\ffi|_g^p\,d\sigma_g\geq \lim_{s\to\infty}\int_{\{\ffi=s\}}|\nabla\ffi|_g^p\,d\sigma_g.
\end{align}

\subsection{Proof of \Cref{thm:photon_surface} with $\C \neq \emptyset$ and $N_0>\sqrt{-k}$}\label{subsec:critnonemptyN0large}
We are now in the position to exploit the monotonicity of the functions $F_p$ defined in~\eqref{eq:Fp}. First of all, as usual we fix $\mm$, $\qq$ so that $|\nabla\ffi|_g=1$ on $\d M$, leading to \eqref{eq:choiceq} and \eqref{eq:qqphotobiggerthan1}. Recalling formul\ae~\eqref{eq:formula0},~\eqref{eq:formula1}, we deduce
\begin{align}\label{eq:monotonicity_p1}
\frac{|\d M|}{\prao^2}&=\lim_{s\to\infty}\int_{\{\ffi=s\}}|\nabla\ffi|_g\,d\sigma_g,\\\label{eq:monotonicity_p3}
\frac{|\d M|}{\prao^2}&\geq\lim_{s\to\infty}\int_{\{\ffi=s\}}|\nabla\ffi|_g^p\,d\sigma_g
\end{align}
for $p\geq3$. Now we exploit asymptotic flatness. Clearly the expansions presented in \Cref{sub:asymptotics} work in the current framework as well as $\mm>0$ and hence $\mu>0$ by \Cref{lem:sign}. 
In particular, from~\eqref{eq:limit_naffi} and~\eqref{eq:limit_area}, it follows that
\begin{align*}
  \lim_{s\to\infty}\int_{\{\ffi=s\}}|\nabla\ffi|_g\,d\sigma_g&=\frac{\ADM}{\mm}\,|\Sphere^2|.   \\
  \lim_{s\to\infty}\int_{\{\ffi=s\}}|\nabla\ffi|_g^p\,d\sigma_g&=\left(\frac{\mm}{\ADM}\right)^{p-2}|\Sphere^2|
\end{align*}
for $p\geq3$. Substituting this into~\eqref{eq:monotonicity_p1} and~\eqref{eq:monotonicity_p3}, we obtain
\begin{align}\label{eq:monotonicity_consequence_p1}
\frac{1}{\prao^2}|\d M|\,&=\,\frac{\ADM}{\mm}\,|\Sphere^2|,\\
\label{eq:monotonicity_consequence_p3}
\frac{1}{\prao^2}|\d M|\,&\geq\,\left(\frac{\mm}{\ADM}\right)^{p-2}\,|\Sphere^2|
\end{align}
for $p\geq3$. Combining~\eqref{eq:monotonicity_consequence_p1} and~\eqref{eq:monotonicity_consequence_p3} with $\mm>0$ and $\mu>0$, we then immediately deduce $\mm\leq\mu$. As usual, we now proceed to showing that the opposite inequality is also in place, thus proving $\mm=\ADM$, from which the result follows.

Since $\Hg$ is constant on $\partial M$, it follows from \Cref{pro:integral_Hg_superext} that $\Hg\geq 0$ on $\partial M$. Next, notice that
\begin{align}
\prao&=\frac{\qq^2}{\mm + \qq\sqrt{N_0^2+k}}<\frac{\qq^2}{\mm},
\end{align}
as expected, hence in particular $\mm\prao-\qq^2<0$. Recalling that $\HHH=-\HHH_{\nu}$ as $\nu=-{\rm n}$ was established at the end of \Cref{sub:super-extremal_Cnonempty}, it follows as in \Cref{subsub:mmnot0} that \eqref{eq:Hestimate_negativemass}, \eqref{eq:Restimate_negativemass}, \eqref{eq:scalarMeanCurv_negativemass}, and \eqref{eq:normalLapse_negativemass} hold on $\partial M$. In particular, $\HHH_{\nu}>0$ (compare the discussion after \eqref{normalLapse}). This time however, the parameter $\mm$ is positive and this changes the argument. Namely, from \eqref{eq:Hestimate_negativemass} and \eqref{eq:normalLapse_negativemass}, we find 
\begin{align}\label{eq:clowerbound}
\frac{\mm\prao-\qq^2}{N_{0}^2\prao^2}\leq c<0.
\end{align}
Computing $\HHH_{\nu}$ from \eqref{eq:normalLapse_negativemass} and substituting it into \eqref{eq:scalarMeanCurv_negativemass}, using also \eqref{eq:relation_N_psi}, we get
\begin{align}\notag
\frac{c^2 N_0^2\prao^6}{2}\operatorname{R}^{\d M}&=c^2 N_0^2 \prao^2 \qq^2+(2c+1)(\mm\prao-\qq^2)^2\\\label{eq:scalbelowlastcase}
&=-k c^2 \prao^2\qq^2+(c+1)^2(\mm\prao-\qq^2)^2.
\end{align}
To proceed to estimate $\operatorname{R}^{\partial M}$ from below by $\frac{2}{\rho_{0}^{2}}$ as before, we now appeal to the assumption that $N_0^2 \geq |1-\Psi_0^2|$ holds if $N_0^2 > (1-|\Psi_0|)^2$, noting that $N_{0}^{2}>(1-|\Psi_0|)^2$ follows from $\Psi_{0}>0$ and $0<\mm<\qq$ via \eqref{eq:m_over_q}. Using first $N_{0}^{2}\geq 1-\Psi_{0}^{2}$ and again \eqref{eq:m_over_q}, we find that $\Psi_{0}\geq\frac{\mm}{\qq}$. From this, \eqref{eq:rho+}, \eqref{eq: Nk_geq_0}, we find $\Psi_{0}=\frac{\qq}{\rho_{0}}$ 
and thus, using \eqref{eq:m_over_q} to express $\frac{\mm}{\qq}$ in terms of $N_{0}$ and $\Psi_{0}$, we compute
\begin{align}
1-\frac{\mm}{\rho_{0}}&=1-\frac{\mm}{\qq}\Psi_{0}=1-\frac{\Psi_{0}^{2}+1-N_{0}^{2}}{2}=\frac{1+N_{0}^{2}-\Psi_{0}^{2}}{2}\geq0,
\end{align}
where we have used the assumption $N_{0}^{2}\geq\Psi_{0}^{2}-1$ in the last inequality. This tells us that $\rho_{0}\geq\mm$ and hence 
\begin{align}\label{eq:clowerbound+1}
c+1\geq \frac{\rho_{0}-\mm}{N_{0}^{2}\rho_{0}}\geq0
\end{align}
by \eqref{eq:clowerbound+1} and \eqref{eq:relation_N_psi}. Hence dividing \eqref{eq:scalbelowlastcase} by $c^{2}$ and using \eqref{eq:clowerbound}, \eqref{eq:clowerbound+1}, and \eqref{eq:relation_N_psi}, we obtain
\begin{align}
\frac{\prao^2}{2}\operatorname{R}^{\d M}&\geq -\frac{k\qq^2}{N_{0}^{2} \prao^2}+\frac{(\rho_{0}-\mm)^{2}}{N_{0}^{2}\rho_{0}^{2}}=1
\end{align}
as desired. From here on, one concludes as usual by integrating and using the Gauss--Bonnet formula as in \Cref{sub:proof_photon_standard}. This concludes the proof of \Cref{thm:photon_surface}.\qed

\section{Discussion}\label{sec:discussion}
\subsection{On the technical condition in \Cref{thm:photon_surface}}\label{sec:condition}
\Cref{thm:photon_surface} assumes the technical condition that $N_0^2 \geq |1-\Psi_0^2|$ holds if $N_0^2 > (1-|\Psi_0|)^2$. We have used this condition in \Cref{subsec:critnonemptyN0large} to ensure that $\rho_{0}\geq\mm$ and thus $\operatorname{R}^{\partial M}\geq\frac{2}{\rho_{0}^{2}}$ in the case when $\C\neq\emptyset$ and $N_{0}>\sqrt{-k}$. Let us first point out that this technical condition does not pose any additional restrictions in any of the other cases we have discussed: first of all, $N_0^2 > (1-|\Psi_0|)^2$ is equivalent to asking that $\mm<\qq$ by \eqref{eq: idPsiN} and the assumption $\Psi_{0}>0$ and hence only occurs in the super-extremal case (as we have assumed $\qq>0$). Next, $N_0^2 \geq |1-\Psi_0^2|$ poses no restriction when $\mm\leq0$ as can be seen when converting it to $\Psi_{0}\geq\frac{\mm}{\qq}$ and $1\geq\frac{\mm}{\qq}\Psi_{0}$ by \eqref{eq: idPsiN} (and recalling $\Psi_{0}>0$). Thus the only case when the technical condition possibly adds a restriction is when $0<\mm<\qq$. On the other hand, when $N_{0}=\sqrt{-k}$, we have $\Psi_{0}=\frac{\mm}{\qq}$ by \eqref{eq: Nk_geq_0} and thus automatically $N_{0}^{2}=\vert 1-\Psi_{0}^{2}\vert$. Furthermore, when $\C=\emptyset$, we already know that $\rho\geq\frac{\qq^{2}}{\mm}>\mm$. From \eqref{eq: idPsiN} and \eqref{eq:relation_N_psi}, we compute that either $\Psi=\frac{\qq}{\rho}$ or $\Psi=\frac{\mm}{\qq}-\frac{\qq}{\rho}$, with the latter contradicting our asymptotic assumption that $\Psi\to0$ at infinity and the asymptotic assertion that $\rho\to\infty$ at infinity from \Cref{lem:rho-}, using again $\mm>0$. Hence $N_0^2 \geq |1-\Psi_0^2|$ holds automatically when $\qq>\mm>0$ and $\C=\emptyset$.

Now, let us discuss why we need to rely on this technical condition in \Cref{subsec:critnonemptyN0large}. To close the argument, we would like to use the lower bound on $c$ derived in \eqref{eq:clowerbound} (which is effectively a lower bound on $\HHH_{\nu}$) in order to prove that $\operatorname{R}^{\partial M}\geq\frac{2}{\rho_{0}^{2}}$. Appealing to \eqref{eq:scalbelowlastcase}, this requires estimating $f(c)\definedas\frac{(c+1)^{2}}{c^{2}}$ from below by $f(c_{*})$, using that $c<0$ and that $c\geq c_{*}$ for $c_{*}=\frac{\mm\prao-\qq^2}{N_{0}^2\prao^2}$ by \eqref{eq:clowerbound}. Considering that $f'(c)=-\frac{2(c+1)}{c^{3}}$ is strictly decreasing for $c<-1$, has a global minimum at $c=-1$, and is strictly increasing on $(-1,0)$, we can only expect to get the estimate $f(c)\geq f(c_{*})$ when $c_{*}\geq-1$ which is equivalent to $\rho_{0}\geq\mm$ by \eqref{eq:relation_N_psi}. We close this discussion by stating that the above issue is not visible in the exact Reissner--Nordstr\"om case with $q>m>0$ and $\rho_{0}<\frac{\qq^{2}}{\mm^{2}}$ because in that case the estimate  \eqref{eq:Hestimate_negativemass} on the mean curvature, $\HHH_{\nu}\geq\frac{2N_{0}}{\rho_{0}}$ is sharp on every coordinate sphere and hence $c=c_{*}$ for any coordinate sphere as well as $\operatorname{R}^{\partial M}=\frac{2}{\rho_{0}^{2}}$. In other words, because the estimate $c\geq c_{*}$ ist sharp in this case, it does not matter that $f$ is strictly increasing rather than decreasing for $c<1$ (or $\rho_{0}<\mm$).  

\subsection{Comparison of notions of sub-/super-/extremality}\label{sec:extremalcomparison}
In \Cref{sec:intro}, we have compared our uniqueness results to other uniqueness results in the literature. In particular, when doing so for equipotential photon surfaces (and in particular for photon spheres), we used the words ``sub-/super-/extremal'' as if they were universally defined across the various results in the literature; however, this is not a priori the case, as most authors give their own definitions adapted to their strategy of proof. In this section, we will hence make the effort to compare the various notions of sub-/super-/extremality for equipotential photon surfaces, restricting to the setting where all results apply, i.e., assuming that the electrostatic electro-vacuum system $(M,g_{0},N,\Psi)$ under consideration has \emph{connected, non-degenerate} equipotential photon surface boundary $\partial M$ with $\Psi_{0}>0$ and that it is \emph{asymptotic to Reissner--Nordstr\"om}, meaning that, outside some compact set and in suitable coordinates, we have
\begin{align}
\begin{split}\label{def:asRN}
(g_{0})_{ij}&=\left(1+\frac{2\mu}{\vert x\vert}\right)\delta_{ij}+\mathcal{O}_{2}(\vert x\vert^{-2}),\\
N&=1-\frac{\mu}{\vert x\vert}+\mathcal{O}_{2}(\vert x\vert^{-2}),\\
\Psi&=\frac{\kappa}{\vert x\vert}+\mathcal{O}_{2}(\vert x\vert^{-2})
\end{split}
\end{align} 
for some $\mu, \kappa\in\R$ as $\vert x\vert\to\infty$. We will use the notation from our paper also for the other approaches in order to increase legibility. First, recall that we say in this paper that $(M,g_{0},N,\Psi)$ is sub-extremal / extremal / super-extremal if $\mm>\qq$ / $\mm=\qq$ / $\qq>\mm$, respectively, with $\qq>0$ implied by $\Psi_{0}>0$ and where $\frac{\mm}{\qq}$ is fixed by \eqref{eq:m_over_q}. Using the stronger asymptotic decay assumptions  \eqref{def:asRN} and the relationship between $N$ and $\Psi$ established in \eqref{eq: idPsiN}, one immediately sees that $\frac{\mm}{\qq}=\frac{\mu}{\kappa}$, hence our definitions of extremality align with a definition based on the asymptotic mass $\mu$ and charge $\kappa$. 

Let us begin our comparison by studying notions of sub-/super-/extremality for photon spheres. In \cite[Definition 4.1]{YazLaz} by Yazadjiev--Lazov, a photon sphere is called \emph{non-extremal} if it satisfies
\begin{align}\label{eq:nonextremal}
\frac{\HHH_{\nu}^{2}\vert\partial M\vert}{4\pi}\neq1.
\end{align}
It is then asserted that this equivalent to $\mu\neq\kappa$, where we are using here that $\kappa>0$ by our assumption $\Psi_{0}>0$ as shown in \Cref{subsub:mm=0}. Their uniqueness result \cite[Theorem 4.1]{YazLaz} then pertains to non-extremal photon spheres as in \eqref{eq:nonextremal}, with the notion non-extremality aligned with the one used in this paper. Arguing somewhat differently, it is first established in \cite[Lemma 2.6]{CedGalElec} (see also \cite[Theorem 3.1]{GalMiao}) that $\HHH_{\nu}>0$.  In \cite[Definition 2.9]{CedGalElec} by Cederbaum--Galloway, a photon sphere is defined to be sub-extremal / extremal / super-extremal when 
\begin{align}\label{eq:Hr}
\HHH_{\nu} r_{0}>1\ /\ \HHH_{\nu} r_{0}=1\ /\ \HHH_{\nu} r_{0}<1,
\end{align}
respectively, where $r_{0}>0$ denotes the \emph{area radius} defined by $\vert\partial M\vert\asdefined4\pi r_{0}^{2}$. It is then shown in the proof of \cite[Theorem 3.1]{CedGalElec} that these definitions are equivalent to asking that $m_{0}> q_{0}$ / $m_{0}=q_{0}$ / $ q_{0}>m_{0}$, with $m_{0}>0$, $q_{0}>0$ defined as
\begin{align}\label{eq:q0}
q_{0}&=-\frac{\nu(\Psi)\vert_{\partial M}\, r_{0}^{2}}{N_{0}},\\\label{eq:m0}
m_{0}&=\frac{r_{0}}{3}+\frac{2q_{0}^{2}}{3r_{0}}.
\end{align}
respectively, where we are using that $\Psi_{0}>0$ and $\Psi\to0$ at infinity imply $\nu(\Psi)\vert_{\partial M}>0$ by applying the Hopf lemma\textsuperscript{\ref{foot:Hopf}} to the last equation in \eqref{eq:pb}. Note that positive mass is to be expected from the Reissner--Nordstr\"om solutions when a photon sphere exists. Furthermore, in the case of connected $\partial M$, it is shown in \cite[Appendix A]{CedGalElec} (extending the corresponding argument from \cite{YazLaz}) that these definitions are equivalent to those based on the asymptotic quantities $\mu$ and $\kappa$ and hence to our definitions. In particular, recall that $q_{0}=\kappa$ by \eqref{eq:foundkappa}, so consequently $m_{0}=\mu$ by \eqref{eq: idPsiN}. 

Next, in \cite{Jahns} by Jahns, which mostly aims at an extension to higher dimensions (but presented here only for $4$-dimensional spacetimes), a photon sphere is called sub-extremal / extremal / super-extremal if 
\begin{align}
\frac{\HHH_{\nu}^{2}}{\operatorname{R}^{\partial M}}>\frac{1}{2}\ /\ \frac{\HHH_{\nu}^{2}}{\operatorname{R}^{\partial M}}=\frac{1}{2}\ /\ \frac{\HHH_{\nu}^{2}}{\operatorname{R}^{\partial M}}<\frac{1}{2},
\end{align}
respectively. To see that this is well-defined, note that it is asserted in \cite[Proposition 2.4]{CedGalElec} (see also \cite{CederPhoto,YazLaz}) that the photon surface identity \eqref{normalLapse} holds with $c=1$, relying again on the assertion that $\HHH_{\nu}>0$. Thus in particular $\partial M$ has constant positive scalar curvature $\operatorname{R}^{\partial M}>0$ by \eqref{scalarMeanCurv}. Using again the above assertion that $\HHH_{\nu}>0$ and the fact that $\operatorname{R}^{\partial M}=\frac{8\pi}{\vert\partial M\vert}=\frac{2}{r_{0}^{2}}$ by the Gauss--Bonnet formula, this turns out to be equivalent to \eqref{eq:Hr} (but better adjusted to higher dimensions), and thus equivalent to our definitions. This is also addressed in the proof of \cite[Theorem 3]{Jahns} via quasi-local mass and charge quantities as in \eqref{eq:q0}, \eqref{eq:m0}, except that \cite{Jahns} uses a different sign on $q_{0}$ (which does not affect the result as we know that $\Psi$ and $-\Psi$ can be interchanged). Finally, in \cite[Proposition and Definition 5.12]{CedJaVi} by Cederbaum--Jahns--Vi\v{c}\'anek-Mart\'inez, using again that $\operatorname{R}^{\partial M}>0$ for photon spheres, the quasi-local charge $q_{0}$ as in \eqref{eq:q0} is also used, replacing the area radius $r_{0}$ by the scalar curvature radius $\overline{r}_{0}>0$ defined by $\operatorname{R}^{\partial M}=\frac{2}{\overline{r}_{0}^{2}}$ (again, in view of higher dimensions). Yet by the Gauss-Bonnet formula we know that $r_{0}=\overline{r}_{0}$ here. Assuming that $r_{0}>q_{0}$ with $q_{0}>0$ by the above, a quasi-local mass $\overline{m}_{0}>0$ is then defined as the unique solution of
\begin{align}\label{eq:photomcondition}
\frac{\vert\nu(N)\vert_{\partial M}\vert}{N_{0}}\sqrt{1-\frac{2\overline{m}_{0}}{r_{0}}+\frac{q_{0}^{2}}{r_{0}^{2}}}&=\frac{\overline{m}_{0}}{r_{0}^{2}}-\frac{q_{0}^{2}}{r_{0}^{3}}
\end{align}
(in view of addressing equipotential photon surfaces, see below). The reason why $r_{0}>q_{0}$ is assumed is similar to why we need to assume $N_{0}^{2}\geq\vert1-\Psi_{0}^{2}\vert$ or in other words $\rho_{0}\geq\mm$, see \Cref{sec:condition} and \cite[Remark 5.13]{CedJaVi}. As $\HHH_{\nu}>0$, it follows from the photon surface constraint \eqref{normalLapse} with $c=1$ that $\nu(N)\vert_{\partial M}>0$. Plugging in \eqref{eq:m0} into \eqref{eq:photomcondition} shows that $\overline{m}_{0}=m_{0}$. The parameters $q_{0}$ and $\overline{m}_{0}=m_{0}$ are then used to define sub-/super-/extremality as usual, so the definitions coincide with ours -- provided that $r_{0}>q_{0}$. As \cite{CedJaVi} only addresses uniqueness in the sub-extremal case where $r_{0}>m_{0}+\sqrt{m_{0}^{2}-q_{0}^{2}}>q_{0}$, this does not cause any restrictions.

Let us now move on to general equipotential photon surfaces, where we need to compare our notions of sub-/super-/extremality with those used by Cederbaum--Jahns--Vi\v{c}\'anek-Mart\'inez in \cite{CedJaVi}. As we have already discussed above, we will assume that the equipotential photon surface under consideration is \emph{not} a photon sphere. This will allow us to appeal to the refined analysis for equipotential photon surfaces performed in \cite[Section 5.2]{CedJaVi}. First, note that  \cite{CedJaVi} only discusses sub-/super-/extremality under the additional assumption $\HHH_{\nu}\nu(N)>0$, hence we will also assume this in our discussion. By \cite[Lemma 2.6]{CedGalElec} (see also \cite[Theorem 3.1]{GalMiao}), we know that $\HHH_{\nu}>0$, so this is equivalent to assuming $\nu(N)>0$ or in other words $N_{0}\leq1$ and $\C=\emptyset$ in our approach. Hence, this only excludes some super-extremal cases (of Reissner--Nordstr\"om and according to our definitions) and as \cite{CedJaVi} only treats uniqueness in the sub-extremal case, this does not cause any restrictions. Now, \cite[Theorem and Definition 5.16]{CedJaVi} establishes that a (one-sided) tubular neighborhood $U$ of $\partial M$ in $(M,g_{0},N,\Psi)$ must be isometric to a suitable piece of the spatial factor of a generalized Reissner--Nordstr\"om spacetime of mass $\widetilde{m}$, charge $\widetilde{q}$, parameter $\lambda\in\{\pm1,0\}$, and Einstein manifold base $(\Sigma_{*},\sigma_{*})$ with scalar curvature $\operatorname{R}^{\Sigma_{*}}=2\lambda$. Using $\Psi_{0}>0$, this means that
\begin{align}
g_{0}&=\frac{1}{N^{2}_{\lambda,\widetilde{m},\widetilde{q}}}dr^{2}+r^{2}\sigma_{*},\\
N&=\alpha N_{\lambda,\widetilde{m},\widetilde{q}},\\
\Psi&=\alpha \Psi_{\lambda,\widetilde{m},\widetilde{q}}+\beta
\end{align} 
on $U$ up to isometry, for some constants $\alpha>0$, $\beta\in\R$, where 
\begin{align}
N_{\lambda,\widetilde{m},\widetilde{q}}&=\sqrt{\lambda-\frac{2\widetilde{m}}{r}+\frac{\widetilde{q}^{2}}{r^{2}}},\\
\Psi_{\lambda,\widetilde{m},\widetilde{q}}&=\frac{\widetilde{q}}{r}.
\end{align}
By \cite[Lemma 5.20]{CedJaVi}, $\HHH_{\nu}\nu(N)>0$ excludes $\lambda=0,-1$. On the other hand, using $\lambda=1$, the uniformization theorem informs us that $(\Sigma_{*},\sigma_{*})$ is a round sphere, so that in particular the area radius $r_{0}$ coincides with the scalar curvature radius, moreover, \cite[Proposition and Definition 5.12]{CedJaVi} applies. Using this and observing that  \eqref{eq:q0} and \eqref{eq:photomcondition} are invariant under the above rescaling by $\alpha>0$ and $\beta\in\R$, one finds $\widetilde{q}=q_{0}$ and $\widetilde{m}=\overline{m}_{0}=m_{0}$. When $\nu(N)>0$, \cite[Corollary 5.17]{CedJaVi} establishes that sub-/super-/extremality is then defined in \cite{CedJaVi} as usual via $\widetilde{m}=m_{0}$ and $\widetilde{q}=q_{0}$ showing coinciding notions of sub-/super-/extremality, recalling \eqref{eq: idPsiN} and $\frac{\mm}{\qq}=\frac{\mu}{\kappa}$ by the above. 

We conclude that our notions of sub-/super-/extremality coincide with those existing in the literature (whenever those apply).


\bibliographystyle{amsplain}
\bibliography{sample.bib}

\end{document}